\documentclass[11pt]{article}
\pdfoutput=1 %for arxiv
\linespread{1.06}
\usepackage[sc]{mathpazo}

\usepackage{palatino,microtype}
\usepackage[margin=15pt,font=small,labelfont={bf}]{caption}
\usepackage[margin=1in]{geometry}
\usepackage[english]{babel}
\usepackage[utf8]{inputenc}%[utf8x]
\usepackage[compact]{titlesec}
\usepackage{cmap}
\usepackage[T1]{fontenc}
\usepackage{bm}
\pagestyle{plain}

\usepackage{booktabs}
\usepackage{mathtools}
\usepackage{amsfonts}
\usepackage{amsmath}
\usepackage{amssymb}
\usepackage{amsthm}
\usepackage{float}
\usepackage{graphics}
\usepackage[numbers]{natbib}
\usepackage{hyperref}
\usepackage[svgnames,table]{xcolor}
\hypersetup{colorlinks={true},urlcolor={blue},linkcolor={DarkBlue},citecolor=[named]{DarkGreen},linktoc=all}

\usepackage[capitalise,nameinlink,noabbrev]{cleveref}
\usepackage{doi}

\usepackage{colortbl}
\colorlet{myred}{red!25}
\colorlet{myblue}{blue!25}
\colorlet{mygreen}{green!25}

\newcommand{\BibTeX}{\rm B\kern-.05em{\sc i\kern-.025em b}\kern-.08em\TeX}

\usepackage[labelfont={normalfont,bf},textfont=it]{caption}
\usepackage{subcaption}

\usepackage{nicefrac}
\usepackage{pifont}

\usepackage{apxproof}
\usepackage{array,multirow,graphicx,bigdelim}

\input{insbox}%%%%%%%%%%%%%% TeX macro,
\makeatletter
    \@InsertBoxMargin = 4pt
\makeatother
\usepackage{tikz}
\usepackage{tikz-dimline}
\usepackage{tikz-cd}
\usetikzlibrary{fit,calc,math,shapes,shapes.multipart,decorations.text,arrows,decorations.markings,decorations.pathmorphing,shapes.geometric,positioning,decorations.pathreplacing}
\tikzset{snake it/.style={decorate, decoration=snake}}

\colorlet{mygray}{gray!40}

\makeatletter
\let\oldnl\nl% Store \nl in \oldnl
\newcommand{\nonl}{\renewcommand{\nl}{\let\nl\oldnl}}% Remove line number for one line
\makeatother

\usepackage{thmtools,thm-restate}
\usepackage[capitalise,nameinlink,noabbrev]{cleveref}

%[section]
%[section]
%[section]
\newtheorem{lemma}{Lemma}%[section]
\newtheorem{theorem}{Theorem}

\newtheorem{proposition}{Proposition}

\Crefname{claim}{Claim}{Claims}

\Crefname{claim}{Claim}{Claims}
\Crefname{corollary}{Corollary}{Corollaries}
\Crefname{definition}{Definition}{Definitions}
\Crefname{example}{Example}{Examples}
\Crefname{lemma}{Lemma}{Lemmas}
\Crefname{property}{Property}{Properties}
\Crefname{proposition}{Proposition}{Propositions}
\Crefname{remark}{Remark}{Remarks}
\Crefname{theorem}{Theorem}{Theorems}

\newcommand{\bbN}{\mathbb{N}}

\newcommand{\argmax}{\mbox{argmax}}

\newcommand{\MMS}{\textrm{\textup{MMS}}}
\newcommand{\NP}{\textrm{\textup{NP}}}
\newcommand{\NPC}{\textrm{\textup{NP-complete}}}
\newcommand{\NPc}{\textrm{\textup{NP-c}}}
\newcommand{\NPh}{\textrm{\textup{NP-h}}}
\newcommand{\NPH}{\textrm{\textup{NP-hard}}}
\newcommand{\EF}[1]{\ifstrempty{#1}{\textrm{\textup{EF}}}
{\textrm{\textup{EF{$#1$}}}}}

\newcommand{\RainbowColoring}{\textsc{Rainbow Coloring}}
\newcommand{\Partition}{\textsc{Partition}}
\newcommand{\pCompletion}{\textsc{$p$-Completion}}
\newcommand{\EFoneCompletion}{\textsc{\EF{1}-Completion}}

\newcommand{\Prop}[1]{\ifstrempty{#1}{\textrm{\textup{Prop}}}{\textrm{\textup{Prop{$#1$}}}}}

\newcommand{\ProponeCompletion}{\textsc{\Prop{1}-Completion}}
\newcommand{\MMSCompletion}{\textsc{\MMS-Completion}}
\newcommand{\POCompletion}{\textsc{\PO{}-Completion}}
\newcommand{\EFonePOCompletion}{\textsc{\EF{1}+\PO{}-Completion}}
\newcommand{\ProponePOCompletion}{\textsc{\Prop{1}+\PO{}-Completion}}
\newcommand{\MMSPOCompletion}{\textsc{\MMS{}+\PO{}-Completion}}

\newcommand{\PO}{\textup{PO}}
\newcommand{\Polytime}{\textup{Poly}}
\newcommand{\PseudoPolytime}{\textup{Pseudopoly}}
\newcommand{\V}{\mathcal{V}}
\renewcommand{\>}{\succ}

\usepackage{algorithm}
\usepackage{algpseudocode}

\theoremstyle{remark}

\let\displaystyle\textstyle

\title{Fair and Efficient Completion of Indivisible Goods}

\author{
	\begin{tabular}{m{0.12\textwidth}m{0.12\textwidth}m{0.12\textwidth}m{0.12\textwidth}%m{0.12\textwidth}m{0.12\textwidth}
 }
		\multicolumn{2}{c}{\textbf{Vishwa Prakash HV}} & \multicolumn{2}{c}{\textbf{Ayumi Igarashi}} %& \multicolumn{2}{c}{Rohit Vaish}
        \\
		\multicolumn{2}{c}{Chennai Mathematical Institute} & \multicolumn{2}{c}{The University of Tokyo} %& \multicolumn{2}{c}{Indian Institute of Technology Delhi}
        \\ 
		\multicolumn{2}{c}{\href{mailto:vishwa@cmi.ac.in}{\small{\texttt{vishwa@cmi.ac.in}}}} & \multicolumn{2}{c}{\href{mailto:igarashi@mist.i.u-tokyo.ac.jp}{\small{\texttt{igarashi@mist.i.u-tokyo.ac.jp}}}}
		%& \multicolumn{2}{c}{\href{mailto: rvaish@iitd.ac.in}{\small{\texttt{ rvaish@iitd.ac.in}}}}
        \\
        &&&\\
		\multicolumn{4}{c}{\textbf{Rohit Vaish}}\\
            \multicolumn{4}{c}{Indian Institute of Technology Delhi}\\
            \multicolumn{4}{c}{\href{mailto: rvaish@iitd.ac.in}{\small{\texttt{ rvaish@iitd.ac.in}}}}
	\end{tabular}
}

\date{}

\begin{document}

\maketitle 
\begin{abstract}
    We formulate the problem of \emph{fair and efficient completion} of indivisible goods, defined as follows: Given a partial allocation of indivisible goods among agents, does there exist an allocation of the remaining goods (i.e., a completion) that satisfies fairness and economic efficiency guarantees of interest? We study the computational complexity of the completion problem for prominent fairness and efficiency notions such as envy-freeness up to one good (\EF{1}), proportionality up to one good (\Prop{1}), maximin share (\MMS{}), and Pareto optimality (\PO{}), and focus on the class of additive valuations as well as its subclasses such as \emph{binary additive} and \emph{lexicographic} valuations. We find that while the completion problem is significantly harder than the \emph{standard} fair division problem (wherein the initial partial allocation is empty), the consideration of restricted preferences facilitates positive algorithmic results for threshold-based fairness notions~(\Prop{1} and \MMS{}). On the other hand, the completion problem remains computationally intractable for envy-based notions such as \EF{1} and \EF{1}+\PO{} even under restricted preferences.
\end{abstract}

\section{Introduction}\label{sec:introduction}
The problem of fair and efficient allocation of resources, particularly discrete (or indivisible) resources, is fundamental to economics and computer science. Such problems arise in a multitude of applications such as course allocation at universities~\citep{BCK+17course}, inheritance division~\citep{GP15spliddit}, dividing household chores~\citep{IY23kajibuntan}, and dispute resolution~\citep{BT96fair}. There is now a well-developed theory on discrete fair division spanning various notions of fairness and algorithmic results~\citep{AAB+23fair}.

The existing theory of fair allocation of indivisible resources assumes that all resources are initially \emph{unassigned}. While this may be a natural assumption, it is not always applicable in practical scenarios where resource allocation is predetermined or specific agents are designated to handle particular tasks. For instance, in a heterosexual couple, certain household tasks like breastfeeding can only be assigned to a specific individual. In course allocation, certain courses may be mandatory for a particular group of students. Similarly, 
in dispute resolution, legal agreements may specify allocating specific assets to each party. 
In such settings, it is more natural to assume that specific resources have been \emph{pre-assigned} to particular agents. We refer to such resources as \emph{frozen}. The goal is to find a fair and efficient \emph{completion} of the frozen allocation; that is, to allocate the unassigned resources such that the overall allocation is fair and efficient.

Frozen resources present an interesting challenge, as illustrated by the following example. Consider a scenario where two agents are each pre-allocated one good, and there is one unallocated precious good. If these agents envy each other under the frozen allocation, then no completion can satisfy envy-freeness up to one good (\EF1{}). An allocation is \EF{1} if any pairwise envy can be eliminated by removing some good from the envied bundle~\citep{B11combinatorial}. In the absence of frozen goods, an \EF{1} allocation always exists in the standard setting of fair division~\citep{LMM+04approximately,CKM+19unreasonable}. However, as the above example demonstrates, an \EF{1} \emph{completion} may not exist with frozen resources. Therefore, studying the existence and computation of fair and efficient completion of indivisible resources is crucial.

Another challenge posed by the completion problem is in the \emph{formulation} of solution concepts. Specifically, let us consider the maximin share (\MMS{}) criterion~\citep{B11combinatorial}, where fairness is defined based on a utility threshold for each agent. This threshold is defined in terms of an $n$-person cut-and-choose game, where the cutter is the last to pick a bundle. When some resources are preassigned, there can be multiple ways of extending this definition. For example, the cutter may only consider the extension of its own frozen bundle, or it may seek to maximize the value of the least-valued bundle overall. These choices lead to different existential and computational results. Thus, adapting existing solution concepts to the completion setting is non-trivial.

\renewcommand{\arraystretch}{1.1}
\begin{table*}%[t]
\centering

\scriptsize
\footnotesize
%\small
\begin{tabular}{|c|c|c|c|c|c|c|}
 \hline
 \multirow{3}{*}{Problem $\downarrow$} & \multicolumn{2}{c|}{\textbf{Binary Additive}} & \multicolumn{2}{c|}{\textbf{Lexicographic}} & \multicolumn{2}{c|}{\textbf{General Additive}}\\
 \cline{2-7}
 &Standard & Completion & Standard & Completion & Standard & Completion\\
 &Problem & Problem & Problem & Problem & Problem & Problem\\
 \hline
 %\rowcolor{myred}
 {\EF{1}} & {\Polytime{}\(^\dag\)} & \cellcolor{myred} \NPc{}~(Th.~\ref{thm:EFoneBinaryAdditiveNPhard}) 
 &{\Polytime{}\(^\dag\)} & \cellcolor{myred} \NPc{}~(Th.~\ref{thm:EF1-Lexoicographic-NPc}) 
 & {\Polytime{}\(^\dag\)} & \cellcolor{myred} \NPc{}~(Th.~\ref{thm:EFoneTwoAgentAdditiveNPhard},\ref{thm:EFoneThreeAgentIdenticalAdditiveNPhard})\\
 \hline
 {\Prop{1}}  & {\Polytime{}\(^\ddag\)} & \cellcolor{mygreen}\Polytime{}~(Th.~\ref{thm:MMS:PROP1:binary}) 
 & {\Polytime{}\(^\ddag\)} & \cellcolor{mygreen} \Polytime{}~(Th.~\ref{thm:PROP1-PO-Lex-Poly})
 &
 {\Polytime{}\(^\ddag\)} & \cellcolor{myred} \NPc{}~(Th.~\ref{thm:EFoneTwoAgentAdditiveNPhard},\ref{thm:EFoneThreeAgentIdenticalAdditiveNPhard})\\
 \hline
 {\MMS{}} &  {\Polytime{}\(^\S\)}  & \cellcolor{mygreen}  \Polytime{}~(Th.~\ref{thm:MMS:PROP1:binary}) 
 & {\Polytime{}\(^\diamondsuit\)}   & \cellcolor{mygreen} {\Polytime{}}~(Th.~\ref{thm:decideMMS:lexicographic}) 
 & ${\Sigma^2_p}^\S$ & \cellcolor{myred} \NPh{}~(Th.~\ref{thm:MMSTwoAgentAdditiveNPhard})
 %\cellcolor{myred} 
 \\
 %
 %& & - &   & \cellcolor{mygreen} \multirow{-2}{*}{\Polytime{}~(Th.~\ref{thm:MMS:PROP1:binary})}  & \RV{??} & --%\cellcolor{myred}\multirow{-2}{*}{\NPh{}~(Th.XX)}
 %
 \hline
 {\EF{1}+\PO{}} 
 & {\Polytime{}\(^\star\)} & \cellcolor{myred} \NPc{}~(Cor.~\ref{cor:EFonePOBinaryAdditiveNPhard}) 
 &{\Polytime{}\(^\diamondsuit\)}
 &?
 & {\PseudoPolytime{}\(^\triangle\)} & \cellcolor{myred}\NPh{}~(Cor.~\ref{cor:EFonePOThreeAgentAdditiveNPhard})\\
 %
%& & \RV{??} & & \cellcolor{myred} \multirow{-2}{*}{\NPc{}~(Th.~\ref{thm:EFoneBinaryAdditiveNPhard})} & & \cellcolor{myred} \multirow{-2}{*}{\NPh{}~(Cor.~\ref{cor:EFonePOThreeAgentAdditiveNPhard})}\\
 %
 \hline
 {\Prop{1}+\PO{}} 
 & {\Polytime{}\(^\P\)} & \cellcolor{mygreen}\Polytime{}~(Cor.~\ref{thm:PO:MMS:PROP1:binary}) 
 &  {\Polytime{}\(^\P\)} & \cellcolor{mygreen}\Polytime{}~(Th.~\ref{thm:PROP1-PO-Lex-Poly}) 
 & {\Polytime{}\(^\P\)} & \cellcolor{myred} \NPh{}~(Cor.~\ref{cor:EFonePOThreeAgentAdditiveNPhard})\\
 %
 %&  & \cellcolor{mygreen}\multirow{-2}{*}{\Polytime{}~(Th.~\ref{thm:PROP1-PO-Lex-Poly})} & & \cellcolor{mygreen} \multirow{-2}{*}{\Polytime{}~(Cor.~\ref{thm:PO:MMS:PROP1:binary})} &  & \cellcolor{myred}\multirow{-2}{*}{\NPh{}~(Cor.~\ref{cor:PROPonePOThreeAgentIdenticalAdditiveNPhard})}\\
 %
 \hline
 {\MMS{}+\PO{}} 
 & {\Polytime{}\(^\S\)}  & \cellcolor{mygreen} \Polytime{}~(Cor.~\ref{thm:PO:MMS:PROP1:binary}) 
 & {\Polytime{}\(^\diamondsuit\)} 
 & ? %\cellcolor{mygreen} \Polytime{}~(Cor.~\ref{thm:PO:MMS:lexicographic}) 
 & ? & ?\\
 %
 %&  & %\cellcolor{mygreen}\multirow{-2}{*}{\Polytime{}~(Th.~\ref{thm:MMS:lexicographic})} 
 %&  & \cellcolor{mygreen} \multirow{-2}{*}{\Polytime{}~(Cor.~\ref{thm:PO:MMS:PROP1:binary})}  & \RV{??} & --\\
 %
 \hline
\end{tabular}

\vspace{0.1in}
\caption{Summary of our results for indivisible goods. Each row corresponds to a fairness/efficiency property. 
%along with the assumption on whether the frozen allocation is \RV{necessarily?} Pareto optimal. 
Each column corresponds to a preference restriction and whether we are looking at the standard or the completion problem; `standard' is a special case of completion problem when there are no frozen goods. Each cell shows the computational complexity of the standard or completion problem under specific scenarios. Entries marked with \(\dag\) is due to \cite{CKM+19unreasonable}, \(\ddag\) is due to \cite{CFS17fair}, \(\S\) is due to \cite{BL16characterizing}, \(\P\) is due to \cite{AMS20polynomial}, \(\diamondsuit\) due to \cite{HSV+21fair}, \(\star\) due to \cite{BKV18greedy} and \(\triangle\) is due to \cite{BKV18finding}}
\label{tab:Results}

\end{table*}

\paragraph{Our Contributions.}
The main conceptual contribution of our work is the formulation of the model of \emph{fair and efficient completion} of indivisible goods. In our model, the assignment of certain resources is \emph{fixed}, while the remaining resources can be flexibly assigned.
We study the computational complexity of the completion problem for various notions of \emph{fairness}, including envy-freeness up to one good (\EF{1}), proportionality up to one good (\Prop{1}), and maximin share (\MMS{}), as well as their combinations with \emph{economic efficiency} (specifically, Pareto optimality). Our results, summarized in \Cref{tab:Results}, span the well-studied class of \emph{additive} valuations as well as its subclasses such as \emph{binary additive} and \emph{lexicographic} valuations. Some of the technical highlights of our work are:
\begin{itemize}
    \item 
    For \emph{binary additive} valuations, we show that achieving an \EF{1} completion is computationally intractable even with the restrictiveness of preferences. This highlights a stark contrast with the standard setting, where an \EF{1} allocation can be computed efficiently~\cite{LMM+04approximately,CFS17fair}. However, for \Prop{1} and \MMS{} notions (and their combinations with Pareto optimality), we demonstrate that the completion problem admits efficient algorithms. Moreover, we establish that if the initial allocation (i.e., allocation of frozen goods) is Pareto optimal, the existence of \MMS{}+\PO{} is guaranteed. 
    \item For \emph{lexicographic} valuations, we again encounter hardness for \EF{1}. This class of valuations is orthogonal to the class of binary additive valuations; here, each agent's preference over bundles is determined by the most preferred good in the set difference of the bundles. Nevertheless, we show that the completion problem for \Prop{1}, \Prop{1}+\PO{}, and \MMS{} is tractable, highlighting the difference between threshold-based notions and envy-based fairness notions. The other completion problems (\MMS{}+\PO{} and \EF{1}+\PO{}) seem challenging, and we leave them as open problems.

    \item For general \emph{additive} valuations, we show that the completion problem for \EF{1} and \Prop{1} (and their combinations with Pareto optimality) is \NPH{}. This hardness result holds even for two agents with additive valuations and for three agents with identical additive valuations. We also establish the hardness of deciding the existence of an \MMS{} allocation, a problem whose lower bound on complexity is open in the standard setting. Moreover, we demonstrate that $\alpha$-\MMS{} may not exist for every $\alpha>0$ even when the initial allocation is Pareto optimal. 
\end{itemize}
Notably, our hardness results for general additive and binary additive valuations hold even when the initial allocation is Pareto optimal.

\paragraph{Related Work.}
The idea of ``completion'' has been studied in computational social choice~\citep{BCE+16handbook}, although under different motivation and technical assumptions than ours. In the voting literature, the notion of \emph{possible} and \emph{necessary winners}~\citep{KL05voting,XC11determining} models completion of incomplete preferences (as opposed to allocations). In the fair division literature, the notions of \emph{possible} and \emph{necessary fairness} consider agents with ordinal preferences over the goods and examine all realizations of cardinal valuations that are consistent with these preferences~\citep{ASW23computational,ALX+23possible}.

Our approach of assuming frozen assignments is similar to prior work by Dias et al.~\cite{DDD+03stable} on fixed and forbidden pairs in the stable matching problem. Studying forbidden assignments in fair division is a relevant direction for future work.

We study binary additive and lexicographic valuations, which are subclasses of additive valuations. These subclasses have been extensively studied in the fair division literature and have facilitated several positive existential and algorithmic results for various fairness and efficiency notions~\citep{BarmanKrishna2022IJCAI,barman2021truthful,benabbou2020,halpern2020fair,babaioff2020fair,HSV+20fairAAAI,HSV+21fair,HSVL23,HAW23IJCAI}. For example, a maximin share (\MMS{}) allocation always exists and can be efficiently computed under binary additive and lexicographic valuations while it might fail to exist for additive valuations~\citep{KP18fair}.

Our work also connects with the literature on fair division with \emph{subsidy}~\citep{HS19fair}, which compensates agents with a divisible resource (e.g., money) to eliminate envy. The unallocated indivisible goods in our model play a similar role as subsidy, but our focus is on achieving approximate---rather than exact---envy-freeness alongside Pareto optimality.

\section{Preliminaries}\label{sec:preliminaries}
For any natural number $s \in \bbN$, let $[s] \coloneqq \{1,2,\ldots,s\}$.

\paragraph{Problem instance.}
An instance of the \emph{fair completion} problem is given by the tuple $\langle N, M, \V, F \rangle$ where $N=[n]$ denotes a set of $n$ {\em agents}, $M=[m]$ denotes a set of $m$ indivisible goods, $\V = (v_1,v_2,\dots,v_n)$ denotes a \emph{valuation profile}, and $F = (F_1,F_2,\dots,F_n)$ denotes the partial allocation of \emph{frozen} goods (explained below).

\paragraph{Valuation function.}
The valuation profile $\V$ consists of the \emph{valuation functions} $v_1,\dots,v_n$, where $v_i: 2^M \rightarrow \mathbb{R}_+$ specifies the value that agent $i$ has for any subset of goods; here, $\mathbb{R}_+$ is the set of non-negative reals. Each agent values the empty set at $0$, i.e., for every $i \in N$, $v_i(\emptyset)=0$. We will assume that valuations are \emph{additive}, which means that the value of any subset of goods is equal to the sum of values of the individual goods in that set. That is, for any $i\in N$ and $S \subseteq M$, $v_i(S) \coloneqq \sum_{g\in S} v_i(\{g\})$. For simplicity, we will write $v_i(g)$ to denote $v_i(\{g\})$.

We will focus on two subclasses of additive valuations: \emph{binary additive} and \emph{lexicographic}.

\emph{Binary additive valuations}: We say that agents have binary additive valuations if for every agent $i \in N$ and every good $g \in M$, we have $v_{i}(g) \in \{0,1\}$. An agent $i$ \emph{approves} a good $g$ whenever $v_{i}(g)=1$.

\emph{Lexicographic valuations:} We say that agents have lexicographic valuations if each agent prefers any bundle containing its favorite good over any bundle that doesn't, subject to that, it prefers any bundle containing its second-favorite good over any bundle that doesn't, and so on. Formally, each agent $i$ is associated with a strict and complete ranking $\>_i$ over the goods in $M$. Given any two distinct bundles $X$ and $Y$, agent $i$ prefers $X$ over $Y$ if and only if there exists a good $g \in X \setminus Y$ such that any good in $Y$ that is more preferable than $g$ is also contained in $X$, i.e., $\exists g \in X \setminus Y$ such that $\{g' \in Y: g' \>_i g\} \subseteq X$. Note that the lexicographic preference extension induces a total order over the bundles in $2^M$. Also, 
lexicographic preferences can be specified using only an \emph{ordinal} ranking over the goods~\citep{HSV+21fair}.

\paragraph{Allocation.}
A partial \emph{allocation} of the goods in $M$ is an ordered subpartition $A=(A_1,\ldots, A_n)$ of $M$, where, for all distinct $i,j \in N$, $A_i \cap A_j = \emptyset$ and $\cup_{i \in N} A_i \subseteq M$. The subset $A_i$ is called the \emph{bundle} of agent~$i$. An allocation is \emph{complete} if all goods are allocated, i.e., $\cup_{i\in N} A_i=M$. For simplicity, we will use the term `allocation' to refer to a complete allocation and write `partial allocation' otherwise.

\paragraph{Frozen allocation and completion.}
A \emph{frozen} allocation $F = (F_1,\dots,F_n)$ is a partial allocation of $M$ that denotes the subset of goods that are pre-assigned to the agents; that is, the goods in $F_i$ are irrevocably allocated to agent $i$ to begin with. We will denote by $U \coloneqq M \setminus \cup_{i \in N} F_i$ the set of \emph{unallocated} goods, i.e., the goods that are not pre-assigned. A \emph{completion} $C = (C_1,C_2,\dots,C_n)$ refers to an allocation of the unallocated goods, i.e., $\cup_{i \in N} C_i = U$ and $C_i \cap C_j = \emptyset$ for all distinct $i,j \in N$. Define an $n$-tuple $A \coloneqq (A_1,\dots,A_n)$ such that for every $i \in N$, $A_i \coloneqq F_i \cup C_i$. Notice that $A$ is a complete allocation of the goods in $M$. We will say that allocation $A$ \emph{completes} the frozen allocation $F$.

\paragraph{Fairness notions.}
A partial allocation $A = (A_1,\dots,A_n)$ is said to be \emph{envy-free} (\EF{}) if for every pair of agents $i,j \in N$, we have $v_i(A_i) \geq v_i(A_j)$~\citep{GS58puzzle,F67resource}, and \emph{envy-free up to one good} (\EF{1}) if for every pair of agents $i,j \in N$ such that $A_j \neq \emptyset$, there exists some good $g \in A_j$ such that $v_i(A_i) \geq v_i(A_j \setminus \{g\})$~\citep{B11combinatorial,LMM+04approximately}. We say that agent $i$ has \emph{\EF{1}-envy} towards agent $j$ if for every good $g \in A_j$, we have $v_i(A_i) < v_i(A_j \setminus \{g\})$.

A partial allocation $A$ is \emph{proportional} (\Prop{}) if for each agent $i \in N$, we have $v_i(A_i) \geq \frac{1}{n} \cdot v_i(M)$~\citep{S48problem}, and \emph{proportional up to one good} (\Prop{1}) if for each agent $i \in N$, there exists a good $g \in \cup_{k \neq i} A_k$ such that $v_i(A_i \cup \{g\}) \geq \frac{1}{n} \cdot v_i(M)$~\citep{CFS17fair}.

A partial allocation $A$ is called \emph{maximin fair} if every agent receives a bundle of value at least its \emph{maximin share} (\MMS{}) value. To define the \MMS{} value of agent $i$, consider any frozen allocation $F$. 
The \emph{\MMS{} partition} of agent $i$ is an allocation, say $A$, that completes $F$ such that the value of the least-valued bundle in $A$ is maximized (according to agent $i$). Observe that agent $i$ evaluates \emph{all} bundles of $A$ under this definition, not just the one that contains $F_i$. The value of the least-valued bundle under the \MMS{} partition is called the \MMS{} value of agent $i$. Formally, the \MMS{} value $\mu_i$ of agent $i$ is defined as
\[
\mu_i \coloneqq  \max_{(C_1,\dots,C_n) \in \Pi_U}  \min_{j \in [n]} v_i(F_j \cup C_j),
\]
where $\Pi_U$ denotes the set of all ordered $n$-partitions of the set of unallocated goods $U$ (the individual bundles in the partition can be empty). 
An allocation $A$ satisfies the \emph{maximin fair share} criterion (also denoted by \MMS{}) if each agent's utility under $A$ is at least its \MMS{} value, i.e., for each agent $i$, $v_i(A_i) \geq \mu_i$. Note that the frozen goods in the bundle $F_j$ for $j \neq i$ are not \emph{actually} assigned to agent $i$; the definition of \MMS{} value $\mu_i$ considers a thought experiment from agent $i$'s perspective wherein it wants to maximize the value of the least-valued bundle by adding only the unallocated goods in $U$. For any given $\alpha \geq 0$, we say that partial allocation $A$ is $\alpha$-\emph{maximin fair} ($\alpha$-\MMS{}) if every agent receives a bundle of value at least $\alpha$ times its \MMS{} value, i.e., for each agent $i$, $v_i(A_i) \geq \alpha \cdot \mu_i$.

\paragraph{Pareto optimality.}
A partial allocation $X$ is said to Pareto dominate another partial allocation $Y$ if for every agent $i$, $v_i(X_i) \geq v_i(Y_i)$ and for some agent $k$, $v_k(X_k) > v_k(Y_k)$. A partial allocation is said to be \emph{Pareto optimal} (\PO{}) if no other partial allocation Pareto dominates it. Note that we allow for the (hypothetical) redistribution of the frozen goods when considering Pareto domination.\footnote{One could consider a weaker form of Pareto optimality by defining Pareto domination only by allocations that complete the given frozen allocation. For additive valuations, this notion is equivalent to the Pareto optimality with respect to only the unallocated goods. Our algorithmic and hardness results continue to hold for this notion.
}

We will also be interested in Pareto optimality of only the \emph{frozen} allocation. A frozen allocation $F = (F_1,\dots,F_n)$ is said to be Pareto optimal (with respect to the frozen goods in $\cup_{i \in N} F_i$) if there is no other allocation of the frozen goods that Pareto dominates $F$.

\paragraph{\pCompletion{}.} Let $p$ denote any fairness property (e.g., \EF{1}, \Prop{1}, \MMS{}) or a combination of fairness and efficiency properties (e.g., \EF{1}+\PO{}). The computational problem \pCompletion{} takes as input an instance $\langle N, M, \V, F \rangle$ of the completion problem and asks whether there exists a completion $C$ of the frozen allocation $F$ such that the final allocation $F \cup C$ satisfies the property $p$.

The omitted proofs can be found in Appendix.
\section{Binary Additive Valuations}\label{sec:binary}
We will start the discussion of our technical results with a restricted subclass of additive valuations called \emph{binary additive} valuations wherein, for each agent $i \in N$ and each good $g \in M$, we have $v_{i}(g) \in \{0,1\}$.

\paragraph{Envy-Freeness Up to One Good.}
In the standard setting of fair division (i.e., without any frozen goods), an \EF1{} allocation always exists and can be efficiently computed by a simple round-robin algorithm~\citep{CKM+19unreasonable}. However, as discussed in the Introduction, an \EF1{} allocation could fail to exist when some goods are pre-allocated. 
Our first main result in this section is that \EFoneCompletion{} is computationally intractable even when the agents have binary additive valuations and even when the frozen allocation is Pareto optimal. Note that, for binary additive valuations, an allocation is Pareto optimal if and only if each good is allocated to an agent who approves it. 

\begin{restatable}{theorem}{EFoneBinaryAdditiveNPhard}
For binary additive valuations, \EFoneCompletion{} is \NPC{} even when the frozen allocation is Pareto optimal.
\label{thm:EFoneBinaryAdditiveNPhard}
\end{restatable}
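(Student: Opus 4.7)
Membership of \EFoneCompletion{} in \NP{} is immediate: given a guessed completion $C$, one verifies in polynomial time that $A = F \cup C$ is \EF{1} by checking, for every ordered pair $(i,j)$ with $A_j \neq \emptyset$, that $v_i(A_i) \geq v_i(A_j) - 1$ (in the binary-additive case, removing any approved good from $A_j$ decreases $v_i(A_j)$ by exactly one). The entire effort thus goes into proving \NPH{}ness. My plan is to reduce from a hypergraph constraint-satisfaction problem whose combinatorics directly match binary-additive \EF{1} envy on ``shared'' blocks of $\{0,1\}$-valued goods. A natural candidate is the hypergraph \RainbowColoring{} problem (or a restricted variant such as monotone not-all-equal $3$-SAT): given a $k$-uniform hypergraph $H = (V, E)$, the task is to color $V$ with $k$ colors so that every hyperedge sees all $k$ colors.

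The intended construction introduces one \emph{color agent} $c_\ell$ for each color $\ell \in [k]$ and, for every vertex $v \in V$, an unallocated good $g_v$ approved by every color agent; the color agent that ultimately owns $g_v$ will encode $v$'s color. For every hyperedge $e \in E$, I would further introduce a \emph{constraint agent} $h_e$ that approves exactly $\{g_v : v \in e\}$ and comes equipped with a carefully sized frozen bundle of ``private'' goods (goods approved only by $h_e$). The sizes will be chosen so that, provided the constraint agents take no vertex-goods in the completion, the \EF{1} inequality $v_{h_e}(A_{h_e}) \geq v_{h_e}(A_{c_\ell}) - 1$ forces each color agent to receive at most one vertex-good of $e$; combined with $|e| = k$ and pigeonhole across the $k$ color agents, this forces exactly one vertex-good of $e$ per color---i.e., a rainbow coloring of $e$. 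Pareto optimality of $F$ is preserved almost for free: because every frozen good is private to a single agent, any reassignment strictly decreases someone's value without helping anyone else, so $F$ cannot be Pareto-dominated.

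Correctness will then split into two directions. Starting from a valid rainbow coloring of $H$, set $C_{c_\ell} := \{g_v : v \text{ has color } \ell\}$ and check the \EF{1} inequalities using the symmetric, balanced structure of the construction; conversely, any \EF{1} completion induces a coloring by recording the recipient of each $g_v$, and the constraint-agent inequalities certify the rainbow property on every hyperedge. The main obstacle will be ruling out \emph{shortcut} completions in which some $h_e$ absorbs vertex-goods itself, thereby inflating $v_{h_e}(A_{h_e})$ and trivializing its own \EF{1} constraint. I expect to handle this by introducing a twin/guard agent for each $h_e$ whose \EF{1} envy is triggered precisely when $h_e$ receives a vertex-good, and then coordinating these guards globally so that (a) together they block all such shortcuts, and (b) the frozen allocation remains Pareto optimal (which will force the extra ``twin'' frozen goods to also be private). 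Calibrating the bundle sizes so that the combined set of \EF{1} inequalities is feasible precisely on yes-instances of \RainbowColoring{} is where most of the bookkeeping will lie.
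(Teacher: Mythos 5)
Your high-level plan (color agents own vertex goods, constraint agents with empty approved value enforce the coloring constraint through \EF{1}-envy) is the right shape, but the proposal has a genuine gap at exactly the point you defer to ``bookkeeping,'' and the fix you sketch cannot work. The obstacle is that \EF{1} is structurally incapable of forbidding an agent from receiving a \emph{single} approved good: whatever a guard agent approves in $h_e$'s bundle, \EF{1} always permits the removal of one good, so no twin/guard construction can block $h_e$ from absorbing one vertex good of $e$. Once $h_e$ holds one approved good, its own envy constraint relaxes to ``at most two vertex goods of $e$ per color agent,'' and the rainbow forcing collapses. (Relatedly, your ``carefully sized'' private bundles for $h_e$ must in fact have size zero: if $h_e$ approves $p\geq 1$ of its own frozen goods, its \EF{1} constraint only caps other agents at $p+1$ goods of $e$, never at $1$.) The only way to keep vertex goods off the constraint agents is a global counting argument that makes the color agents collectively \emph{need} all $|V|$ vertex goods --- and that requires knowing how many each color agent needs, i.e., balanced color classes. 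This exposes a second, unacknowledged gap in your forward direction: a rainbow coloring places no constraint on class sizes, so two color agents with classes of sizes differing by more than one would \EF{1}-envy each other (their private padding does not help, since each approves all vertex goods but not the other's private goods in a way that restores the inequality $|{\rm class}\,\ell| \geq |{\rm class}\,\ell'| - 1$ --- actually padding one's \emph{own} bundle does repair this particular inequality, but you never introduce it, and it does nothing for the counting problem above).

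The paper resolves both issues simultaneously by reducing from \textsc{Equitable Coloring} of graphs rather than \RainbowColoring{} of hypergraphs: a single special agent $s$ is frozen with $t+1$ dummy goods that every color agent approves, so each of the $k$ color agents must grab at least $t$ vertex goods to avoid \EF{1}-envy toward $s$; since there are exactly $n=kt$ vertex goods, all of them land on color agents in classes of size exactly $t$, which is precisely equitability, and the edge agents (empty bundles, approving only their two endpoints) enforce properness. The counting leverage supplied by the equitability requirement is what makes the binary-additive case go through; your instinct toward \RainbowColoring{} is the one the paper uses for the \emph{lexicographic} case (\Cref{thm:EF1-Lexoicographic-NPc}), where the identity rather than the number of goods matters. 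To repair your proof you would either need to switch source problems as the paper does, or prove hardness of a balanced variant of \RainbowColoring{} and add the analogue of the special agent.
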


Intuitively, the construction in \Cref{thm:EFoneBinaryAdditiveNPhard} creates ``enough'' envy in the frozen allocation such that compensating for it amounts to solving a hard problem. Observe that binary valuations significantly limit the design of the reduced instance, as the agents can only assign a value of 0 or 1 to the goods. Because of this limitation, it is not immediately clear whether reductions similar to those used for general additive valuations (such as from \Partition{} problem) can be applied here. Our proof circumvents this challenge by reducing from the {\sc Equitable Coloring} problem.

Formally, given an undirected graph $G = (V,E)$, a \emph{proper coloring} is an assignment $\pi \colon V \rightarrow [k]$ of colors to the vertices of $G$ where no two adjacent vertices have the same color. Such a coloring is called \emph{equitable} if the number of vertices in any two color classes are the same. Given an undirected graph $G$ and a positive integer $k$, {\sc Equitable Coloring} asks whether the graph $G$ has an equitable coloring with a given number $k$ of colors. This problem was used by~Hosseini et al.~\cite[Section 7.1]{HSV+19fairarXiv} to show that, in the standard fair division setting, checking the existence of an envy-free allocation is \NPC{} under binary valuations.

\begin{proof}[Proof of \Cref{thm:EFoneBinaryAdditiveNPhard}]
Checking whether a given completion is \EF{1} is clearly in \NP{}. We present a reduction from {\sc Equitable Coloring} to show \NPH{}ness.

Consider an instance $(G,k)$ of {\sc Equitable Coloring}, where $G=(V,E)$ is an undirected graph and $k$ is a positive integer. Assume, without loss of generality, that there are $n=k \cdot t$ vertices for some positive integer $t$. We construct an instance of our problem as follows. 

For each edge $e=\{u,v\} \in E$, we create the edge agent~$e$. 
Additionally, we create color agents $c_1,c_2,\ldots,c_k$ and  
a special agent $s$. 
For each vertex $v \in V$, we create a vertex good $v$. 
Additionally, there are $(t+1)$ dummy goods $d_1,\ldots,d_{t+1}$. 
Each color agent $c_i$ approves all the vertex goods and the $t+1$ dummy goods $d_1,\ldots,d_{t+1}$. 
Each edge agent $e$ approves the goods $u,v$ corresponding to the vertices incident to $e$ only. 
The special agent $s$ approves all the dummy goods $d_1,\ldots,d_{t+1}$ only.
In a frozen allocation $F$, the special agent $s$ receives the dummy goods $d_1,\ldots,d_{t+1}$ while the other agents receive nothing. Observe that $F$ is Pareto optimal.

To show the equivalence of the solutions, let us first suppose that there exists an \EF{1} allocation $A$ that completes $F$. We will show that there exists an equitable coloring $\pi \colon V \rightarrow [k]$. 
Now create an assignment $\pi \colon V  \rightarrow [k]$ where $\pi(v)$ is the color class $i$ such that $v \in A_{c_i}$. Observe that each color agent $c_i$ needs to receive at least $t$ vertex goods under the allocation $A$ since otherwise, it would have \EF{1}-envy towards the special agent $s$. 
Since there are $n=k \cdot t $ vertices, all vertex goods are allocated to color agents. Thus, $|\pi^{-1}(i)|=t$ for each $i \in [k]$. Further, no pair of vertices incident to each edge $e$ is allocated to $c_i$; otherwise agent $e$ would have \EF{1}-envy towards $c_i$ since $e$ receives none of the goods under $A$. Thus, $\pi$ is a proper coloring. We thus conclude that $\pi$ is an equitable coloring.  

Conversely, suppose that $G$ admits an equitable coloring $\pi \colon V \rightarrow [k]$. Note that since $\pi$ is equitable, $|\pi^{-1}(i)| =t$ for each $i \in [k]$. Then construct an allocation $A$ such that agent $s$ receives the dummy goods $d_1,\ldots,d_{t+1}$, each color agent $c_i$ receives exactly $t$ vertex agents $v$ from $\pi^{-1}(i)$, and each edge agent $e$ receives nothing. 

This allocation $A$ completes $F$. We claim that $A$ is \EF{1}. 
Indeed, it is easy to see that the special agent does not envy other agents. Each color agent obtains a value of at least $t$ and hence does not have \EF{1}-envy towards the special agent $s$. Each color agent envies none of the other agents. 

Now consider any edge agent $e$. We know that since $\pi$ is a proper coloring, no pair of vertices incident to $e$ is allocated to the same color agent $c_i$. Thus, from $e$'s perspective, the valuation of every other agent's bundle is at most $1$. This implies that $A$ is \EF{1}, as desired. 
\end{proof}

To obtain the stronger property combination of \EF{1} and \PO{}, a natural approach is to consider a completion that maximizes \emph{Nash welfare}~\citep{CKM+19unreasonable}. More precisely, a completion $C=(C_1,\ldots,C_n)$ is a maximum Nash welfare (MNW) completion if it maximizes the number of agents, say $n$, receiving a positive utility, and subject to that, the geometric mean of their utilities, i.e., $\argmax_C \left( \prod_{i \in N: v_i(F_i \cup C_i)>0}v_i(F_i \cup C_i) \right)^{1/n}$. 

If there is no frozen good, then a maximum Nash welfare allocation is \EF{1} and \PO{}. For binary additive valuations, a maximum Nash welfare allocation can be computed in polynomial time~\citep{DS15maximizing,BKV18greedy}. %
By contrast, a completion that maximizes Nash welfare could fail to be \EF{1} even when an \EF{1}+\PO{} completion exists; see Proposition~\ref{prop:MNW} in Appendix~\ref{appendix:binary}.
Furthermore, the \EFonePOCompletion{} problem turns out to be \NPC{}. This follows from the construction of Theorem~\ref{thm:EFoneBinaryAdditiveNPhard} where each item is allocated to an agent who approves it.

\begin{restatable}{corollary}{EFonePOBinaryAdditiveNPhard}\label{cor:EFonePOBinaryAdditiveNPhard}
\EFonePOCompletion{} is \NPC{} even for agents with binary additive valuations.
\end{restatable}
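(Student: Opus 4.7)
The plan is to reuse the reduction from {\sc Equitable Coloring} already constructed in the proof of \Cref{thm:EFoneBinaryAdditiveNPhard} and argue that, for this specific instance, \EF{1} completions and \EF{1}+\PO{} completions exist under exactly the same conditions. Membership in \NP{} is immediate: a completion is a polynomial-size certificate, \EF{1} is verified by pairwise checks, and \PO{} under binary additive valuations reduces to checking that every good approved by at least one agent is allocated to some approver of it.

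For \NPH{}ness, since any \EF{1}+\PO{} completion is a fortiori an \EF{1} completion, one direction is immediate from \Cref{thm:EFoneBinaryAdditiveNPhard}: if such a completion exists, then $G$ admits an equitable $k$-coloring. For the converse, I would revisit the \EF{1} completion exhibited in the proof of \Cref{thm:EFoneBinaryAdditiveNPhard} that is constructed from an equitable coloring $\pi$, in which the special agent $s$ receives the dummy goods $d_1,\ldots,d_{t+1}$, each color agent $c_i$ receives the vertex goods in $\pi^{-1}(i)$, and each edge agent receives nothing. The key observation is that every vertex good is given to a color agent (who approves it) and every dummy good is given to $s$ (who approves it), so every good in the final allocation is assigned to an approver. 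By the characterization above, this completion is \PO{}, and hence \EF{1}+\PO{}.

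The only substantive step is justifying the \PO{} characterization for binary additive valuations, which is a standard folklore observation: if a good $g$ approved by some agent $j$ is allocated to a non-approver $i$, then moving $g$ from $i$ to $j$ leaves $i$'s utility unchanged and strictly increases $j$'s, yielding a Pareto improvement; conversely, if every good approved by some agent is allocated to an approver, then the allocation attains maximum utilitarian welfare and admits no Pareto improvement, even allowing hypothetical redistribution of the frozen goods. I expect no serious obstacle beyond this routine observation, since the bulk of the work---the correctness of the reduction---is inherited from \Cref{thm:EFoneBinaryAdditiveNPhard}.
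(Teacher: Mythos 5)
Your proposal is correct and matches the paper's argument: the paper derives this corollary directly from the construction in \Cref{thm:EFoneBinaryAdditiveNPhard}, observing that the completion built from an equitable coloring assigns every good to an approver and is therefore \PO{}, while the converse direction is inherited since \EF{1}+\PO{} implies \EF{1}. Your additional care in stating the \PO{} characterization (restricting to goods approved by at least one agent) and in justifying \NP{} membership is a welcome, if routine, elaboration of what the paper leaves implicit.
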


\paragraph{Proportionality Up to One Good and Maximin Share.}

In contrast to \EF{1}, the completion problem can be efficiently solved for \Prop{1} and \MMS{} notions under binary additive valuations.

\begin{restatable}{theorem}{BinaryMMSPROPone}\label{thm:MMS:PROP1:binary}
For binary additive valuations, \MMSCompletion{} and \ProponeCompletion{} can be solved in polynomial time.
\end{restatable}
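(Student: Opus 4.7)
I would reduce both \MMSCompletion{} and \ProponeCompletion{} under binary additive valuations to a bipartite feasibility problem that can be solved in polynomial time via max-flow with lower bounds (equivalently, bipartite $b$-matching). The key observation is that under binary additive valuations, each fairness guarantee for agent $i$ is equivalent to a single per-agent lower bound of the form $v_i(F_i \cup C_i) \geq T_i$ for an appropriate threshold $T_i$, so the problem becomes: find a completion that gives every agent $i$ at least a prescribed number of $i$-approved goods from $U$.

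The thresholds are computed as follows. For \Prop{1}, I would show $T_i^{\Prop{1}} = \lceil v_i(M)/n\rceil - 1$. If $v_i(A_i) \geq \lceil v_i(M)/n\rceil$, then proportionality (hence \Prop{1}) already holds. If $v_i(A_i) = \lceil v_i(M)/n\rceil - 1$, then agent $i$ cannot hold all of its approved goods in $A_i$ (otherwise $v_i(A_i) = v_i(M) \geq \lceil v_i(M)/n\rceil$, contradicting the assumption), so some $g \notin A_i$ with $v_i(g)=1$ must exist, witnessing $v_i(A_i \cup \{g\}) \geq v_i(M)/n$. Conversely, $v_i(A_i) < T_i^{\Prop{1}}$ forces $v_i(A_i \cup \{g\}) \leq v_i(A_i)+1 < v_i(M)/n$ for every $g$, violating \Prop{1}. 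For \MMS{}, I would take $T_i^{\MMS{}} := \mu_i$ and compute $\mu_i$ by water-filling: since $v_i(F_j \cup C_j) = v_i(F_j) + |\{g \in C_j : v_i(g)=1\}|$, the value $\mu_i$ is the largest integer $t$ with $\sum_{j \in [n]} \max(0,\, t - v_i(F_j)) \leq q_i$, where $q_i := |\{g \in U : v_i(g)=1\}|$. Sorting the values $v_i(F_j)$ yields $\mu_i$ in polynomial time.

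Given the thresholds, I would define the demand $d_i := \max(0, T_i - v_i(F_i))$ and cast the joint feasibility of these demands as a flow instance with source $s$, sink $t$, a node per agent and per good in $U$, unit-capacity arcs $s \to g$, unit-capacity arcs $g \to i$ whenever $v_i(g)=1$, and arcs $i \to t$ with lower bound $d_i$. A feasible integral flow exists iff every demand can be met simultaneously; any good that carries no flow to some agent can be assigned arbitrarily in the completion, since it does not change any $v_i(F_i \cup C_i)$ beyond what the flow already guarantees. Feasibility of flows with lower bounds is polynomial-time decidable and constructive, yielding the algorithm. Correctness is immediate: a completion satisfies \Prop{1} (resp.\ \MMS{}) iff $v_i(F_i \cup C_i) \geq T_i^{\Prop{1}}$ (resp.\ $T_i^{\MMS{}}$) for every $i \in N$, iff the flow instance is feasible.

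The main obstacle is the careful verification of the \Prop{1} threshold characterization, particularly the edge case where agent $i$'s approved goods are already entirely contained in $F_i$; once the per-agent reformulation is in place, the rest of the argument is a standard application of max-flow with lower bounds.
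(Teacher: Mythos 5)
Your proposal is correct and follows essentially the same route as the paper: reduce each property to a per-agent lower bound on the number of approved unallocated goods received (computing $\mu_i$ by water-filling and the \Prop{1} demand from the proportionality threshold minus one), then decide joint feasibility via a max-flow instance with unit capacities and per-agent lower quotas. Your explicit integral threshold $\lceil v_i(M)/n\rceil-1$ and the closed-form characterization of $\mu_i$ are slightly more careful renderings of the same argument, including the same edge case where an agent already holds all of its approved goods.
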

\begin{proof}
    Consider an instance $\langle N,M,\V,F\rangle$ of the completion problem where the frozen allocation is $F= ( F_1,F_2,\ldots,F_n)$. Let $U = M \setminus \bigcup_{i \in N}F_i$ be the set of unallocated goods. 

    First, observe that the maximin fair share $\mu_i$ for each agent $i\in N$ can be computed in polynomial time by iteratively assigning each unallocated good approved by agent $i$ to a bundle of minimum value from $i$'s viewpoint. The value $\min_{j \in N}v_i(F_j\cup C_j)$ of the worst bundle in the resulting allocation $(F_j \cup C_j)_{j \in N}$ is $i$'s maximin fair share $\mu_i$.

    Next, we show that given a target threshold $(\mu_i)_{i \in N}$, we can compute in polynomial time whether there exists an allocation that completes $F$ and gives value $\mu_i \in \mathbb{Z}$ for $i\in N$. To this end, we construct a flow network as follows: Create a source node $s$ and a sink node $t$. For each good $j \in U$, introduce a node $g_j$. For each agent $i\in N$, create a node $a_i$. For each good node $g_j$, create an arc $(s,g_j)$ with capacity $1$. For each agent node $a_i$, create an arc $(a_i,s)$ with unbounded capacity and a lower quota of $\mu_i$. For each good-agent pair $(g_j,a_i)$, if $v_{i}(j)=1$, then create an arc $(g_j,a_i)$ with capacity $1$. 

    A maximum flow in this network with the given edge constraints can be computed in polynomial time~(c.f. \citep{KleinbergTardos}). We know that a maximum flow in a network with integral capacity and lower quotas is always integral~(c.f.~\citep{KleinbergTardos}). Therefore, a flow meeting the constraints corresponds to a completion $C$ such that $(F_i \cup C_i)_{i \in N}$ is an \MMS{} allocation.

    Similarly, we can construct a network to compute a \Prop{1} completion as follows. For each agent $i\in N$, compute the required value to achieve the desired threshold as $\frac{1}{n}v_i(M) - v_i(F_i) -1$ and set this as the lower quota for each $(a_i,s)$ arc. 
    
    A flow meeting the above constraints corresponds to a completion $C$ such that $(F_i \cup C_i)_{i \in N}$ is \Prop{1}. To see this, consider the following two cases for any agent $i$. The first case is when agent $i$ approves at least one item among the bundles of other agents, namely $\cup_{j \neq i} A_j$, where $A_j = F_j \cup C_j$. In this case, any such approved good can be added to agent $i$'s bundle to achieve its proportionality threshold. The second case is when all goods in the bundles of other agents are of value $0$ to agent $i$. In this case, since agent $i$ receives all of its approved goods, its proportionality threshold is met and thus \Prop{1} is satisfied.
\end{proof}

Recall that for binary additive valuations, Pareto optimality is equivalent to assigning each good to an agent who approves it. Thus, to achieve \MMS{} and \PO{} (respectively, \Prop{1} and \PO{}), it suffices to check whether a given frozen allocation $F$ is Pareto optimal and whether we can obtain an \MMS{} allocation that completes $F$ by assigning goods to agents who approve them. Combining this observation with Theorem~\ref{thm:MMS:PROP1:binary}, we obtain the following corollary. 

\begin{restatable}{corollary}{BinaryPOMMSPROPone}\label{thm:PO:MMS:PROP1:binary}
For binary additive valuations, \MMSPOCompletion{} and \ProponePOCompletion{} can be solved in polynomial time.    
\end{restatable}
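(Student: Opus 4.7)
The plan is to combine Theorem~\ref{thm:MMS:PROP1:binary} with the following characterization of Pareto optimality under binary additive valuations: a complete allocation is \PO{} if and only if every good that is approved by at least one agent is assigned to some approver. This equivalence is straightforward --- a good approved by someone but held by a non-approver can always be transferred for a strict Pareto improvement, whereas if every approved good is with an approver then any reassignment is either utility-preserving (approver to approver) or utility-decreasing for someone.

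First, I would check in polynomial time whether the frozen allocation $F$ is itself \PO{} by testing, for each $i \in N$ and each $g \in F_i$, whether $v_i(g) = 1$. If this fails for some frozen good, then every completion of $F$ contains the same offending assignment, so no \PO{} completion exists and the algorithm returns NO. This disposes of the \PO{}-part of the input before any fairness computation.

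Next, I would invoke the flow-network construction from the proof of Theorem~\ref{thm:MMS:PROP1:binary} for \MMS{} (respectively, \Prop{1}). Because that network already includes an arc $(g_j,a_i)$ only when $v_i(g_j)=1$, every unallocated good that is routed by an integral feasible flow is sent to an approver. Such a feasible integral flow meeting the computed lower quotas can be found in polynomial time, and it exists if and only if an \MMS{} (respectively, \Prop{1}) completion of $F$ exists.

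Finally, the goods in $U$ that are not routed by the flow split into two kinds: those approved by at least one agent, and those approved by no one. I would greedily assign each good of the first kind to one of its approvers, and each good of the second kind arbitrarily (say, to agent~$1$). Both operations only enlarge bundles (or redistribute goods of value $0$), so the \MMS{} or \Prop{1} threshold guarantees from the flow are preserved, and every approved good ends up with an approver, giving \PO{}. The main (minor) subtlety is verifying that this post-processing preserves the fairness threshold, which is immediate from additivity and non-negativity of the valuations.
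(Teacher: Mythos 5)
Your overall route is the same as the paper's: observe that under binary additive valuations Pareto optimality reduces to ``every approved good sits with an approver,'' check this for the frozen allocation, and then reuse the flow network of Theorem~\ref{thm:MMS:PROP1:binary}, whose arcs $(g_j,a_i)$ exist only when $v_i(g_j)=1$, so that any feasible integral flow already routes unallocated goods to approvers; leftover goods are then distributed to approvers (or arbitrarily if nobody approves them). The paper states this in two sentences and you fill in the post-processing details, which is fine.

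There is one step that, as written, would fail. In your plan you correctly state the characterization: an allocation is \PO{} if and only if every good \emph{that is approved by at least one agent} is assigned to some approver. But your actual test on the frozen allocation checks, for each $g \in F_i$, whether $v_i(g)=1$ and returns NO otherwise. This rejects instances in which some frozen good is approved by \emph{nobody}: such a good held by a non-approver causes no Pareto violation, since reassigning it cannot make any agent strictly better off, so a \PO{} completion may well exist and your algorithm would wrongly answer NO. The fix is immediate from your own characterization --- flag $g \in F_i$ with $v_i(g)=0$ only if there exists some $j$ with $v_j(g)=1$ --- but the check as literally described is incorrect. (The paper's phrasing ``each good is allocated to an agent who approves it'' glosses over the same corner case; your stated characterization is actually the more careful one, so you should apply it consistently.) With that one-line repair, the argument is correct and matches the paper's.
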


Although \Cref{thm:MMS:PROP1:binary} provides an efficient algorithm for checking the existence of a \Prop{1} or an \MMS{} completion, such a completion may not always exist; see Proposition~\ref{prop:binary:noalphaMMS} in Appendix~\ref{appendix:binary}.
Our next result identifies a sufficient condition for guaranteeing the existence of an \MMS{}+\PO{} completion.

\begin{restatable}{proposition}{MMSBinaryExistence}\label{prop:MMS:binary:existence}
For binary additive valuations, if the frozen allocation $F$ is \PO{} with respect to the frozen goods, then an \MMS{} and \PO{} allocation that completes $F$ always exists and can be computed in polynomial time. 
\end{restatable}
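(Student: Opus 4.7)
The plan is to reduce to the flow-network algorithm from \Cref{thm:MMS:PROP1:binary}, exploiting the characterization of \PO{} for binary additive valuations. Recall that for binary additive valuations, an allocation is \PO{} if and only if every good is assigned to an agent who approves it. Since $F$ is \PO{} with respect to the frozen goods, every frozen good is already with an approver; thus, to obtain an overall \PO{} allocation, it suffices to restrict attention to completions that assign each unallocated good in $U$ to an approver. The flow network of \Cref{thm:MMS:PROP1:binary} already enforces this restriction, because it includes an arc $(g_j, a_i)$ only when $v_i(g_j) = 1$.

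For the algorithmic side, I would first compute each $\mu_i$ in polynomial time using the iterative bundle-filling procedure described in the proof of \Cref{thm:MMS:PROP1:binary}, and then invoke the same flow network with a lower quota of $\max(0,\, \mu_i - v_i(F_i))$ on each $(a_i, s)$ arc. Any feasible integral flow yields a completion $C$ such that every agent receives value at least $\mu_i$ (so $F \cup C$ is \MMS{}) and every unallocated good is routed through an approver (so $F \cup C$ is \PO{}). Polynomial-time computability is then immediate from standard flow algorithms, so the entire task reduces to showing that a feasible flow always exists.

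The main obstacle is precisely this existence claim. By the defect version of Hall's theorem (equivalently, max-flow min-cut), feasibility is equivalent to verifying that for every subset $S \subseteq N$,
\[
\sum_{i \in S} \max\bigl(0,\, \mu_i - v_i(F_i)\bigr) \;\;\leq\;\; \Big|\, U \cap \bigcup_{i \in S} \{g : v_i(g) = 1\}\,\Big|.
\]
I would prove this by combining two ingredients: the defining characterization of $\mu_i$ as the largest $\lambda$ satisfying $\sum_{j \in N} \max(0,\, \lambda - v_i(F_j)) \leq v_i(U)$; and the identity $v_i(F_i) = |F_i|$ implied by \PO{} of $F$. A careful double counting over the pairs $(i, g) \in S \times U$ with $v_i(g) = 1$ should then yield the required inequality. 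The hardest case is when many agents in $S$ compete for the same unallocated goods; here the tight per-agent characterization of $\mu_i$ (rather than the loose bound $\mu_i \leq v_i(M)/n$) becomes essential, together with the \PO{} assumption on $F$, which prevents frozen goods from being ``wasted'' on non-approvers and thereby artificially inflating the required demands. An alternative constructive route is to begin with an arbitrary approver-respecting completion and repeatedly reroute goods along an augmenting path from a surplus agent to a deficit agent; showing that such a path always exists when some agent is below its \MMS{} threshold is equivalent to the Hall condition above, and this alternative would directly yield the polynomial-time algorithm without an explicit appeal to flow feasibility.
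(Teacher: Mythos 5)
Your reduction to the flow network of \Cref{thm:MMS:PROP1:binary} is a legitimate route, and you have correctly isolated the one thing that actually needs proving: that the lower quotas $\max(0,\mu_i - v_i(F_i))$ are always simultaneously satisfiable when $F$ is \PO{}. But that feasibility claim is the entire content of the proposition (the rest is bookkeeping already done in \Cref{thm:MMS:PROP1:binary}), and you leave it at ``a careful double counting should then yield the required inequality.'' That is a genuine gap, and the suggested technique points in a slightly wrong direction: no double counting over pairs $(i,g)$ is needed, and summing the per-agent feasibility inequalities for $\mu_i$ does not work, because the right-hand sides $|U_i|$ (where $U_i = \{g \in U : v_i(g)=1\}$) can sum to far more than $|\bigcup_{i\in S} U_i|$. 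The correct witness for your Hall condition is a \emph{single} agent: given $S \subseteq N$, let $i^* \in S$ maximize $\mu_{i^*}$. Pareto optimality of $F$ gives $v_j(F_i) \leq |F_i| = v_i(F_i)$ for all $i,j$, hence for every $i \in S$,
\[
\max\bigl(0,\ \mu_i - v_i(F_i)\bigr) \;\leq\; \max\bigl(0,\ \mu_{i^*} - v_{i^*}(F_i)\bigr),
\]
and summing over $i \in S$ and then over all of $N$, the characterization of $\mu_{i^*}$ (the largest $\lambda$ with $\sum_{j\in N}\max(0,\lambda - v_{i^*}(F_j)) \leq |U_{i^*}|$) yields
\[
\sum_{i \in S} \max\bigl(0,\ \mu_i - v_i(F_i)\bigr) \;\leq\; \sum_{j \in N} \max\bigl(0,\ \mu_{i^*} - v_{i^*}(F_j)\bigr) \;\leq\; |U_{i^*}| \;\leq\; \Bigl|\bigcup_{i\in S} U_i\Bigr|.
\]
This inequality $\mu_i - v_i(F_i) \leq \mu_j - v_j(F_i)$ is exactly the engine of the paper's proof, which avoids flows altogether: it sorts agents by $\mu_i$, lets each agent $i$ greedily grab $\max(0,\mu_i - v_i(F_i))$ approved unallocated goods in that order, and argues that later agents' maximin shares in the residual instance do not decrease. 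Your sequential-rerouting alternative at the end is essentially that argument in disguise.

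Two smaller issues. First, a feasible flow need not route every unallocated good (the arcs $(s,g_j)$ have no lower quota), so you must still say how leftover goods are assigned: approved leftovers must go to approvers to preserve \PO{}, and goods approved by nobody can go anywhere without affecting \PO{} or \MMS{}. Second, the \PO{} hypothesis enters your argument only through the identity $v_i(F_i) = |F_i| \geq v_j(F_i)$; it is worth making that dependence explicit, since without it the proposition is false (\Cref{prop:binary:noalphaMMS} gives a counterexample), so any correct proof must visibly use it.
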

\begin{proof}
In the proof of Theorem~\ref{thm:MMS:PROP1:binary}, we have already seen that the maximin fair share $\mu_i$ for $i\in N$ can be computed in polynomial time for binary additive valuations.

To prove the claim, relabel the agents so that $\mu_1 \leq \mu_2 \leq \ldots \leq \mu_n$. 
For each agent $i\in N$, let $\Delta_{ik}$ denote the number of approved goods in the unallocated set $U$ that must be allocated to $k^\text{th}$ bundle in $i$'s \MMS{} partition, i.e., $\Delta_{ik} =\mu_i-v_i(F_k)$ each $k \in [n]$. 
Complete the allocation $F$ by letting each agent $i \in \{1,2,\ldots,n\}$ take, in the said order, exactly $\max \{\Delta_{ii},0\}$ unallocated goods for which it has value $1$. If there are still unallocated goods after the procedure, we allocate each of them to an agent who approves it if there is such an agent and allocate it arbitrarily otherwise. 

Clearly, after each agent $i$'s turn, $i$ obtains a bundle of value at least $\mu_i$. We claim that after each agent $i$'s turn, the maximin fair share of the remaining instance for each agent $j \in \{i+1,i+2,\ldots,n\}$ does not decrease, which implies that the algorithm allocates to $j$ a bundle of value at least $\mu_j$. To see this, consider any $j \geq i+1$. 
Observe that since $F$ is Pareto optimal with respect to the frozen goods, the goods in $F_i$ that are not approved by agent $i$ are also not approved by any other agent, which means that $v_j(F_i) \leq v_i(F_i)$. Thus, 
the number of approved goods that need to be allocated to $i^\text{th}$ bundle in $j$'s \MMS{} partition is at least that in $i$'s \MMS{} partition, namely, 
\[
\Delta_{ii} =\mu_i -v_i(F_i) \leq \mu_j - v_j(F_i)= \Delta_{ji}. 
\]
Thus, even after agent $i$ takes away $\Delta_{ii}$ goods, agent $j$ can complete the remaining frozen allocation $(F_{i+1},F_{i+2},\ldots,F_n)$ using $\Delta_{jk}$ approved goods for the $k^\text{th}$ bundle for $k \in \{i+1,i+2,\ldots,n\}$. Hence, the resulting allocation $A$ satisfies \MMS{}. Further, $A$ is \PO{} since each good is allocated to an agent who has a positive value for it. It is immediate that the algorithm runs in polynomial time. 
\end{proof}
\section{Lexicographic Valuations}\label{sec:lex}
In this section, we consider the class of lexicographic valuations. Recall that lexicographic valuations can be specified using only ordinal ranking over the goods. Furthermore, any allocation that is \EF{1}, Prop{1}, \MMS{}, or \PO{} with respect to this ordinal ranking also satisfies these properties with respect to any consistent cardinal realization.
\paragraph{Envy-Freeness Up to One Good.}
We show that the \EF{1} completion problem is \NPC{} even for lexicographic valuations. 

\begin{restatable}{theorem}{EFoneLexicographicNPhard}
\EFoneCompletion{} is \NPC{} even for lexicographic valuations.
\label{thm:EF1-Lexoicographic-NPc}
\end{restatable}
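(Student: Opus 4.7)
Membership in \NP{} is immediate: given a proposed completion $C$, we can verify in polynomial time that $F \cup C$ is \EF{1} by doing a pairwise comparison of bundles using the agents' ordinal rankings, since for lexicographic preferences comparing two bundles reduces to locating the top-ranked good in their symmetric difference. The interesting direction is \NPH{}ness. I would attempt a polynomial-time reduction from a combinatorial problem with a natural ``assignment with forbidden pairs'' structure; the two most promising candidate sources are \textsc{Equitable Coloring} (matching what worked in \Cref{thm:EFoneBinaryAdditiveNPhard}) and \textsc{Exact Cover by 3-Sets}. The motivation is that, for lexicographic valuations, the \EF{1}-envy of agent $i$ toward agent $j$ is governed entirely by the top two goods of $A_j \setminus A_i$ according to $i$'s ranking: $i$ is \EF{1}-envious of $j$ exactly when removing the top good from $A_j$ still leaves a bundle whose top good (from $i$'s view) is strictly preferred by $i$ to every good in $A_i$. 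This crisp ``top-two'' criterion is a good vehicle for encoding binary choices, and hence constraints.

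\textbf{Construction sketch.} I would mimic the skeleton of \Cref{thm:EFoneBinaryAdditiveNPhard}: create a vertex-good for each element of the ground set, one ``chooser'' agent per color class (or per set in the cover), one ``constraint'' agent per edge (or per covered element), and a distinguished agent $s$ who is pre-allocated two highly ranked dummy goods as the entire frozen allocation. The rankings are designed so that, in each chooser agent's ranking, the two dummies occupy positions that force the chooser to receive at least two of its highly ranked vertex-goods in any \EF{1} completion (otherwise removing a single dummy from $s$'s bundle does not eliminate its envy of $s$). This replaces the binary ``needs $\geq t$ approved goods'' condition of the earlier proof with a ``needs $\geq 2$ top-ranked goods'' condition driven by the frozen bundle of $s$. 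Each constraint agent $e = \{u,v\}$ is given a ranking in which $u$ and $v$ are its two most preferred goods; thus if a single chooser receives both $u$ and $v$, then $e$ (which receives nothing in the completion, by construction) is still \EF{1}-envious of that chooser, encoding exactly the ``no monochromatic edge'' constraint.

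\textbf{Equivalence.} The forward direction (given a solution to the source problem, build an \EF{1} completion) is usually routine once the construction is fixed: assign chooser $c_i$ the vertex-goods in its color class and verify pairwise that only the forced non-envy conditions need to hold. The reverse direction is the delicate part. I need to show that \emph{every} \EF{1} completion $A$ has the structural property that lets me extract a valid source-problem solution: that no chooser receives two vertex-goods whose corresponding elements are ``incompatible'' (adjacent / in the same forbidden pair), and that each chooser receives the right number of high-ranked goods. The first will follow from the constraint agents, and the second from the dummies on the special agent $s$; both must be argued carefully because lexicographic \EF{1} is sensitive to which specific good is removed, not just how many are removed.

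\textbf{Main obstacle.} The crucial difficulty is that, unlike binary additive valuations, lexicographic valuations force a \emph{strict} ranking, so I cannot treat many goods symmetrically within a single agent. This removes the freedom to say ``any $t$ approved goods suffice'' and means I must carefully slot each chooser's top-ranked goods in its ranking to make the \EF{1} threshold correspond exactly to the arithmetic condition of the source problem. I expect to need auxiliary padding goods at the bottom of each ranking (to absorb leftover items without creating unwanted envy toward or from chooser agents), and to argue that these paddings cannot be rearranged in a way that circumvents the encoded constraints. Choosing the right source problem so that the ``needs at least two top goods'' threshold aligns naturally with its counting condition is, in my judgment, where the proof will either fall out cleanly or require a more intricate gadget.
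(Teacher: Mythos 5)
Your \NP{}-membership argument and your characterization of lexicographic \EF{1}-envy are correct: agent $i$ has \EF{1}-envy toward $j$ exactly when $A_j$ contains two or more goods that $i$ prefers to $i$'s favorite good in $A_i$. But your central gadget contradicts this very characterization. You propose to freeze two high-ranked dummies on a special agent $s$ so as to ``force the chooser to receive at least two of its highly ranked vertex-goods.'' Under lexicographic valuations, whether a chooser \EF{1}-envies $s$ depends only on the rank of the chooser's single best good relative to the two dummies: if its top good beats the second dummy, removing the first dummy kills the envy, and if not, no number of additional lower-ranked goods in the chooser's bundle helps. So envy toward a frozen bundle can enforce only an ``at least one good from a designated prefix'' constraint, never a cardinality threshold of two or more. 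This is precisely why the counting condition of \textsc{Equitable Coloring} (each color class has exactly $t$ vertices), which is what drives the reduction in \Cref{thm:EFoneBinaryAdditiveNPhard}, cannot be transplanted to the lexicographic setting; the same objection applies to the exact-counting condition of \textsc{Exact Cover by 3-Sets}. Your reverse direction would fail: an \EF{1} completion need not distribute the vertex goods into balanced classes, so no equitable coloring (or exact cover) can be extracted.

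The paper's proof instead reduces from \RainbowColoring{}, whose constraints are purely of the two kinds lexicographic \EF{1} can express: ``no agent receives two goods from a designated set'' (enforced by a hyperedge agent whose own frozen good sits just below all of that hyperedge's vertex goods in its ranking) and ``every vertex good must land on a color agent'' (enforced by a family of pair agents $p_{e,v}$, together with the harmless assumption that every vertex appears in two singleton hyperedges). The second family of gadgets is exactly the pinning step you flag as ``delicate'' but leave unresolved; without it, nothing prevents a completion from dumping conflicting vertex goods on constraint agents or on $s$ to evade the encoded constraints. As written, the proposal is missing both the right source problem and the pinning gadget, so the reduction does not go through.
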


The detailed proof of Theorem~\ref{thm:EF1-Lexoicographic-NPc} is deferred to Appendix~\ref{appendix:lex} 
but a brief idea is as follows: The proof %
uses a reduction from \RainbowColoring{}, which is known to be \NPC{}~\citep{GJ79computers} and asks the following question: Given a hypergraph $H = (V,E)$ and a set of $k$ colors $\{c_1,c_2,\dots,c_k\}$, does there exist a coloring of the vertices in $V$ (i.e., an assignment of each vertex to one of $k$ colors) such that all vertices of each hyperedge have different colors? 

Observe that a necessary and sufficient condition for an allocation to be \EF{1} under lexicographic preferences is that no agent $j$ has two or more goods that another agent $i$ prefers over the favorite good in $i$'s bundle. The reduction in \Cref{thm:EF1-Lexoicographic-NPc} enforces this property by creating an agent for each hyperedge and setting up its preferences such that it prefers all `vertex goods' over the favorite frozen good in its bundle. Further, the construction ensures that all vertex goods are assigned only to the agents representing the $k$ colors. Thus, under any \EF{1} allocation, no two vertex goods are assigned to the same color agent, ensuring the rainbow property.

It is relevant to note that, unlike binary valuations where the value of a bundle depends only on the \emph{number} of approved goods, the preferences under lexicographic valuations depend on the \emph{identity} of the goods. This distinction between binary and lexicographic preferences also manifests in the proof techniques: To show the hardness of \EFoneCompletion{} under binary valuations~(\Cref{thm:EFoneBinaryAdditiveNPhard}), it was more natural to reduce from {\sc Equitable Coloring} where only the cardinality of the vertices needs to be balanced. By contrast, \EF{1} for lexicographic preferences requires the more preferred goods to go to separate agents, which is more naturally executed via \RainbowColoring{}.

The proof of Theorem~\ref{thm:EF1-Lexoicographic-NPc} uses ordinal valuations. Thus, if we do not require cardinal valuations in the definition of lexicographic preferences, \EFoneCompletion{} becomes strongly NP-hard. However, if we represent the valuations as numbers (for example, powers of 2), it is not clear whether strong NP-hardness holds for lexicographic valuations.

\paragraph{Proportionality Up to One Good and Maximin Share.}
Although checking the existence of an \EF{1} completion is \NPC{}, there is an efficient algorithm for checking whether there exists a completion satisfying the weaker fairness notion of \Prop{1}~(\Cref{prop:LexProp1}).

\begin{proposition}
Under lexicographic valuations, every allocation satisfies \Prop{1}. Thus, \ProponeCompletion{} can be solved in polynomial time.
\label{prop:LexProp1}
\end{proposition}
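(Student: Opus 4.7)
The plan is to exploit the defining feature of lexicographic valuations: an agent's most preferred good is worth strictly more than all the remaining goods combined. Concretely, for any agent $i$ and any cardinal realization $v_i$ consistent with $\succ_i$, let $g_i^*$ denote $i$'s $\succ_i$-maximal good. Lexicographic dominance together with additivity gives $v_i(g_i^*) > v_i(M \setminus \{g_i^*\})$, and hence $v_i(g_i^*) > \tfrac{1}{2} v_i(M)$. Thus any bundle containing $g_i^*$ already exceeds agent $i$'s proportional share whenever $n \geq 2$.

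Given a complete allocation $A$ and an agent $i$, I would split into two cases. If $g_i^* \in A_i$, then $v_i(A_i) \geq v_i(g_i^*) > \tfrac{1}{2} v_i(M) \geq \tfrac{1}{n} v_i(M)$, so $A_i$ already satisfies full proportionality, and a fortiori \Prop{1}. Otherwise, because $A$ is complete, $g_i^* \in A_k$ for some $k \neq i$, and the same inequality yields $v_i(A_i \cup \{g_i^*\}) \geq v_i(g_i^*) \geq \tfrac{1}{n} v_i(M)$, which is precisely the \Prop{1} witness for agent $i$ using the good $g_i^*$ from $\bigcup_{k \neq i} A_k$. The degenerate case $n = 1$ is immediate, as the sole agent receives all of $M$.

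Since $i$ and $A$ were arbitrary, every complete allocation is \Prop{1} under lexicographic valuations. Algorithmically, an arbitrary completion of a frozen allocation $F$ — for instance, assigning every unallocated good to a single fixed agent — is therefore a valid \Prop{1} completion, and such a completion is clearly produced in polynomial time. The argument carries no serious obstacle; the only subtlety is to interpret the dominance inequality $v_i(g_i^*) > v_i(M \setminus \{g_i^*\})$ uniformly over every cardinal realization consistent with $\succ_i$, which is exactly what the lexicographic extension guarantees and what makes it legitimate to invoke the inequality without fixing a particular numerical valuation.
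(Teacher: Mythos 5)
Your proof is correct and follows essentially the same route as the paper, which justifies the proposition by observing that an agent's favorite good is worth more than all other goods combined, so the agent either holds it (and is fully proportional) or can use it as the \Prop{1} witness from another bundle. Your write-up simply makes explicit the inequality $v_i(g_i^*) > \tfrac{1}{2}v_i(M) \geq \tfrac{1}{n}v_i(M)$ and the completeness argument guaranteeing the witness lies in $\bigcup_{k \neq i} A_k$, which the paper leaves implicit.
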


The reasoning behind \Cref{prop:LexProp1} is that under lexicographic valuations, the favorite good of an agent is more valuable than all the other goods combined. Thus, under any allocation, either an agent receives its favorite good, or, if it doesn't, then it can hypothetically add this good to its bundle. 

Under lexicographic valuations, an allocation is Pareto optimal if and only if it satisfies \emph{sequencibility}~\citep{HSV+21fair}.\footnote{An allocation is said to be \emph{sequencible} if there exists a picking sequence that induces it. Formally, an allocation $A$ is sequencible if there exists a string $\langle a_1, a_2, \dots, a_m \rangle$, where each $a_i \in N$ denotes the index of an agent, such that starting from $a_1$, if each agent $a_i$ picks its favorite remaining item on its turn, then the resulting allocation is $A$.} Therefore, the problem of finding a \PO{} allocation that completes a given frozen allocation reduces to finding a completion such that the final allocation has a corresponding picking sequence of agents. The latter problem can be solved efficiently by constructing a picking sequence for the frozen allocation and extending it for the unallocated goods~(see Theorem \ref{thm:PO:lexiocographic} in Appendix~\ref{appendix:lex}). 
The following is an immediate consequence of this observation and Proposition~\ref{prop:LexProp1}.

\begin{theorem}\label{thm:PROP1-PO-Lex-Poly}
For lexicographic valuations, \ProponePOCompletion{} can be solved in polynomial time.
\end{theorem}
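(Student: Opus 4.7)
The plan is to deduce the theorem from \Cref{prop:LexProp1} together with the polynomial-time algorithm for \POCompletion{} under lexicographic valuations (\Cref{thm:PO:lexiocographic} in the appendix, as referenced just before the statement). By \Cref{prop:LexProp1}, every allocation under lexicographic valuations is \Prop{1}; hence, a completion $C$ makes $F \cup C$ both \Prop{1} and \PO{} if and only if it makes it \PO{}. Thus \ProponePOCompletion{} is equivalent to \POCompletion{} on the same input, and solving the latter in polynomial time solves the former.

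To see that \POCompletion{} for lexicographic valuations is tractable, I would rely on the known characterization \citep{HSV+21fair} that an allocation is \PO{} iff it is \emph{sequencible}, i.e.\ induced by some picking sequence $\langle a_1,\ldots,a_m\rangle$ in which each agent picks its favorite remaining good on its turn. The question therefore becomes: can $F$ be extended to a sequencible allocation? A natural greedy scheme suffices: maintain the set $R$ of goods not yet picked (initially $R = M$); repeatedly find an agent $i$ and a good $g \in F_i \cap R$ such that $g$ is $i$'s most preferred good in $R$, and if such a pair exists, append $i$ to the sequence, delete $g$ from $R$ and from $F_i$. If the frozen set is exhausted, extend the sequence arbitrarily over the remaining unallocated goods in $R = U$; any extension yields a sequencible completion, which is \PO{} (and \Prop{1} by \Cref{prop:LexProp1}). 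If at some step no admissible pair $(i,g)$ exists while frozen goods remain, output NO.

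The main obstacle is the correctness of the greedy step: if no valid $(i,g)$ can be chosen but frozen goods remain, then $F$ admits no sequencible extension. I would prove this by an exchange argument on any hypothetical sequencing of $F \cup C$: let $(i^\star, g^\star)$ be the first frozen pick in such a sequencing; then $g^\star$ must be $i^\star$'s favorite in all of $M$ minus the (unallocated) goods picked before it, which implies the greedy step could have selected $(i^\star, g^\star)$ at the very start. Induction on the number of remaining frozen goods then shows the greedy procedure never gets stuck when a solution exists. Since each step scans at most $O(nm)$ candidate pairs and there are $O(m)$ steps, the algorithm runs in polynomial time, completing the proof.
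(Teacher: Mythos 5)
Your first step---reducing \ProponePOCompletion{} to \POCompletion{} via \Cref{prop:LexProp1} and the sequencibility characterization of \PO{}---is exactly the paper's route. The gap is in your greedy algorithm for the \POCompletion{} subroutine. You initialize $R=M$ (frozen \emph{and} unallocated goods) and require a frozen good $g\in F_i$ to be agent $i$'s favorite in all of $R$ before it can be sequenced, deferring every unallocated good to the end. This is too strong and yields false negatives. Counterexample: two agents, one unallocated good $u$ and one frozen good $f$, with $F_1=\{f\}$, $F_2=\emptyset$, and $u\succ_1 f$. At the first step your greedy finds no admissible pair (agent $1$'s favorite in $R=\{u,f\}$ is $u\notin F_1$) and reports NO, yet the picking sequence $(1,1)$---agent $1$ takes $u$, then $f$---realizes a sequencible, hence \PO{}, completion. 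Your exchange argument fails at precisely this point: the first frozen pick $(i^\star,g^\star)$ in a valid sequencing need only be $i^\star$'s favorite among $M$ minus the unallocated goods picked \emph{before} it, which does not imply it is $i^\star$'s favorite in all of $M$, so the greedy cannot necessarily ``select it at the very start.''

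The paper's proof of \Cref{thm:PO:lexiocographic} decomposes the problem differently: it first tests whether $F$ is sequencible in the sub-instance $I^F$ restricted to the frozen goods alone, where ``favorite'' means favorite among the remaining \emph{frozen} goods---a weaker condition that is implied by (rather than implying) being favorite among all remaining goods. The justification is that deleting the unallocated-good picks from any picking sequence of a sequencible completion leaves a valid picking sequence for $I^F$. If that test passes, the unallocated goods are inserted into the sequence one at a time (Lemma~\ref{lem:extend:frozen}), preserving sequencibility as an invariant. To salvage your one-pass greedy you would have to let it consume unallocated goods mid-stream whenever an agent's favorite remaining good is unallocated, and then justify which agent receives such a good; the paper's two-phase decomposition sidesteps that complication entirely.
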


In the standard fair division setting, Hosseini et al.~\cite{HSV+20fairAAAI} obtain a characterization of an \MMS{} allocation for lexicographic valuations. They show that an allocation is \MMS{} if and only if each agent obtains one or more of their top $(n-1)$ goods or all of their bottom $(m-n+1)$ goods. Such an allocation exists and can be computed in polynomial time. 
By contrast, an \MMS{} completion can fail to exist for lexicographic valuations even when the frozen allocation is Pareto optimal.%

\begin{restatable}{proposition}{NoMMSlexicographic}\label{prop:noMMS:lexicographic}
There exists an instance with lexicographic valuations for which an \MMS{} allocation that completes the given frozen allocation $F$ may not exist even when $F$ is Pareto optimal with respect to the frozen goods. 
\end{restatable}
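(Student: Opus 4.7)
The plan is to exhibit a concrete small counterexample with two agents and four goods, constructed so that the two agents' MMS requirements jointly demand strictly more unallocated goods than are available, while the frozen part remains Pareto optimal. Specifically, I would take $N = \{1,2\}$ and $M = \{g_1, g_2, g_3, g_4\}$ with lexicographic rankings $g_1 \succ_1 g_2 \succ_1 g_3 \succ_1 g_4$ for agent~$1$ and $g_2 \succ_2 g_1 \succ_2 g_3 \succ_2 g_4$ for agent~$2$, and set the frozen allocation to $F_1 = \{g_4\}$ and $F_2 = \{g_1\}$, so that the unallocated set is $U = \{g_2, g_3\}$.

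First, I would check that $F$ is Pareto optimal with respect to the frozen goods $\{g_1, g_4\}$ by enumerating the three nontrivial reallocations of this pair. The swap $(\{g_1\}, \{g_4\})$ makes agent~$2$ strictly worse off, since she loses her second-favourite good $g_1$ in exchange for her worst good $g_4$; handing both frozen goods to a single agent strictly harms the other agent. Hence no reallocation Pareto-dominates $F$.

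Next, I would compute $\mu_1$ by iterating over the four ordered partitions of $U$, using the lex rule (most-preferred good in the symmetric difference) to identify agent~$1$'s worse completed bundle in each case. The largest of these worse bundles is realised at the partition $(\{g_2, g_3\}, \emptyset)$: the completed bundles are $\{g_2, g_3, g_4\}$ and $\{g_1\}$, and since agent~$1$ ranks $g_1$ first she prefers $\{g_1\}$, so the minimum is $\{g_2, g_3, g_4\}$. Because $g_1 \in F_2$ is unavailable to agent~$1$, the only way her completed bundle $A_1 \supseteq \{g_4\}$ can lex-dominate $\{g_2, g_3, g_4\}$ is to equal it; this forces $C_1 = \{g_2, g_3\}$ and hence $C_2 = \emptyset$.

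By an analogous enumeration for agent~$2$, I would show that $\mu_2$ is represented by $\{g_1, g_3\}$, attained at the partition $(\{g_2\}, \{g_3\})$: the completed bundles are $\{g_2, g_4\}$ and $\{g_1, g_3\}$, and agent~$2$ prefers $\{g_2, g_4\}$ because it contains her top good $g_2$, so the minimum is $\{g_1, g_3\}$. For her completed bundle $A_2 \supseteq \{g_1\}$ to lex-dominate $\{g_1, g_3\}$ she would need $g_2$ or $g_3$ in $C_2$; but the previous step forces $C_2 = \emptyset$, so $A_2 = \{g_1\}$ and agent~$2$'s MMS fails. The main obstacle in this plan is pinning down the two MMS bundles correctly, since each is the max over four lex-minima and every bundle comparison must be executed carefully; once this is in place the contradiction is immediate, as agent~$1$'s MMS requires all of $U$ while agent~$2$'s MMS still requires at least one good from $U$.
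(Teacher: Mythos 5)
Your proposal is correct and takes essentially the same approach as the paper: a two-agent instance with two frozen and two unallocated goods, engineered so that the agent whose top good is frozen to the other party has an MMS partition demanding all of $U$, while the other agent's MMS partition still demands at least one good from $U$. The bundle comparisons and the Pareto-optimality check all go through as you describe, so the example is just a relabelled variant of the paper's own counterexample.
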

\begin{proof}
Consider an instance with four goods $f_1,f_2,g_1,g_2$ and two agents with the following ranking over the goods:
\begin{align*}
\text{agent }1 & : g_1 \>_1 g_2 \>_1 \underline{f_1} \>_1 f_2\\
\text{agent }2 & : f_1 \>_2 \underline{f_2} \>_2 g_1 \>_2 g_2
\end{align*}
The underlined entries indicate the frozen good assigned to the agent. 
Note that the frozen allocation $F$ is Pareto optimal with respect to the frozen goods. The \MMS{} partition of agent $1$ is $(F_1 \cup \{g_2\}, F_2 \cup \{g_1\})$ and hence its maximin fair share is $v_1(F_1 \cup \{g_2\})$. On the other hand, the \MMS{} partition of agent $2$ is $(F_1, F_2 \cup \{g_1,g_2\})$ and hence its maximin fair share is $v_2(F_2 \cup \{g_1,g_2\})$. Thus, to achieve \MMS{}, we need to allocate at least one of $g_1$ and $g_2$ to agent $1$, and, simultaneously, both of $g_1$ and $g_2$ to agent $2$, which is impossible.
\end{proof}

Nevertheless, we can design an efficient algorithm to compute the maximin share for each agent for lexicographic valuations. Intuitively, it suffices to identify frozen bundles, say $F_1,\ldots,F_k$ with $v_i(F_1) \leq \ldots \leq v_i(F_k)$, whose values are less than the value of the most valuable unallocated good and allocate each of the top $(k-1)$ unallocated goods to the first $k-1$ bundles and the remaining goods to the last bundle. 

\begin{restatable}{proposition}{MMSvaluelexicographic}\label{prop:computeMMS:lexicographic}
For lexicographic valuations, the maximin share value can be computed in polynomial time.   
\end{restatable}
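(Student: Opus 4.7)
Proof proposal. The plan is to turn the intuition preceding the statement into a concrete polynomial-time procedure. Fix agent $i$ and enumerate the goods in decreasing $\succ_i$-order as $g_1 \succ_i g_2 \succ_i \cdots \succ_i g_m$. Under lexicographic valuations, the $v_i$-value of any bundle is dominated by its best good, so for any completion the min-value bundle is (up to tiebreaking by lower-ranked goods) the bundle whose best good is lowest in $\succ_i$. Computing $\mu_i$ thus reduces to (a) finding the rank $t^*$ of the best good of the min bundle in an optimal completion, and (b) lex-maximizing the remaining composition of that min bundle.

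For (a), I define $N(t) = |\{j \in [n] : F_j \cap \{g_1,\ldots,g_t\} = \emptyset\}|$ and $S(t) = |U \cap \{g_1,\ldots,g_t\}|$ for each rank $t$, and set $t^* = \min\{t : N(t) \leq S(t)\}$; this minimum exists because $N(m) = 0$. The optimality of $t^*$ is a direct counting argument: for every $t < t^*$ the inequality $N(t) > S(t)$ forbids any completion in which every bundle contains a good of rank $\leq t$, so at least one bundle must have best of rank $> t$ and a strictly smaller $v_i$-value than $\{g_t\}$; conversely at $t^*$, the $S(t^*) \geq N(t^*)$ goods in $U \cap \{g_1,\ldots,g_{t^*}\}$ can be assigned one-to-one to the $N(t^*)$ ``needing'' bundles, which guarantees that every bundle in the resulting completion contains a good of rank $\leq t^*$.

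For (b), after that assignment I identify the min bundle: if $g_{t^*} \in U$, it is the needing bundle that receives $g_{t^*}$, and I route $g_{t^*}$ specifically to the needing bundle with the lex-largest frozen part so as to maximize the min bundle's second-tier content; if $g_{t^*} \in F_{j^*}$ for some $j^*$, then $F_{j^*}$ is already the min bundle and receives no unallocated good at this step. I then dump every remaining unallocated good---each of rank strictly greater than $t^*$---into the min bundle; these dumps do not alter any bundle's best good and strictly increase the min bundle's $v_i$-value under any lex-consistent cardinal realization. The algorithm reports $\mu_i$ as the $v_i$-value of this min bundle, and all steps---sorting $\succ_i$, tabulating $N(t), S(t)$, scanning for $t^*$, and constructing the partition---run in $O(nm)$ time. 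The main obstacle I anticipate is the exchange argument justifying the choice of min bundle when $g_{t^*} \in U$: verifying that placing $g_{t^*}$ in the needing bundle with the lex-largest frozen part yields a min bundle that is lex-maximum over all completions achieving min-best rank $t^*$, which amounts to comparing candidate min bundles that differ only by swapping $g_{t^*}$ between two needing bundles and transferring the lex order of their frozen parts to the lex order of the resulting min bundles.
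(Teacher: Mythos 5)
Your approach is correct in substance but genuinely different from the paper's. The paper proves this proposition via a greedy recursive algorithm: sort the frozen bundles by value, and at each step either hand the single best unallocated good to the currently worst bundle and recurse on the remaining instance, or (if every other bundle already beats the best unallocated good) dump all of $U$ into the worst bundle; correctness is an induction on $|U|$. You instead give a direct combinatorial characterization: a Hall-type counting threshold $t^*$ pinpointing the rank of the min bundle's best good, a bijective assignment of the top unallocated goods to the ``needing'' bundles, and a final dump of the low-ranked leftovers into the min bundle. Your route buys a closed-form description of the \MMS{} partition that essentially subsumes what the paper later extracts separately in Lemma~\ref{lem:lexicographic:MMSvalue} (the min bundle is either a frozen bundle plus a suffix of bottom goods, or a needing bundle topped by a single good); the paper's route trades that structural insight for a shorter inductive verification. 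Also, the ``exchange argument'' you flag as the main obstacle is immediate: once $t^*$ and the leftover set $R$ are fixed, every candidate min bundle has the form $F_j \cup \{g_{t^*}\} \cup R$ over needing $j$, so by additivity of any consistent cardinal realization the maximum is attained at the lex-largest $F_j$, and adding $R$ to the min bundle cannot dethrone it since every other bundle has a strictly better best good.

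One genuine (though easily repaired) slip: your claim that $N(m)=0$ is false when some frozen bundles are empty, since $N(m)$ equals the number of agents $j$ with $F_j=\emptyset$. If that number exceeds $|U|$, then $t^*$ does not exist, some bundle is empty in every completion, and $\mu_i=0$; you must handle this case separately before scanning for $t^*$. You should also record the observation (implicit in your threshold definition but worth stating) that the minimality of $t^*$ forces $N(t^*)=S(t^*)$ exactly and, when $g_{t^*}$ is frozen in some $F_{j^*}$, forces $g_{t^*}$ to be the best good of $F_{j^*}$; both facts are needed for your one-to-one assignment to consume precisely the unallocated goods of rank at most $t^*$ and for $F_{j^*}$ to be the min bundle.
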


The detailed proof of the above proposition is deferred to Appendix~\ref{appendix:lex}. 
Proposition~\ref{prop:computeMMS:lexicographic} implies that \MMSCompletion{} belongs to NP when agents have lexicographic valuations. Moreover, by exploiting the structure of an \MMS{} partition, we establish the polynomial-time solvability of \MMSCompletion{}. A crucial property here is that to ensure \MMS{} for each agent, it is sufficient to allocate either the empty bundle, a single good, or a segment of their bottom-preferred unallocated goods. This implies that if an \MMS{} allocation exists, there is one with at most one agent of the last type.

\begin{restatable}{theorem}{LexicographicMMS}\label{thm:decideMMS:lexicographic}
For lexicographic valuations, \MMSCompletion{} can be solved in polynomial time.    
\end{restatable}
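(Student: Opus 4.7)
The plan is to reduce \MMSCompletion{} for lexicographic valuations to polynomially many bipartite matching instances, exploiting the structure of MMS-satisfying bundles already implicit in the discussion preceding the theorem. First, I would invoke \Cref{prop:computeMMS:lexicographic} to compute, in polynomial time, the maximin share $\mu_i$ and the associated MMS partition for each agent $i \in N$. The explicit shape of that partition lets me extract two pieces of data per agent: the set $S_i \coloneqq \{u \in U : v_i(F_i \cup \{u\}) \geq \mu_i\}$ of \emph{single-good satisfiers}, and the \emph{required tail} $T_i^\star \subseteq U$, namely the bottom-preferred suffix of $i$'s ranking on $U$ appearing as the minimum bundle in $i$'s MMS partition. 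Combining lex with this structure, the central characterization is that $v_i(F_i \cup C_i) \geq \mu_i$ holds if and only if one of: (i) $v_i(F_i) \geq \mu_i$ already; (ii) $C_i \cap S_i \neq \emptyset$; or (iii) $C_i \supseteq T_i^\star$.

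Next, I would prove the structural lemma foreshadowed in the excerpt: whenever an MMS completion exists, there is one in which at most one agent, say $i^\star$, receives more than one unallocated good, with every other agent receiving $\emptyset$ or a single good. The proof is a monotonicity-based exchange. Starting from any MMS completion, pick an absorber $i^\star$; for every other agent $j$, shrink $C_j$ to the minimum witnessing $j$'s MMS satisfaction -- either $\emptyset$ (using~(i)) or $\{u\}$ for some $u \in C_j \cap S_j$ (using~(ii)) -- and hand the removed goods to $i^\star$. Lex valuations are monotone under set inclusion, so $i^\star$'s bundle only improves and remains MMS-satisfying. The sole obstruction is an agent $j$ whose satisfaction in the original completion is witnessed solely via~(iii), forcing $C_j \supseteq T_j^\star$; such a ``forced-tail'' agent must consume multiple goods. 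If exactly one such $j$ exists, redesignating it as $i^\star$ restores the lemma, and if two or more coexist with mutually conflicting tails, the instance is genuinely infeasible.

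Given the lemma, the algorithm enumerates the $O(n)$ choices for $i^\star$ and, for each, the $O(m)$ possible satisfaction modes for $i^\star$: the empty mode (valid when $v_{i^\star}(F_{i^\star}) \geq \mu_{i^\star}$), each single-good mode indexed by some $u^\star \in S_{i^\star}$, and the tail mode using $T_{i^\star}^\star$. For each pair, reserve the mode's goods for $i^\star$ together with the tails $T_j^\star$ of any non-$i^\star$ forced-tail agents, aborting this pair if these reservations overlap. Then solve a bipartite matching between the remaining needy non-$i^\star$ agents $\{j : v_j(F_j) < \mu_j\}$ and the unreserved unallocated goods, where $j$ is adjacent to $u$ iff $u \in S_j$. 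If the matching saturates every needy agent, output the completion that assigns matched goods to their agents, $\emptyset$ to the remaining already-satisfied agents, and all leftover unallocated goods to $i^\star$; this completion is MMS by construction. Otherwise, move to the next pair, and report infeasibility only after exhausting all pairs. The main obstacle is the structural lemma -- verifying that lex monotonicity preserves $i^\star$'s MMS under the exchange and that forced-tail agents are the only mechanism producing multiple large bundles -- after which the matching step and the polynomial runtime are routine.
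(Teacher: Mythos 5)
Your overall architecture matches the paper's: compute $\mu_i$ via Proposition~\ref{prop:computeMMS:lexicographic}, characterize MMS-satisfying bundles as ``empty, a single good from $S_i$, or a bottom tail of value exactly $\mu_i$'' (the paper's Lemma~\ref{lem:lexicographic:MMSvalue}), prove that some MMS completion has at most one agent receiving two or more unallocated goods (Lemma~\ref{lem:MMS:lexicographic:special}), and then enumerate the candidate tail-agent and solve a bipartite matching for everyone else. The enumeration and matching steps are sound and essentially identical to the paper's algorithm.

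The gap is in your proof of the structural lemma. After shrinking every agent whose bundle contains a single-good satisfier, you may be left with two agents $i,j$ whose bundles are disjoint tails $C_i=\{g^i_{t_i},\ldots,g^i_{\ell}\}$ and $C_j=\{g^j_{t_j},\ldots,g^j_{\ell}\}$, and you dispose of this case by declaring ``the instance is genuinely infeasible.'' That cannot be right: you are arguing inside a proof that starts from an assumed MMS completion, so the instance is feasible by hypothesis, and two tail-agents can genuinely coexist in a valid completion (an agent may have been handed its tail even though a single good would have sufficed, since $C_j\cap S_j=\emptyset$ does not imply $S_j=\emptyset$). The missing step is an exchange: because $C_i$ and $C_j$ are disjoint and $C_i$ is a suffix of $i$'s ranking over $U$, every good of $C_j$---in particular $g^j_{t_j}$---is ranked by $i$ above all of $C_i$, so under lexicographic preferences $v_i(F_i\cup\{g^j_{t_j}\})>v_i(F_i\cup C_i)\geq\mu_i$, and symmetrically for $j$. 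Swapping these two single goods and discarding the remainder of both tails preserves MMS and strictly decreases the number of multi-good agents, which is exactly how the paper gets the bound of one. A further small imprecision: the tail relevant to agent $i$ is the suffix $T$ of $i$'s ranking of $U$ with $v_i(F_i\cup T)=\mu_i$, i.e., the one attached to $i$'s \emph{own} frozen bundle $F_i$, not the minimum bundle of $i$'s MMS partition, which in general sits on some other frozen bundle $F_j$.
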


\section{General Additive Valuations}\label{sec:additive}

In this section, we consider the class of general additive valuations.

\paragraph{Envy-Freeness Up to One Good and Proportionality Up to One Good.}

Our first two results in this section show that deciding the existence of an \EF1{} completion is computationally intractable even for \emph{two} agents with additive valuations and for three agents with \emph{identical} additive valuations. Given the intractability of \EFoneCompletion{}, a natural approach in search of tractability results is to weaken the fairness notion from \EF{1} to \Prop{1}. Unfortunately, using similar proof techniques, it can be shown that the completion problem remains computationally hard even for \Prop{1}.

\begin{restatable}{theorem}{EFoneTwoAgentAdditiveNPhard}
\EFoneCompletion{} and \ProponeCompletion{} are \NPC{} even for two agents with additive valuations.
\label{thm:EFoneTwoAgentAdditiveNPhard}
\end{restatable}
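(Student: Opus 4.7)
The plan is to establish \NPc{}-ness through two closely related reductions from \Partition{}, using frozen-goods structures calibrated to the specific fairness threshold. Membership in NP is immediate for both variants: given a candidate completion, checking \EF{1} or \Prop{1} under additive valuations reduces to polynomially many inequality tests.

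For \EFoneCompletion{}, take a \Partition{} instance with positive integers $s_1,\dots,s_k$ summing to $2T$ and with each $s_i\le T$ (WLOG, else the instance is a trivial NO). I construct a two-agent instance with $k$ unallocated ``partition goods'' $g_1,\dots,g_k$ valued $v_1(g_i)=v_2(g_i)=s_i$, together with two frozen goods: $f_1$ pre-assigned to agent~$1$ with $v_1(f_1)=0$ and $v_2(f_1)=T$, and symmetrically $f_2$ pre-assigned to agent~$2$ with $v_1(f_2)=T$ and $v_2(f_2)=0$. In the forward direction, a partition witness splits $\{g_i\}$ so that each completion bundle $C_i$ has value $T$; then removing $f_2$ from $A_2$ eliminates agent~$1$'s envy exactly, and symmetrically for agent~$2$. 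For the converse, the heaviest good in $A_2$ from agent~$1$'s viewpoint has value $T$ (either $f_2$, or some $g_i$ with $s_i=T$), so \EF{1} forces $v_1(A_2)-T\le v_1(A_1)$, i.e., $v_1(C_1)\ge v_1(C_2)$, and hence $v(C_1)\ge T$. The symmetric argument gives $v(C_2)\ge T$, and since $v(C_1)+v(C_2)=2T$, equality holds, yielding a \Partition{} witness.

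For \ProponeCompletion{}, the single-frozen-good construction above is not tight: with $v_1(M)=3T$, \Prop{1} only demands $v(C_1)\ge T/2$. I therefore double the frozen structure by assigning each agent \emph{two} frozen goods, each worth $0$ to its owner and $T$ to the other agent. Now $v_i(M)=4T$, so the proportional share is $2T$; the heaviest good in $A_2$ from agent~$1$'s view is still $T$, and \Prop{1} for agent~$1$ becomes $v_1(C_1)+T\ge 2T$, i.e., $v(C_1)\ge T$. By symmetry $v(C_2)\ge T$, again forcing a \Partition{} solution.

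The main obstacle is the backward direction: I must rule out that \EF{1} (or \Prop{1}) is salvaged by removing some good \emph{other} than a designated heavy frozen good. The assumption $s_i\le T$ is what closes this loophole---no unallocated-good removal beats removing a frozen good of value $T$. A secondary subtlety is that the construction exploits genuinely additive (not identical) valuations: the frozen goods take different values for the two agents, which creates the asymmetric envy pressure while leaving the unallocated-goods problem isomorphic to \Partition{}.
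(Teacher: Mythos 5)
Your proposal is correct and matches the paper's proof essentially exactly: both parts reduce from \Partition{} with the same gadgets (for \EF{1}, one frozen good per agent worth $0$ to its owner and $T$ to the other; for \Prop{1}, two such frozen goods per agent so that the proportional share becomes $2T$ and the one-good adjustment of value $T$ forces each completion bundle to value $T$). The paper's argument in the backward direction likewise relies on the normalization $w_i \le T$ so that the best removable/addable good has value at most $T$, just as you note.
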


\begin{restatable}{theorem}{EFoneThreeAgentIdenticalAdditiveNPhard}
\EFoneCompletion{} and \ProponeCompletion{} are \NPC{} even for three agents with identical additive valuations.
\label{thm:EFoneThreeAgentIdenticalAdditiveNPhard}
\end{restatable}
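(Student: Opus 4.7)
Membership in NP is immediate: given a candidate completion $C$, verifying EF1 takes polynomial time by checking all ordered pairs $(i,j)$, and verifying \Prop{1} is even faster since each agent has a single threshold condition. For hardness, my plan is to reduce from PARTITION, using a single construction that works for both \EFoneCompletion{} and \ProponeCompletion{}.

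Given a PARTITION instance $a_1,\ldots,a_n$ with $\sum_i a_i = 2T$, I would build a three-identical-agent completion instance whose unallocated goods consist of (i) one ``number good'' of value $a_i$ for each $i\in[n]$, together with (ii) a small constant number of auxiliary ``pivot'' goods whose values are calibrated to $T$. The frozen allocation would pre-assign carefully chosen bundles to agents $1$ and $2$ while either leaving agent $3$ empty or endowing it with a single anchor good. The design goal is that any EF1 (resp.~\Prop{1}) completion is forced to place the auxiliary goods in uniquely determined bundles, and once this is done, the residual requirement on the completion is \emph{exactly} that the $a_i$'s split into two subsets each summing to~$T$.

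The main obstacle, where I expect most of the work to lie, is controlling the slack intrinsic to both notions: EF1 permits $|v(A_i)-v(A_j)|\le\max_{g\in A_j}v(g)$, and \Prop{1} permits each agent an additional ``favorite other'' good to reach the threshold $v(M)/3$. Under identical valuations these slacks scale with the largest item involved, so a naive reduction would admit loose completions that do not encode an exact partition. I would tighten this by (a)~keeping every $a_i$ bounded relative to $T$---invoking the strongly NP-hard formulation of PARTITION if convenient, or choosing the auxiliary values to dominate the $a_i$'s so they cannot serve as the maximum in EF1 / \Prop{1} slack terms---and (b)~designing the frozen bundles so that in \emph{any} EF1 or \Prop{1} completion the maxima appearing in the slack expressions are determined in advance, turning the EF1/\Prop{1} conditions into equalities rather than inequalities on the sum of $a_i$'s allocated to each agent.

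For the equivalence, the forward direction would read off the PARTITION witness $S\subseteq [n]$ from the bundle of a specific agent in any EF1 or \Prop{1} completion of the constructed instance. The backward direction is straightforward: given $S$ with $\sum_{i\in S}a_i=T$, I would explicitly allocate the number goods $\{g_i: i\in S\}$ to one agent, $\{g_i: i\notin S\}$ to another, and place the pivot goods according to the intended pattern, then verify EF1 and \Prop{1} pairwise using the prescribed frozen values. Since the construction serves both property definitions, hardness of \EFoneCompletion{} and \ProponeCompletion{} will follow simultaneously.
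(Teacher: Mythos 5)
Your overall strategy---a \Partition{} reduction for three identical agents, with frozen goods calibrated around $T$ so that two agents are each forced to extract exactly $T$ from the number goods---is the same one the paper uses, and your worry (b) about pinning down the maxima in the \EF{1}/\Prop{1} slack terms is precisely the issue the actual proof must address. But as written the proposal has a genuine gap: the construction, which is the entire content of the hardness argument, is left as a ``design goal.'' In particular, your plan puts the pivot goods among the \emph{unallocated} goods and then requires an argument that every \EF{1} (resp.\ \Prop{1}) completion places them in ``uniquely determined bundles''; you give no mechanism for forcing this, and under identical valuations it is not clear one exists without further gadgets (any completion can permute goods of equal value across the two ``poor'' agents). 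The paper sidesteps this entirely by \emph{freezing} the threshold goods: for \EF{1} it freezes two goods of value $T$ to agent $1$ and leaves agents $2$ and $3$ empty, so that avoiding \EF{1}-envy toward agent $1$ forces $v(C_2)\ge T$ and $v(C_3)\ge T$, and since the unallocated goods total $2T$ both inequalities collapse to equalities. For \Prop{1} it freezes one good of value $\alpha\ge\max_j w_j$ to each of agents $2$ and $3$ and a mass of tiny goods totalling $T+4\alpha$ to agent $1$, so the threshold $T+2\alpha$ minus the frozen value $\alpha$ minus the best outside good (again $\alpha$) yields $v(C_i)\ge T$ for $i\in\{2,3\}$. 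Your slack concern is thus resolved by the standard normalization $w_i\le T$ (or $\alpha\ge\max_j w_j$), not by anything requiring strong \NP{}-hardness.

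One further concrete error: there is no ``strongly \NP{}-hard formulation of \Partition{}''---\Partition{} admits a pseudo-polynomial algorithm, so that escape route does not exist. (Strong hardness of \EFoneCompletion{} is obtained in the paper only for an unbounded number of agents, via \textsc{3-Partition}.) To turn your proposal into a proof you would need to (i) commit to a concrete frozen allocation, with the threshold goods frozen rather than floating, and (ii) verify both directions of the equivalence, including the backward direction's pairwise \EF{1}/\Prop{1} checks, which you currently only assert will go through.
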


Observe that under identical valuations, every allocation is Pareto optimal. Thus, \Cref{thm:EFoneThreeAgentIdenticalAdditiveNPhard} also implies that checking the existence of an \EF{1}+\PO{} (and, similarly, \Prop{1}+\PO{}) completion is \NPH{}.

\begin{restatable}{corollary}{EFonePOThreeAgentAdditiveNPhard}\label{cor:EFonePOThreeAgentAdditiveNPhard}
\EFonePOCompletion{} and \ProponePOCompletion{} are \NPH{} even for three agents with identical additive valuations.
\end{restatable}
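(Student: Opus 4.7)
The plan is to derive the corollary as an immediate consequence of Theorem~\ref{thm:EFoneThreeAgentIdenticalAdditiveNPhard}, by observing that under identical additive valuations the Pareto optimality constraint is automatically satisfied and hence imposes no extra requirement.

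First, I would argue that when all three agents share the same additive valuation function $v$, every complete allocation $(A_1, A_2, A_3)$ of $M$ has the same total utility, namely $\sum_{i \in N} v(A_i) = v(M)$. If some complete allocation $(A'_1, A'_2, A'_3)$ Pareto dominated $(A_1, A_2, A_3)$, then by definition every agent would weakly prefer $A'$ and at least one would strictly prefer it; summing the inequalities would yield $\sum_i v(A'_i) > \sum_i v(A_i)$, contradicting conservation of total welfare. Note that the paper's definition of \PO{} permits the hypothetical redistribution of frozen goods when testing for domination, but this does not alter the argument since the total value of any complete redistribution of $M$ remains $v(M)$. Hence every complete allocation is Pareto optimal in the identical additive setting.

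Second, this equivalence implies that in the regime covered by Theorem~\ref{thm:EFoneThreeAgentIdenticalAdditiveNPhard} (three agents, identical additive valuations), a completion is \EF{1} if and only if it is \EF{1}+\PO{}, and it is \Prop{1} if and only if it is \Prop{1}+\PO{}. Therefore the decision problems \EFonePOCompletion{} and \ProponePOCompletion{} are identical to \EFoneCompletion{} and \ProponeCompletion{} respectively on such instances. Since Theorem~\ref{thm:EFoneThreeAgentIdenticalAdditiveNPhard} establishes the \NPC{}ness of the latter two problems, the reduction used there carries over verbatim, giving the claimed \NPH{}ness of \EFonePOCompletion{} and \ProponePOCompletion{}.

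No genuine obstacle arises in the proof. The only subtlety to check carefully is that the paper's specific definition of Pareto domination, which allows redistributing even the frozen goods, is not more stringent under identical valuations than a version restricted to completions of $F$; the total-welfare argument sketched above handles both variants uniformly.
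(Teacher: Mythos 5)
Your proposal is correct and follows exactly the paper's argument: the paper likewise derives the corollary from Theorem~\ref{thm:EFoneThreeAgentIdenticalAdditiveNPhard} by noting that under identical valuations every allocation is Pareto optimal, so the \PO{} constraint is vacuous. Your total-welfare justification of that observation is a valid (and slightly more explicit) filling-in of the step the paper states without proof.
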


\Cref{thm:EFoneTwoAgentAdditiveNPhard,thm:EFoneThreeAgentIdenticalAdditiveNPhard} do not impose any structural restrictions on a frozen allocation. Our next result (\Cref{thm:ef1_acyclic}) shows that if the frozen allocation is ``fair'', we can circumvent the intractability of the completion problem. Formally, given a partial allocation $A$, let us define its \emph{envy graph} $G_A$ as a directed graph whose vertices are the agents and there is an edge $(i,j)$ if and only if agent $i$ envies agent $j$ under $A$, i.e., $v_i(A_i) < v_i(A_j)$~\citep{LMM+04approximately}. \Cref{thm:ef1_acyclic} states that if the frozen allocation is \EF{1} and its envy graph is acyclic, then an \EF{1} completion always exists and can be efficiently computed.

\begin{restatable}{proposition}{EFoneenvygraph}\label{thm:ef1_acyclic}
Suppose that a frozen allocation $F$ is \EF{1} and the envy graph with respect to $F$ is acyclic. Then, an \EF{1} allocation that completes $F$ always exists and can be computed in polynomial time. 
\end{restatable}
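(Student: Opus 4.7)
The plan is to adapt the round-robin algorithm by processing agents in a topological order of the envy graph $G_F$. Since $G_F$ is acyclic by hypothesis, such an order $\pi_1, \pi_2, \ldots, \pi_n$ exists---every edge of $G_F$ goes from a smaller to a larger index---and can be computed in polynomial time by standard topological sort. Starting from the frozen allocation $F$, I will then run round-robin on the unallocated goods $U$ in this order, letting each agent pick its favorite remaining good on its turn, and output the completion $A_{\pi_i} \coloneqq F_{\pi_i} \cup R_{\pi_i}$, where $R_{\pi_i}$ denotes the set of goods $\pi_i$ picks during round-robin.

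To verify that $A$ is \EF{1}, I consider any ordered pair $(\pi_k, \pi_j)$ with $k \neq j$. If $k < j$, then $\pi_k$ picks before $\pi_j$ in every round, so a standard round-robin argument gives $v_{\pi_k}(R_{\pi_k}) \geq v_{\pi_k}(R_{\pi_j})$. Combined with the \EF{1} property of $F$, which supplies a good $g^* \in F_{\pi_j}$ satisfying $v_{\pi_k}(F_{\pi_j} \setminus \{g^*\}) \leq v_{\pi_k}(F_{\pi_k})$, additivity yields
\[
v_{\pi_k}(A_{\pi_j} \setminus \{g^*\}) = v_{\pi_k}(F_{\pi_j} \setminus \{g^*\}) + v_{\pi_k}(R_{\pi_j}) \leq v_{\pi_k}(F_{\pi_k}) + v_{\pi_k}(R_{\pi_k}) = v_{\pi_k}(A_{\pi_k}),
\]
certifying \EF{1} for this pair. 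If $k > j$, then because the topological order forbids an edge $\pi_k \to \pi_j$ in $G_F$, $\pi_k$ does not envy $\pi_j$ under $F$, so $v_{\pi_k}(F_{\pi_k}) \geq v_{\pi_k}(F_{\pi_j})$; meanwhile, the standard round-robin analysis for a later-picking agent produces a good $g_1 \in R_{\pi_j}$, namely $\pi_j$'s first-round pick, such that $v_{\pi_k}(R_{\pi_j} \setminus \{g_1\}) \leq v_{\pi_k}(R_{\pi_k})$. Summing these inequalities yields $v_{\pi_k}(A_{\pi_j} \setminus \{g_1\}) \leq v_{\pi_k}(A_{\pi_k})$, again certifying \EF{1}. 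The boundary cases where $F_{\pi_j}$ or $R_{\pi_j}$ is empty are handled by dropping the corresponding inequality and noting that the surviving one already implies envy-freeness for the pair.

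The overall running time is polynomial: computing a topological order of $G_F$ takes $O(n^2)$ time (after computing envies in $O(n^2 m)$ time), and the round-robin phase executes $|U|$ picks with $O(m)$ work each. The main obstacle is that a naive combination of the \EF{1} guarantee from $F$ with the round-robin guarantee on $U$ might force removing two distinct goods from $A_{\pi_j}$---one from each of $F_{\pi_j}$ and $R_{\pi_j}$. The topological ordering resolves this: in the $k<j$ case the round-robin inequality is tight without any removal, so only a good from $F_{\pi_j}$ needs to be removed, while in the $k>j$ case the acyclicity of $G_F$ makes the $F$-inequality tight without any removal, so only $\pi_j$'s first round-robin pick needs to be removed.
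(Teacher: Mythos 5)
Your proposal is correct and follows essentially the same route as the paper's proof: topologically sort the envy graph, run round-robin on the unallocated goods in that order, and combine, via additivity, the "no envy under $F$ / one-good-removal in the round-robin phase" bound for a later-indexed agent with the "no envy in the round-robin phase / one-good-removal under $F$" bound for an earlier-indexed agent. The only difference is that you spell out the standard round-robin inequalities and the empty-bundle boundary cases slightly more explicitly than the paper does.
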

\begin{proof}
Consider the topological ordering induced by the envy graph $G_F$ of the frozen allocation $F$. Assume, without loss of generality, that the resulting ordering is $\sigma = (1,2,\ldots,n)$. Thus, for any pair of agents $i,j \in N$ with $i > j$, $i$ does not envy $j$ under the allocation $F$, i.e., $v_i(F_i) \geq v_i(F_j)$. 

To assign the unallocated goods $U=M\setminus \cup_{i \in N}F_i$, we apply the round-robin procedure with respect to the ordering $\sigma$ (i.e., first agent $1$ picks, then agent $2$, and so on). At its turn, each agent $i \in \{1,2,\ldots,n\}$ picks its favorite good in $U$ that has not been chosen. 

To show that the resulting allocation $A=(F_i \cup C_i)_{i \in N}$ is \EF{1}, consider any pair of distinct agents $i,j \in N$. Suppose $i > j$. Then, $i$ does not envy $j$ under $F$, i.e., $v_i(F_i) \geq v_i(F_j)$, and $i$'s envy toward $j$ is bounded up to one good in the completion phase, i.e., $C_j$ is empty or $v_i(C_i) \geq v_i(C_j \setminus \{g\}))$ for some $g \in C_j$. Thus, by additivity of valuations, we have that either $v_i(A_i) \geq v_i(A_j)$ or $v_i(A_i) \geq v_i(A_j \setminus \{g\}))$ for some $g \in A_j$. Now suppose $i < j$. Then, $i$ does not envy $j$ under the completion phase, i.e., $v_i(C_i) \geq v_i(C_j)$, and $i$'s envy toward $j$ is bounded up to one good in $F$, i.e., $F_j$ is empty or $v_i(F_i) \geq v_i(F_j \setminus \{g\}))$ for some $g \in F_j$. In either case, $i$'s envy toward $j$ is bounded up to one good under $A$.
\end{proof}

Note that the two assumptions in Proposition~\ref{thm:ef1_acyclic} are necessary to guarantee the existence of an EF1 completion and its polynomial-time computability.
In fact, if the frozen allocation is \EF{1} but the envy graph contains a cycle, the completion problem becomes computationally hard. For example, we can construct an instance where the frozen allocation is \EF{1}, but every agent envies another (e.g., each agent receives an item valued at 0, while all other agents receive items valued at a very large amount from their perspective). This setup essentially reduces the problem to deciding whether there exists an envy-free allocation of the unallocated items, which is known to be NP-hard even for binary additive valuations~\cite{BouveretL08}.
Moreover, if the envy graph of the frozen allocation is acyclic, our \Cref{thm:EFoneThreeAgentIdenticalAdditiveNPhard} already demonstrates that the completion problem remains hard (recall that for identical valuations, any allocation is Pareto-optimal).

Building on Proposition~\ref{thm:ef1_acyclic}, we can show that for two agents with identical valuations, there is a polynomial-time algorithm for checking the existence of an \EF{1} or \Prop{1} completion. Note that \EF{1} implies \Prop{1} but the opposite may not hold. Indeed, an allocation that assigns two goods to agent $1$ and nothing to agent $2$ is \Prop{1} but not \EF{1}. Nevertheless, a technique similar to that for \EF{1} also works for \Prop{1} and two agents with identical additive valuations. 

\begin{restatable}{theorem}{EFonetwo}\label{thm:ef1_two_identical}
For two agents with identical additive valuations, \EFoneCompletion{} and \ProponeCompletion{}  can be solved in polynomial time.
\end{restatable}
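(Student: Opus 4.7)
The plan is to run a longest-processing-time (LPT) greedy procedure starting from the frozen allocation $F$ and to argue that it is optimal for both \EF{1} and \Prop{1}. Write $v := v_1 = v_2$ for the common valuation, sort the unallocated goods $u_1, u_2, \ldots, u_\ell \in U$ in non-increasing order of value, and iteratively assign each $u_t$ to the agent whose current bundle has smaller value (breaking ties arbitrarily); let $A = (A_1, A_2)$ with $A_i = F_i \cup C_i$ be the resulting allocation. The algorithm would output ``yes'' iff $A$ satisfies \EF{1} (resp.\ \Prop{1}), and it clearly runs in polynomial time.

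The forward direction of correctness is immediate. For the converse, suppose the algorithm outputs ``no'' and assume without loss of generality that $v(A_1) \leq v(A_2)$; the opposite case is handled symmetrically by swapping the two agents. The easy sub-case is when LPT placed at least one unallocated good into $A_2$, i.e., $C_2 \neq \emptyset$: letting $g^{\star}$ be the last such good, the greedy rule gives $v(A_2 \setminus \{g^{\star}\}) \leq v(A_1)$ (since agent $1$'s bundle value only grows after $g^{\star}$ was assigned), which together with $v(A_2) \geq v(A_1)$ yields \EF{1} in both directions, and hence \Prop{1} as well, contradicting the ``no'' output. This sub-case essentially recovers the round-robin argument underlying Proposition~\ref{thm:ef1_acyclic}.

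The hard part will be the remaining sub-case $C_2 = \emptyset$, in which $v(A_1) = v(F_1) + v(U)$ and $v(A_2) = v(F_2)$. Setting $\Delta := v(F_2) - v(F_1) - v(U) \geq 0$ and $M := \max_{g \in F_2} v(g)$, a direct computation shows that \EF{1} fails iff $\Delta > M$; and since, for two agents with identical valuations, \Prop{1} for agent~$1$ is equivalent to $v(A_2) - v(A_1) \leq 2\,\max_{g \in A_2} v(g)$ (agent~$2$'s \Prop{1} condition is automatic as $v(A_2) \geq v(A_1) \geq V/2$), \Prop{1} fails iff $\Delta > 2M$. To finish, we must rule out all alternative completions. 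For any $(C_1', C_2')$, the gap becomes $v(A_2') - v(A_1') = \Delta + 2\,v(C_2')$, while the max in $A_2' = F_2 \cup C_2'$ is $\max\bigl(M,\ \max_{g \in C_2'} v(g)\bigr)$. A case split on whether $\max_{g \in C_2'} v(g)$ exceeds $M$ -- using $v(C_2') \geq \max_{g \in C_2'} v(g)$ -- shows that the failing inequality only worsens: if the max remains $M$ the bound is unchanged while the gap grows, and if the max jumps to some $v^{\star} > M$, the required \EF{1} (resp.\ \Prop{1}) bound plugs in $v(C_2') \geq v^{\star}$ to force $\Delta < 0$, contradicting $\Delta \geq 0$. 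Hence no \EF{1} (resp.\ \Prop{1}) completion exists, which completes the proof.
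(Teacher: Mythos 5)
Your proof is correct, and it is close in spirit to the paper's: both are greedy balancing procedures whose correctness rests on the observation that, for two identical agents, the extremal completion handing everything to the disadvantaged agent is the only candidate that can fail. The paper proceeds in two phases --- if the frozen allocation already satisfies the property, it invokes Proposition~\ref{thm:ef1_acyclic} (round-robin along a topological order of the acyclic envy graph); otherwise it first dumps unallocated goods on the \EF{1}-envious (resp.\ \Prop{1}-deprived) agent until the deficit is repaired, declares a no-instance if the goods run out first, and then falls back to that proposition. Your single uniform LPT pass collapses the two phases into one, and your treatment of the no-case is more explicit than the paper's: rather than asserting that exhausting the goods certifies infeasibility, you quantify the obstruction via $\Delta$ and $M$ and verify over \emph{all} completions $(C_1',C_2')$ that the gap $\Delta+2v(C_2')$ can never be covered by the largest good in $F_2\cup C_2'$. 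That calculation is a small but genuine addition --- it yields a closed-form characterization of when a completion exists --- at the cost of a slightly longer case analysis. Two cosmetic slips: the chain ``$v(A_2)\geq v(A_1)\geq V/2$'' should read ``$v(A_2)\geq \tfrac{1}{2}\left(v(A_1)+v(A_2)\right)=V/2$, since $v(A_2)\geq v(A_1)$''; and in the \Prop{1} branch of the final case split the forced conclusion is $\Delta\leq 0$ rather than $\Delta<0$, which still contradicts the standing hypothesis $\Delta>2M\geq 0$ of that branch.
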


As shown in \Cref{thm:EFoneTwoAgentAdditiveNPhard,thm:EFoneThreeAgentIdenticalAdditiveNPhard}, the \EFoneCompletion{} problem is \NPC{} for three agents with identical valuations or two agents with possibly non-identical valuations. Thus, our results present a tight demarcation of the frontier of tractability for the \EFoneCompletion{} problem. Further, we note that \EFoneCompletion{} problem for a general number of agents is \emph{strongly} \NPC{}; this can be observed by carrying out a reduction from \textsc{3-Partition}. This result is presented as~\Cref{thm:EFone_strongly_NPC} in~\Cref{appendix:additive}.
\paragraph{Maximin Share.}
For additive valuations, an \MMS{} allocation can fail to exist~\citep{KP18fair}, but a $(3/4 + 3/3836)$-\MMS{} allocation always exists~\citep{AG24breaking}. By contrast, in the completion problem, it is impossible to guarantee $\alpha$-\MMS{} for any $\alpha \in (0,1]$. 
Intuitively, the failure occurs for an instance where each agent $i$ thinks that the first $i$ agents should equally divide the unallocated goods and the remaining agents are already rich enough. This discrepancy in agents' valuations prohibits the existence of an $\alpha$-\MMS{} completion. 

\begin{restatable}{proposition}{NoalphaMMS}\label{prop:noalphaMMS}
For every $\alpha \in (0,1]$, there is an instance for which no $\alpha$-\MMS{} allocation that completes a given frozen allocation $F$ exists even when the frozen allocation $F$ is Pareto optimal with respect to the frozen goods and each bundle in $F$ contains at most one good. 
\end{restatable}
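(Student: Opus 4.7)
The plan is to construct, for any $\alpha \in (0,1]$, an instance with $n$ agents where $n$ is chosen large enough that the harmonic sum $H_n = \sum_{i=1}^{n} 1/i$ exceeds $1/\alpha$. I will take $n$ frozen goods $f_1,\ldots,f_n$ with $F_i = \{f_i\}$ and a large number $|U|$ of unallocated goods, each valued at $1$ by every agent. The frozen valuations are set to $v_i(f_j) = 0$ for $j < i$, $v_i(f_i) = \epsilon$ for a small positive parameter $\epsilon$, and $v_i(f_j) = L$ for $j > i$ with $L \geq |U|$. Intuitively, from agent $i$'s viewpoint the agents indexed $i+1,\ldots,n$ are already ``rich'' due to their frozen goods while the first $i$ agents (including $i$ itself) are the ``poor'' ones who should equally share the unallocated goods; this will force $\mu_i$ to be at least roughly $|U|/i$.

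I would next verify that $F$ is Pareto optimal with respect to the frozen goods via a downward induction on indices. Consider any reassignment of the $n$ frozen goods other than the identity, and let $i^{\star}$ be the largest index whose frozen good changes hands. Because every agent with a higher index keeps their own good, agent $i^{\star}$ must receive some $f_j$ with $j < i^{\star}$, which they value at $0 < \epsilon = v_{i^{\star}}(f_{i^{\star}})$. Hence the reassignment strictly harms agent $i^{\star}$ and cannot Pareto-dominate $F$.

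To finish, observe that since $L \geq |U|$, each bundle $F_j \cup C_j$ for $j > i$ already has value at least $L$ from the frozen good alone (from agent $i$'s perspective). Splitting $U$ as equally as possible across the first $i$ bundles then guarantees a minimum bundle value of at least $\lfloor |U|/i \rfloor$, so $\mu_i \geq \lfloor |U|/i \rfloor$. For any completion $C$, agent $i$'s realized utility is $\epsilon + |C_i|$, so the $\alpha$-\MMS{} requirement forces $|C_i| \geq \alpha \lfloor |U|/i \rfloor - \epsilon$. Summing over $i \in [n]$ and using $\sum_i |C_i| = |U|$ yields, up to an additive slack of order $n(\alpha+\epsilon)$, the inequality $|U| \geq \alpha |U| H_n$, which is impossible once $|U|$ is chosen sufficiently large since $\alpha H_n > 1$. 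The main obstacle is reconciling the Pareto optimality of $F$ with the need for $v_i(F_i)$ to be much smaller than $\mu_i$: the tiny positive $\epsilon$ on $v_i(f_i)$ is just enough to block every Pareto-improving swap while remaining negligible in the harmonic summation bound.
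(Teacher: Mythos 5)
Your proposal is correct and follows essentially the same construction as the paper: one frozen good per agent making higher-indexed agents look ``rich,'' unallocated goods of unit value, the bound $\mu_i \geq \lfloor |U|/i\rfloor$ via an equal split among the first $i$ bundles, and a harmonic-sum contradiction. The only cosmetic differences are that the paper uses increasing positive values $\epsilon_1 < \cdots < \epsilon_n$ on the frozen goods and absorbs the floor via $\lfloor \ell/i\rfloor \geq \ell/(2i)$ with the threshold $\frac{\alpha}{2}H_n > 1$, whereas you take $|U|$ large to absorb the additive slack.
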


\begin{proof}
For every $n \in \mathbb{N}$, let $H_n \coloneqq \sum^n_{j=1}\frac{1}{j}$ denote the $n^\textup{th}$ harmonic number.
Consider any $\alpha \in (0,1]$, and let $n \in \mathbb{N}$ be such that $1 < \frac{\alpha}{2}H_n$; such $n$ must exist since $H_n$ is strictly monotone increasing. Let $0 < \epsilon_1 < \epsilon_2 < \ldots < \epsilon_n < 1$. 

Consider an instance of the completion problem with $n$ agents, $n$ frozen goods, and $\ell$ unallocated goods with $\ell \geq n$.    
Construct a frozen allocation $F=(F_1,\ldots,F_n)$ where for each agent $i \in N$, $F_i$ consists of a single frozen good $f_i$ such that %
$v_i(f_j) = \epsilon_j$ if $j \leq i$ and  $v_i(f_j) = \frac{\ell}{i}$ if $j > i$. Here, we have that for $i \in N$, $\ell \geq i$ and each agent $i \in [n]$ has the same preference ranking over the bundles, i.e., $v_i(F_1)  \leq v_i(F_2) \leq \ldots \leq v_i(F_n)$. Further, $F$ is Pareto optimal with respect to the frozen goods since every agent $i \in N$ strictly prefers good $f_i$ over $f_{i-1},f_{i-2},\ldots,f_{1}$. 

Suppose every agent has a value of $1$ for every unallocated good. Note that the \MMS{} value $\mu_i$ of agent $i \in N$ is at least $\lfloor \frac{\ell}{i} \rfloor$; this can be achieved by assigning unallocated goods among the first $i$ bundles as equally as possible.  

Suppose towards a contradiction that there exists an $\alpha$-\MMS{} allocation that completes $F$. Then, agent $i \in N$ must be assigned at least $\alpha \cdot \lfloor \frac{\ell}{i} \rfloor \geq \alpha \cdot  \frac{\ell}{2i}$ unallocated goods. Since there are $\ell$ unallocated goods, it must be that $\ell \geq \sum_{i \in N}\alpha \cdot  \frac{\ell}{2i}$, or, equivalently, $1 \geq \frac{\alpha}{2} H_n$. But this contradicts the assumption that $1  < \frac{\alpha}{2}H_n$.
\end{proof}

In the standard setting of fair division, the decision problem of determining the existence of an \MMS{} allocation is in $\Sigma^2_p$~\citep{BL16characterizing}; however, it remains open whether the problem is \NPH{} or even harder. A simple reduction from {\sc PARTITION} shows that the search version of the problem is \NPH{}. It is easy to see that \MMSCompletion{} includes the above decision problem as a special case. Our final result shows that \MMSCompletion{} is \NPH{} even for two agents with additive valuations. %\textcolor{red}{ The proof of this result is deferred to the full version~\cite{HIV24fair}}.

\begin{restatable}{theorem}{MMSTwoAgentAdditiveNPhard}
\MMSCompletion{} is \NPH{} even for two agents with additive valuations.
\label{thm:MMSTwoAgentAdditiveNPhard}
\end{restatable}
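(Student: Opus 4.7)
My plan is to establish NP-hardness via a polynomial-time many-one reduction from \Partition{}. Given positive integers $s_1,\ldots,s_k$ with $\sum_i s_i = 2S$, I will construct a two-agent completion instance on the item set $M = \{g_1,\ldots,g_k,f_1,f_2\}$ with frozen allocation $F_1 = \{f_1\}$ and $F_2 = \{f_2\}$. The two agents will agree on the unallocated goods, setting $v_1(g_i) = v_2(g_i) = s_i$, but will value the frozen goods asymmetrically: $v_1(f_1) = 0$, $v_1(f_2) = 1$, $v_2(f_1) = 1$, $v_2(f_2) = 0$. I will then prove that an \MMS{} completion of this instance exists if and only if $S$ is a subset sum of $\{s_1,\ldots,s_k\}$.

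The first step will be to derive closed forms for $\mu_1$ and $\mu_2$. Let $\mathcal{S}$ denote the set of achievable subset sums of $\{s_1,\ldots,s_k\}$ and write $x = v_1(C_1)$, which coincides with $v_2(C_1)$ since the two agents agree on $U$. The definitions unfold to $\mu_1 = \max_{x\in\mathcal{S}}\min(x,\, 2S+1-x)$ and $\mu_2 = \max_{y\in\mathcal{S}}\min(y+1,\, 2S-y)$. Because $\mathcal{S}$ is symmetric about $S$ (the complement of any subset of sum $\sigma$ has sum $2S-\sigma$) and the function $h(z) := \min(z,\, 2S+1-z)$ is symmetric about $S+\tfrac{1}{2}$, the involution $z\mapsto 2S+1-z$ carries $\mathcal{S}+1$ bijectively onto $\mathcal{S}$ while preserving $h$, giving $\mu_1 = \mu_2 =: \mu$. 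Since $h$ peaks at the non-integer value $S+\tfrac{1}{2}$ and attains $S$ exactly on $\{S,\, S+1\}$, I obtain $\mu \leq S$, with $\mu = S$ precisely when $\{S-1,\, S,\, S+1\}\cap\mathcal{S}\neq\emptyset$.

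To finish, I will establish the equivalence by a case analysis. For any completion with $v_1(C_1) = x$, the \MMS{} conditions reduce to $x\geq\mu$ and $2S-x\geq\mu$, so admissibility amounts to $x \in [\mu,\, 2S-\mu]\cap\mathcal{S}$. If $S\in\mathcal{S}$, then $\mu = S$, the window collapses to $\{S\}$, and a \Partition{} witness supplies a valid completion. Conversely, if $S\notin\mathcal{S}$, I will set $\sigma^- := \max\{\sigma\in\mathcal{S} : \sigma\leq S-1\}$ and $\sigma^+ := 2S-\sigma^-\geq S+1$; the symmetry of $\mathcal{S}$ implies $(\sigma^-,\sigma^+)\cap\mathcal{S} = \emptyset$. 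A short evaluation then gives $\mu = \sigma^-+1$, so the admissible window $[\mu,\, 2S-\mu] = [\sigma^-+1,\, \sigma^+-1]$ lies entirely inside the empty gap, forbidding any valid $x$. The main subtlety will be the borderline regime in which $S\notin\mathcal{S}$ but $S\pm 1\in\mathcal{S}$: here $\mu$ still equals $S$, the window is again $\{S\}$, and one must verify that no nearby subset sum can rescue the completion---this is exactly what the symmetry of $\mathcal{S}$ guarantees. I will not argue membership in \NP{}, since computing the MMS value of a two-agent additive instance is itself \NPH{}.
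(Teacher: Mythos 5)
Your proposal is correct and follows essentially the same route as the paper: a reduction from \Partition{} with two frozen goods, each agent valuing its own frozen good at $0$ and the other's at a small positive amount, so that an \MMS{} completion exists iff the target sum is achievable. The paper uses a weight $\epsilon\in(0,1)$ and a short integrality argument ($\mu_i\geq\epsilon+\mu^*$ forces $v_i(C_i)>\mu^*$) in place of your explicit computation of $\mu$ and the window $[\mu,\,2S-\mu]$, but the construction and equivalence are the same.
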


\begin{proof}
Given an instance of \textsc{PARTITION}, a set of positive integers $\{w_1,w_2,\cdots,w_m\}$ such that $\sum_{i\in [m]} w_i = 2T$, we create an instance of the completion problem as follows. 
There are two agents $1$ and $2$. Let $0<\epsilon<1$.
There are two frozen goods $f_1, f_2$ where agent $1$ receives $f_1$ and agent $2$ receives $f_2$. Agent $1$ has value $0$ for $f_1$ and $\epsilon$ for $f_2$. Agent $2$ has value $0$ for $f_2$ and $\epsilon$ for $f_1$. 
For each $k \in [m]$, we create a good $k$ that each agent values at $w_k$. Denote by $(F_i)_{i \in \{1,2\}}$ the frozen allocation.

We show that there exists an \MMS{} allocation that completes the frozen allocation if and only if the original \textsc{PARTITION} problem has a yes-instance, i.e., a partition of the unallocated goods into two bundles of value $T$. 

First, suppose that there exists a \MMS{} allocation $A = (F_i \cup C_i)_{i =1,2}$ that completes the frozen allocation $F$, where $(C_i)_{i=1,2}$ is a completion. Let $\mu^*=\max_{(S_1,S_2) \in \Pi_U}  \min \{\sum_{j \in S_1} w_j, \sum_{j \in S_2} w_j\}$. Suppose towards a contradiction that \textsc{PARTITION} problem is a no-instance, namely $\mu^* < T $. Then observe that for each agent $i \in N$, 
$$
\mu_i \geq \epsilon + \mu^*. 
$$
This means that $v_i(C_i) = v_i(F_i \cup C_i) \geq \epsilon + \mu^*$, yielding
\[
\min \left\{\sum_{j \in C_1} w_j, \sum_{j \in C_2} w_j\right\} = \min \{v_1(C_1),v_2(C_2)\} > \mu^*,
\]
contradicting the definition of $\mu^*$. Thus, \textsc{PARTITION} problem is a yes-instance. 

Conversely, suppose that there exists a partition $(S,[m] \setminus S)$ of $[m]$ such that $\sum_{i \in S} w_i = \sum_{i \in [m] \setminus S} w_i = T$. Then, the maximin fair share $\mu_i$ of each agent $i$ is $T$. Thus, allocating bundle $S$ to agent $1$ and $[m] \setminus S$ to agent $2$ yields an \MMS{} allocation. 
\end{proof}
\section{Conclusion}\label{sec:conclusion}
We initiated the study of fair and efficient completion of indivisible goods and provided several algorithmic and intractability results for this problem under additive valuations as well as its subclasses. Going forward, it would be interesting to study other types of resources such as \emph{chores}, combination of goods and chores, and combination of divisible and indivisible resources~\citep{BLL+21fair}. Additionally, while our work only considers the case of \emph{frozen} goods, it would be interesting to also consider \emph{forbidden} assignments wherein certain items cannot be assigned to certain agents under any completion.

In our paper, we defined the notion of \MMS{}, which incorporates a pre-determined frozen allocation. An alternative approach would be to use the standard \MMS{} definition that disregards the frozen allocation and study the problem of achieving this \MMS{}. Notably, with the standard \MMS{} definition, our computational complexity results remain unchanged. For binary additive valuations, it is still possible to compute an \MMS{} threshold in polynomial time, allowing us to apply the same network-flow argument to compute an \MMS{} allocation. For lexicographic valuations, under the \MMS{} definition, each agent's \MMS{} partition should take the form $(\{g_1\},\{g_2\}, \{g_{n},\ldots,g_m\})$~\cite{HSV+20fairAAAI} where $g_1$ is the most favorite good, and $g_2$ is the second favorite good, and so on. After removing agents whose frozen bundles already achieve their \MMS{} threshold, one can iteratively guess which agent should receive a bundle of $m-n+1$ items, create a bipartite graph where there is an edge between an unallocated item $g$ and an agent $i$ such that $i$ has an \MMS{} value for $g$, and decide whether there exists a matching that saturates the agents in the graph. For general additive valuations, the NP-hardness of deciding the existence of an \MMS{} allocation (Theorem~\ref{thm:MMSTwoAgentAdditiveNPhard}) still holds. This is because each agent assigns a very small value to the frozen items, resulting in the same thresholds for both our \MMS{} definition and the standard \MMS{} definition.
\section*{Acknowledgments}

We thank anonymous reviewers %of AAAI 2025
for helpful comments. Ayumi Igarashi acknowledges support from JST FOREST under grant number JPMJFR226O. Rohit Vaish acknowledges support
from DST INSPIRE grant no. DST/INSPIRE/04/2020/000107 and
SERB grant no. CRG/2022/002621. Vishwa Prakash HV acknowledges support of TCS Research Scholar Fellowship. 

\clearpage
\bibliographystyle{abbrv}%alpha}%{plainnat} 
\bibliography{references}

\appendix
\newpage

\begin{center}
    \Large{Appendix}    
\end{center}

\section{Omitted Material from Section~\ref{sec:binary}}\label{appendix:binary}

\begin{proposition}\label{prop:MNW}
There exists an instance with three agents with binary additive valuations in which no MNW completion of the given frozen allocation $F$ satisfies \EF{1}, even though an \EF{1}+\PO{} completion exists.
\end{proposition}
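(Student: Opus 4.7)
The plan is to exhibit a small three-agent instance whose \emph{unique} MNW completion violates \EF{1}, while an \EF{1}+\PO{} completion exists. Take $N = \{1,2,3\}$, $M = \{f_1, f_2, f_3, u_1, u_2\}$, with frozen allocation $F_1 = \{f_1, f_2, f_3\}$ and $F_2 = F_3 = \emptyset$. Define binary additive valuations in which agents $1$ and $2$ approve every good, whereas agent $3$ approves only $u_1$. The key asymmetry is that agent $3$ has exactly one route to positive utility, namely receiving $u_1$, which MNW will be forced to respect.

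The next step is to pin down the MNW completion. Agent $1$ already has positive utility from $F_1$, while agents $2$ and $3$ start at zero. Since $u_1$ is the only good agent $3$ values, achieving three positive-utility agents requires assigning $u_1$ to agent $3$; once $u_1$ is committed to agent $3$, the only way for agent $2$ to reach positive utility is to assign $u_2$ to agent $2$. Every other completion yields at most two positive-utility agents, so MNW selects this completion uniquely, giving the utility profile $(3,1,1)$. I would then verify that this unique MNW fails \EF{1}: from agent $2$'s viewpoint, $v_2(A_2) = 1$ while $v_2(A_1) = 3$ and $v_2(A_1 \setminus \{g\}) = 2 > 1$ for every $g \in A_1$, so agent $2$ has \EF{1}-envy toward agent $1$.

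Finally, I would exhibit the completion $C_1 = \emptyset$, $C_2 = \{u_1, u_2\}$, $C_3 = \emptyset$ as an \EF{1}+\PO{} witness. Pareto optimality follows because every good is assigned to an approving agent, which characterizes \PO{} for binary additive valuations. For \EF{1}: agent $1$ envies no one; agent $2$ satisfies $v_2(A_2) = 2 = v_2(A_1 \setminus \{f_i\})$, eliminating her envy up to one good; and agent $3$ satisfies $v_3(A_3) = 0 = v_3(A_2 \setminus \{u_1\})$, eliminating her envy up to one good.

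The main subtlety in this argument is guaranteeing \emph{uniqueness} of the MNW completion, since the proposition rules out \emph{every} MNW completion rather than merely some. This uniqueness is not a coincidence but is engineered through the bottleneck structure on agent $3$'s preferences: because $u_1$ is the only good agent $3$ values, reaching the maximum count of positive-utility agents forces both $u_1 \to 3$ and $u_2 \to 2$, leaving no tie-breaking freedom for MNW to accidentally land on an \EF{1} allocation.
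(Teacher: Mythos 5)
Your proof is correct. Your instance (five goods, frozen bundle $\{f_1,f_2,f_3\}$ to agent $1$, agent $3$ approving only $u_1$) genuinely forces a unique MNW completion with profile $(3,1,1)$, the \EF{1}-violation by agent $2$ against agent $1$ is real, and the witness $C_2=\{u_1,u_2\}$ is indeed \EF{1}+\PO{} by the assign-to-approvers characterization of \PO{} under binary valuations. The paper proves the same statement by the same overall method---an explicit counterexample---but pins down the MNW completion differently: in the paper's instance (eight goods, all four frozen goods given to agent $3$), several completions give all three agents positive utility, and it is the second-stage \emph{geometric-mean} maximization that forces the unique non-\EF{1} outcome $(2,2,4)$ versus the available \EF{1}+\PO{} profile $(1,3,4)$. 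Your construction instead exploits only the first-stage tiebreaker (maximizing the number of positive-utility agents), which makes the uniqueness argument shorter and the instance smaller, at the cost of not illustrating that the product-maximization stage alone can already steer MNW away from \EF{1}. Both are valid witnesses for the proposition as stated.
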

\begin{proof}
Consider an instance where there are eight goods and three agents with the binary additive valuations, specified in Table~\ref{table:MNW}. 

\begin{table}[htb]
	\footnotesize
	\centering
	\begin{tabular}{lccccccccc}
		\toprule
        \multicolumn{1}{l}{Goods} & $g_1$ & $g_2$ & $g_3$ & $g_4$ & $f_1$ & $f_2$ & $f_3$ & $f_4$\\ 
        \midrule
        Agent 1 & $\boxed{1}$ & $1$ & $0$ &  $0$ & $0$ & $0$ & $0$ &  $0$ \\  
        Agent 2 & $1$ & $\boxed{1}$ & $\boxed{1}$ &  $\boxed{1}$ & $1$ & $1$ & $1$ &  $1$\\  
        Agent 3 & $0$ & $0$ & $0$ &  $0$ & 
        \underline{$1$}&  \underline{$1$} &  \underline{$1$}&   \underline{$1$}\\  
		\bottomrule
	\end{tabular}
 \vspace{0.1in}
\caption{An instance for which an MNW completion may not satsify \EF{1}. The underlined good indicates the frozen good assigned to the agent. The boxed goods indicate the \PO{} and \EF{1} completion.}
\label{table:MNW}
\end{table}

There are four frozen goods $f_1,\dots,f_4$ and four unallocated goods $g_1,\dots,g_4$. In the frozen allocation $F$, all frozen goods are given to agent $3$. Any MNW completion must allocate the goods $g_3$ and $g_4$ to agent $2$ and the goods $g_1$ and $g_2$ to agent 1. This allocation $A$ is not \EF{1} since agent $2$ envies agent $3$ even after removing any single good from agent $3$'s bundle. However, there is an \EF{1}+\PO{} allocation that completes $F$, namely one that assigns $g_1$ to agent 1 and $g_2,g_3,g_4$ to agent $2$. 
\end{proof}

\begin{proposition}\label{prop:binary:noalphaMMS}
For each $\alpha \in (0,1]$, there exists an instance with binary additive valuations where no $\alpha$-MMS allocation that completes the given frozen allocation exists. 
\end{proposition}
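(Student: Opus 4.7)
My plan is to adapt the harmonic-sum argument from Proposition~\ref{prop:noalphaMMS} to the binary setting. The main conceptual obstacle is that binary valuations preclude setting $v_i(f_j) = \ell/i$ as a single numeric weight; I will instead simulate such weights by stuffing each frozen bundle with $\ell$ binary-valued copies and carefully calibrating which agents approve which copies. The overall goal is to engineer $n$ agents whose maximin fair shares are $\ell/1, \ell/2, \ldots, \ell/n$, so that the total $\alpha$-\MMS{} demand is $\alpha \ell H_n$, where $H_n = \sum_{j=1}^n 1/j$. Picking $n$ with $\alpha H_n > 1$, which is possible because the harmonic series diverges, will then force the total demand to exceed the supply $\ell$ of unallocated goods.

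Concretely, given $\alpha \in (0,1]$, I would fix $n$ with $H_n > 1/\alpha$ and let $\ell$ be any common multiple of $1, 2, \ldots, n$, so that $\ell/i$ is a positive integer for every $i \in [n]$. The instance will have $n$ agents; for each $j \in [n]$, the frozen bundle $F_j$ contains $\ell$ dedicated goods $f^j_1, \ldots, f^j_\ell$, and there are $\ell$ unallocated goods $g_1, \ldots, g_\ell$. The binary valuations are set so that agent $i$ approves $f^j_k$ if and only if $j > i$, and approves every unallocated good. Consequently $v_i(F_j) = \ell$ for $j > i$, $v_i(F_j) = 0$ for $j \leq i$, and $v_i(g) = 1$ for every unallocated $g$.

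The next step is to verify $\mu_i = \ell/i$. From agent $i$'s viewpoint the $n - i$ frozen bundles $F_{i+1}, \ldots, F_n$ already contribute value $\ell$, so the minimum over bundles is never attained on these; it is therefore optimal, for MMS purposes, to place all unallocated goods into the first $i$ bundles and distribute them as evenly as possible. A short pigeonhole argument gives the upper bound $\min_{j \leq i} |C_j| \leq \ell/i$ for any partition $(C_1, \ldots, C_n)$ of the $\ell$ unallocated goods, and equal distribution across $F_1, \ldots, F_i$ attains it. This is the one spot where a little care is needed, but it is routine.

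Finally, for infeasibility, note that in any completion we have $v_i(F_i \cup C_i) = |C_i|$ since $v_i(F_i) = 0$, so $\alpha$-\MMS{} forces $|C_i| \geq \alpha \ell / i$ for each $i$. Summing over $i$ and using $\sum_{i=1}^n |C_i| = \ell$ yields $\ell \geq \alpha \ell H_n$, contradicting $\alpha H_n > 1$. The hard part is really just sidestepping the binary restriction, and the ``$\ell$ copies'' trick disposes of it cleanly; the rest of the argument mirrors the pigeonhole/harmonic-sum counting of Proposition~\ref{prop:noalphaMMS}.
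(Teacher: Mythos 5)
Your construction is correct, but it takes a genuinely different route from the paper's. You port the harmonic-sum instance of Proposition~\ref{prop:noalphaMMS} into the binary world: agent $i$ sees the first $i$ frozen bundles as worthless and the rest as already worth $\ell$, so $\mu_i = \ell/i$, and summing the demands $|C_i| \geq \alpha\ell/i$ against the supply $\ell$ forces $\alpha H_n \leq 1$ — all the steps (the pigeonhole bound $\min_{j\leq i}|C_j|\leq \ell/i$, the identity $v_i(A_i)=|C_i|$, the final sum) check out. The paper instead uses a simpler, non-harmonic instance: each agent $i$ values only its own frozen bundle at zero and everything else at $1$, so \emph{every} agent has the same maximin share $y$ (all $y$ unallocated goods can be poured into the single poor bundle), and with a rational $x/y \leq \alpha$ and $n > y/x$ agents the total demand $nx$ already exceeds the $y$ unallocated goods. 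The paper's version is leaner — it needs only $n = O(1/\alpha)$ agents and $O(ny)$ goods, with no appeal to divergence of the harmonic series — whereas your instance requires $H_n > 1/\alpha$ (so $n$ exponential in $1/\alpha$) and $\ell = \operatorname{lcm}(1,\ldots,n)$ copies per frozen bundle. What your approach buys is a uniform template: it shows the general-additive counterexample of Proposition~\ref{prop:noalphaMMS} survives binarization essentially unchanged, at the cost of a substantially larger instance.
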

\begin{proof}
Take any $\alpha \in (0,1]$. Let $x,y \in \mathbb{Z}_+$ such that $\alpha \geq \frac{x}{y}$. 
Consider an instance where there are $n > \frac{y}{x}$ agents and $(n+1)y$ goods. Agents have binary additive valuations specified in Table~\ref{table:binary:alphaMMS}, where each agent $i$ has value $0$ for $g_{i1},\ldots,g_{iy}$ and value $1$ for the remaining goods. 

\begin{table*}[ht]
	\footnotesize
	\centering
	\begin{tabular}{lccccccccccccccccc}
		\toprule
        \multicolumn{1}{l}{Goods} & $g_1$ & $\cdots$ & $g_y$ & $g_{11}$ & $\cdots$ & $g_{1y}$ & $g_{21}$ & $\cdots$ & $g_{2y}$ & $\cdots$ & $g_{n1}$ & $\cdots$ & $g_{ny}$ \\ 
        \midrule
        Agent 1 & $1$ & $\cdots$ & $1$ & \underline{$0$} & $\cdots$ & \underline{$0$} & $1$ & $\cdots$ & $1$ & $\cdots$ & $1$ & $\cdots$ & $1$\\  
        Agent 2 & $1$ & $\cdots$ & $1$ & $1$ & $\cdots$ & $1$ & \underline{$0$}& $\cdots$ & \underline{$0$}& $\cdots$ & $1$ & $\cdots$ & $1$\\  
        ~~~~$\vdots$& $\vdots$ & $\vdots$ & $\vdots$ &$\vdots$ & $\vdots$ & $\vdots$ & $\vdots$ &$\vdots$ & $\vdots$ & $\vdots$ & $\vdots$ &$\vdots$ & $\vdots$ \\ 
        Agent $n$ & $1$ & $\cdots$ & $1$ & $1$ & $\cdots$ & $1$ & $1$ & $\cdots$ & $1$ & $\cdots$ & \underline{$0$} & $\cdots$ & \underline{$0$}\\ 
		\bottomrule
	\end{tabular}
\caption{An instance for which no $\alpha$-MMS allocation exists for any $\alpha>0$. The underlined good indicates the frozen good assigned to the agent.}
\label{table:binary:alphaMMS}
\end{table*}

In a frozen allocation $F$, each agent $i$ receives $g_{i1},\ldots,g_{iy}$. The maximin fair share for each agent is $y$. Thus, any $\alpha$-MMS allocation that completes the frozen allocation should allocate at least $x$ goods from $g_1,\ldots,g_y$ to each agent. Such an allocation does not exist since $n \cdot x > y$.~\qedhere
\end{proof}

\section{Omitted Material from Section~\ref{sec:lex}}\label{appendix:lex}

\EFoneLexicographicNPhard*

\begin{proof}
    Checking whether a given allocation correctly completes a given frozen allocation and whether it satisfies \EF{1} can be done in polynomial time. Thus, the problem is clearly in \NP{}.

    To prove \NP{}-hardness, we will show a reduction from \RainbowColoring{}, which is known to be \NPC{}~\citep{GJ79computers} and asks the following question: Given a hypergraph $H = (V,E)$ and a set of $k$ colors $\{c_1,c_2,\dots,c_k\}$, does there exist a coloring of the vertices in $V$ (i.e., an assignment of each vertex to one of the $k$ colors) such that all vertices of any fixed hyperedge have different colors?\\

    \emph{Construction of the reduced instance}. Let $q \coloneqq |V|$ and $r \coloneqq |E|$ denote the number of vertices and hyperedges, respectively. We will assume that for every vertex $v \in V$, 
    there are at least two singleton hyperedges in $E$ containing only that vertex, say $e' \coloneqq \{v\}$ and $e'' \coloneqq \{v\}$. Note that adding such singleton hyperedges to the \RainbowColoring{} instance does not affect the YES/NO answer.

    Given any instance of \RainbowColoring{}, we will construct an instance of the completion problem as follows: There are three sets of agents, namely $r$ \emph{hyperedge} agents $\{a_e\}_{e \in E}$, $k$ \emph{color} agents $\{c_1,\dots,c_k\}$, and $q \cdot r$ \emph{pair} agents $\{p_{e,v}\}_{e \in E, v \in V}$. Thus, there are $r+k+qr$ agents overall.
    
    The set of goods consists of $q$ \emph{vertex} goods $\{v_1,\dots,v_q\}$, one for each vertex in $V$, and $r+k+qr$ \emph{frozen} goods, one for each agent. The frozen goods comprise of frozen hyperedge goods $\{g_e\}_{e \in E}$, frozen color goods $\{g_{c_1},\dots,g_{c_k}\}$, and frozen pair goods $\{g_{e,v}\}_{e \in E, v \in V}$. Each frozen good is assigned to the agent after which it is named. The vertex goods constitute the set of unallocated goods.

    The lexicographic preferences of the agents are specified as follows: 
    \begin{itemize}
        \item For each hyperedge $e \in E$, if $e = \{v_{j_1},\dots,v_{j_\ell}\}$, then the preference of the hyperedge agent $a_e$ is given by
    \begin{align*}
        &a_e: \{ v_{j_1}, \dots, v_{j_\ell} \} \> \\
        &\langle \text{all frozen hyperedge and all frozen pair goods except \(g_e\)} \rangle \\
        &\> \underline{g_e} \> \langle \text{rest} \rangle.
    \end{align*}
    The ordering within the vertex goods $v_{j_1}, \dots, v_{j_\ell}$ or within the frozen hyperedge and frozen pair goods can be arbitrary. The set $\langle \text{rest} \rangle$ denotes the remaining goods, and the ordering of these goods can also be arbitrary. The underlined good indicates the frozen good assigned to the agent.
        \item For every $i \in [k]$, the preference of color agent $c_i$ is given by
    \begin{align*}
        c_i: \underline{g_{c_i}} \> \langle \text{rest} \rangle.
    \end{align*}
        \item For every hyperedge $e \in E$ and every vertex $v \in V$ (not necessarily in $e$), the preference of the pair agent $p_{e,v}$ is given by
    \begin{align*}
        p_{e,v}: v \> g_e & \> \langle \text{all frozen pair goods except for } g_{e,v} \rangle   \\
        &\> \underline{g_{e,v}} \> \langle \text{rest} \rangle.
    \end{align*}
    \end{itemize}

    Let us now show that the \RainbowColoring{} instance admits a solution if and only if the reduced instance of completion problem does.\\

    Proof of \RainbowColoring{} $\Rightarrow$ \EF{1} Completion.\\ 
    
    Let $\sigma: V \rightarrow [k]$ be a valid rainbow coloring of the hypergraph $H = (V,E)$. Consider an assignment of the unallocated goods such that, for every vertex $v$, the corresponding vertex good is assigned to the color agent $c_{\sigma(v)}$.

    Observe that the aforementioned allocation, say $A$, is a valid completion of the given frozen allocation. Furthermore, the allocation $A$ satisfies \EF{1} for the following reason: Each color agent $c_i$ receives its top-ranked good and therefore, due to lexicographic preferences, does not envy any other agent. Each pair agent $p_{e,v}$ receives only the frozen good $g_{e,v}$, but it does not \EF{1}-envy any other agent because all of its more preferred goods are assigned to \emph{different} agents; indeed, its favorite good $v$ is assigned to a color agent, its second-favorite good $g_e$ is assigned to a hyperedge agent, and the frozen pair goods other than $g_{e,v}$ are assigned to distinct pair agents. Finally, for each hyperedge agent $a_e$, observe that by the rainbow property, the vertex goods ranked at the top of agent $a_e$'s preference list are assigned to distinct color agents. Furthermore, in $a_e$'s preference list, all frozen color goods are ranked below the frozen hyperedge good $g_e$ assigned to $a_e$. Thus, the envy from agent $a_e$ towards any color agent is bounded by the removal of a vertex good. A similar argument can be used to bound the envy from $a_e$ towards any other hyperedge or pair agent. Therefore, the allocation $A$ satisfies \EF{1}.\\

    Proof of \EF{1} Completion $\Rightarrow$ \RainbowColoring{}.\\ 
    
    Suppose allocation $A$ is a valid \EF{1} completion of the given frozen allocation. We will first show that all vertex goods must be assigned among the color agents in $A$. Then, we will argue that the coloring induced by $A$ in the hypergraph $H$ has the desired rainbow property.

    To see why each vertex good is assigned to a color agent, consider the preference list of any pair agent $p_{e,v}$. This agent owns the frozen pair good $g_{e,v}$, which is ranked below all other frozen pair goods (which are assigned to other pair agents). Recall that for any vertex $v$, there exist at least two pair agents of the form $p_{e',v}$ and $p_{e'',v}$ that both rank $v$ as their favorite good. If the vertex good $v$ is assigned to some pair agent, then \EF{1} is violated for at least one of the pair agents $p_{e',v}$ and $p_{e'',v}$. Thus, a vertex good cannot be assigned to any pair agent.

    By a similar reasoning, it can be seen that a vertex good cannot be assigned to any hyperedge agent. Indeed, for any vertex $v$ and hyperedge $e$, the pair agent $p_{e,v}$ prefers the frozen good $g_e$ assigned to the hyperedge agent $a_e$ over its own frozen good $g_{e,v}$. Therefore, in order to satisfy \EF{1} from the perspective of the pair agent $p_{e,v}$, the vertex good $v$ cannot be assigned to the hyperedge agent $a_e$. Thus, each vertex good must be assigned to a color agent. One can now naturally infer a coloring of the hypergraph $H = (V,E)$. It remains to be shown that this coloring has the rainbow property.

    Consider any hyperedge $e \in E$ and the corresponding hyperedge agent $a_e$. We have argued above that any vertex good must be assigned to a color agent. If two or more vertex goods corresponding to the vertices of hyperedge $e$ are assigned to the same color agent, then \EF{1} is violated from $a_e$'s perspective. Therefore, all vertices in hyperedge $e$ must have distinct colors, implying a rainbow coloring.
\end{proof}

\begin{theorem}\label{thm:PO:lexiocographic}
For lexicographic valuations, \POCompletion{} can be solved in polynomial time.
\end{theorem}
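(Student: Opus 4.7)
The plan is to invoke the Hosseini~et~al.~\cite{HSV+21fair} characterization that under lexicographic valuations an allocation is Pareto optimal if and only if it is \emph{sequencible}---i.e., there exists a picking sequence of agents $(a_1,\ldots,a_m)$ such that each $a_t$ takes its top-ranked remaining good at step $t$. Thus, \POCompletion{} reduces to deciding whether we can assign the unallocated goods in $U=M\setminus \bigcup_{i\in N}F_i$ and simultaneously order all of $M$ to form a valid picking sequence whose output completes $F$.

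I would solve this with a greedy procedure that constructs such a sequence step by step. Maintain the set $R$ of goods not yet placed, initialized to $M$. At each iteration, search for a \emph{candidate} good $g \in R$: a frozen good $g \in F_i$ is a candidate if it is agent $i$'s top-ranked good in $R$, and an unallocated good $g \in U$ is a candidate if some agent $k$ has $g$ as its top-ranked good in $R$. If a candidate is found, append the corresponding agent to the sequence (in the unallocated case, commit $g$ to that agent), remove $g$ from $R$, and repeat; otherwise output that no \PO{} completion exists. The algorithm runs in polynomial time, since there are at most $m$ iterations, each scanning $O(nm)$ agent--good pairs.

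For correctness, successful termination yields by construction a valid picking sequence realizing a completion of $F$, which is therefore \PO{}. For the converse I would maintain the invariant that the partial assignment and sequence built so far can always be extended to some valid complete picking sequence, which immediately guarantees that a candidate is available at each step (the next pick in any valid extension is, by definition, a candidate). The main obstacle---and the technical heart of the proof---is the inductive step: if greedy picks a candidate $g$ (assigning it to agent $k$ when $g \in U$) that differs from the pick chosen by a reference valid extension $\pi^*$, one must transform $\pi^*$ into a valid extension of the updated state. This follows from a short exchange argument: since $g$ is $k$'s top in the current $R$, one can excise the original pick of $g$ from $\pi^*$ and prepend the pick ``$k$ chooses $g$''; the rest of $\pi^*$ remains valid because removing $g$ from $R$ can only eliminate tops that were equal to $g$, which corresponds precisely to the pick we excised.
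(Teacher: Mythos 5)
Your proposal is correct, and it rests on the same structural pillar as the paper's proof---the characterization that, under lexicographic valuations, Pareto optimality is equivalent to sequencibility---but the algorithmic decomposition and the correctness argument are genuinely different. The paper splits the problem in two: it first shows that a sequencible completion of $F$ exists \emph{only if} $F$ is already sequencible in the sub-instance $I^F$ restricted to the frozen goods (proved by simply deleting the unallocated-good entries from any witnessing picking sequence), and then shows constructively (Lemma~\ref{lem:extend:frozen}) that any picking sequence for $I^F$ can be extended by inserting the unallocated goods one at a time while preserving sequencibility as an invariant. Your version instead runs a single greedy pass over \emph{all} of $M$, at each step picking any ``candidate'' (a frozen good that is its owner's current top, or an unallocated good that is some agent's current top), and proves correctness via an extendability invariant plus an exchange argument: excise the reference sequence's pick of $g$ and prepend the greedy pick, noting that shrinking the remaining set cannot invalidate any other agent's top choice. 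I checked the exchange step and it goes through, including the case where an unallocated good is reassigned from the reference extension's recipient to the greedy's chosen agent. The trade-off is that the paper's route yields a clean stand-alone characterization (``a \PO{} completion exists iff the frozen allocation is sequencible on the frozen goods alone'') with two easy lemmas, whereas your unified greedy is more self-contained algorithmically but concentrates all the difficulty into the exchange argument, which you should spell out in full (in particular, the observation that the final allocation may differ from the reference only in the owner of an \emph{unallocated} good, so it still completes $F$). Both run in polynomial time.
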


To prove Theorem~\ref{thm:PO:lexiocographic}, we present two lemmas. 
Given a partial allocation \(A\), where \( M' = \cup_{i=1}^n A_i \subseteq M\), we define \(I^A = \langle N,M',\mathcal{V} \rangle\) to be the \emph{sub-instance} of the fair allocation problem over the goods in \(M'\) and agent set \(N\). Given a frozen allocation \(F\), the following lemma gives us a simple check to test whether there is a \PO{} allocation that completes the given frozen allocation \(F\). Formally, a partial allocation $A$ is \emph{sequencible} if there exists a picking sequence $\sigma=(a_1,a_2,\ldots,a_m)$ of the agents so that $A$ can be realized by letting each agent $a_j$ take the favorite good among the remaining goods for $j \in \{1,2,\ldots,m\}$. 

\begin{lemma}
    If the frozen allocation $F$ is not sequencible in $I^F$, then there exists no sequencible allocation that completes $F$.
\end{lemma}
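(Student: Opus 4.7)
The plan is to establish the contrapositive: if there exists a sequencible allocation $A$ that completes $F$, then $F$ is itself sequencible in the sub-instance $I^F$. So suppose $A = (F_i \cup C_i)_{i \in N}$ is realized by a picking sequence $\sigma = (a_1, a_2, \ldots, a_m)$ on the full instance, and let $g_\ell$ denote the good chosen at step $\ell$. Let $i_1 < i_2 < \cdots < i_s$ be exactly those indices at which a frozen good is selected, and form the subsequence $\sigma' = (a_{i_1}, a_{i_2}, \ldots, a_{i_s})$ of length $s = |\cup_i F_i|$.

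The central claim is that $\sigma'$ realizes $F$ in $I^F$. I would argue this by induction on $j$: just before step $j$ of $\sigma'$, the remaining frozen goods are precisely $\{g_{i_j}, g_{i_{j+1}}, \ldots, g_{i_s}\}$, which is exactly the set of frozen goods still unpicked just before step $i_j$ of $\sigma$. In $\sigma$, agent $a_{i_j}$ selects $g_{i_j}$ as its favorite among \emph{all} goods remaining at step $i_j$; in particular, $g_{i_j}$ is its favorite among the remaining frozen goods. Hence when $a_{i_j}$ is asked to pick in $\sigma'$ from the same restricted set, it again chooses $g_{i_j}$. Since $g_{i_j}$ is a frozen good that $A$ assigns to $a_{i_j}$, it lies in $F_{a_{i_j}}$, so running $\sigma'$ produces precisely the frozen allocation $F$, proving sequencibility of $F$ in $I^F$.

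The only subtlety is the monotonicity of an agent's favorite under restriction of the candidate set, which is immediate for lexicographic (indeed, any) preferences: if $g$ is most preferred in $S$, then $g$ is most preferred in every subset of $S$ that still contains $g$. The one thing to verify carefully is that the set of frozen goods present at step $i_j$ of $\sigma$ really does coincide with the set of frozen goods present at step $j$ of $\sigma'$; this follows immediately from the fact that $\sigma'$ is the subsequence of $\sigma$ restricted to frozen-good picks. I do not anticipate any significant obstacle beyond keeping this bookkeeping precise.
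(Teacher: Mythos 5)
Your proposal is correct and follows essentially the same route as the paper: both prove the contrapositive by taking the picking sequence realizing the completed allocation and deleting the entries at which unallocated goods are picked, leaving a picking sequence for $I^F$. You simply make explicit the bookkeeping the paper leaves implicit, namely that the surviving subsequence still hands each frozen good to the same agent because an agent's favorite good in a set remains its favorite in any subset containing it.
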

\begin{proof}
We prove the contrapositive. 
    Suppose that there is a complete allocation $A$  that is sequencible and completes $F$. Consider the $m$-length picking sequence $\sigma = (a_1,a_2,\ldots,a_m )$ that realizes $A$. 
    Let $U=M\setminus M'$. For each good $g \in U$, delete the corresponding entry of the agent who picks $g$ in $\sigma$. When this procedure terminates, the remainder sequence corresponds to a picking sequence for the instance \(I^F\) on frozen goods.
\end{proof}

Now, given a frozen allocation \(F\), if \(F\) is sequencible in \(I^F\), then Algorithm~\ref{alg:GetSEQ} gives a picking sequence that yields $F$. Otherwise, it detects that \(F\) is not sequencible. 

\begin{algorithm}[t]
    \caption{GetSEQ($F$) }
    \label{alg:GetSEQ}
    \begin{algorithmic}[1]
        \If{ No agent in $I^F$ gets their top good}
            \State \textbf{return} ``NOT SEQ''
        \ElsIf {$I^F$ has no good}
        \State \textbf{return} $()$ \Comment Empty list
            \EndIf
        \State Let $a$ be the agent who gets her top good $g$ in \(F\)
        \State \textbf{return} $(a, \text{GetSEQ}(F \setminus \{g\}) )$
    \end{algorithmic}
\end{algorithm}

If the given frozen allocation is sequencible, we can allocate the remaining goods iteratively, ensuring that the sequencibility of the partial allocation remains an invariant throughout the process.
\iffalse
\begin{algorithm}
    \caption{CompleteSEQ($A$) }
    \label{alg:CompleteSEQ}
    \begin{algorithmic}[1]
        \State Let \(M' \gets \cup_{i=1}^n A_i\) and $U \gets M\setminus M'$
        \State Let \(\sigma = (a'_1,a'_2,\ldots,a'_{|M'|}) \gets\) GetSEQ(A)
        \For{\(g\) in \(U\)}
            \State $\sigma \gets \Call{ExtendList}{\sigma,g,M'}$ 
        \EndFor
        \State \textbf{return} Allocation corresponding to \(\sigma\)
        \item[]
        \Procedure{ExtendList}{$\sigma=(a'_1,\ldots,a'_{|M'|}),g^*,M'$}
            \State Run the sequence \(\sigma\) on \(M'\cup\{g^*\}\) till an agent \(a'_i\) picks \(g^*\). 
            \If{No agent picks \(g^*\)}
                \State \textbf{return} \(\sigma'=(\sigma,a'_1)\)
            \Else
                \State \textbf{return} \(\sigma'=(a'_1,a'_2,\ldots,a'_i,a'_i,\ldots,a'_{|M'|})\)
            \EndIf
        \EndProcedure
    \end{algorithmic}
\end{algorithm}
\fi

\begin{lemma}
    If the frozen allocation \(F\) is sequencible in \(I^F\), then a sequencible allocation that completes $F$ exists and can be computed in polynomial time. 
    %the output of Algorithm~\ref{alg:CompleteSEQ} is sequencible. 
    \label{lem:extend:frozen}
\end{lemma}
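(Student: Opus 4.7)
The plan is to start with the picking sequence $\sigma = \text{GetSEQ}(F)$ of length $|M'|$ that realizes $F$, and then process the unallocated goods in $U = M \setminus M'$ one at a time, extending $\sigma$ by a single agent per good while maintaining the invariant that the sequence continues to realize an allocation extending $F$. After $|U|$ iterations, the resulting sequence has length $|M|$ and realizes a complete, sequencible allocation that completes $F$.

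The core subroutine processes a single unallocated good $g^*$ as follows. Given the current sequence $\sigma = (a'_1, a'_2, \ldots, a'_L)$, simulate it on the good set $M' \cup \{g^*\}$, letting each listed agent pick their favorite remaining good in turn. If some agent picks $g^*$, let $i$ be the first such step and form $\sigma'$ by inserting a copy of $a'_i$ immediately after position $i$, i.e., $\sigma' = (a'_1, \ldots, a'_i, a'_i, a'_{i+1}, \ldots, a'_L)$. If no agent picks $g^*$, set $\sigma' = (\sigma, a)$ for any agent $a$ (since $g^*$ will be the only remaining good at the end, and any agent must then pick it).

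The key step is to prove the invariant: running $\sigma'$ on $M' \cup \{g^*\}$ yields exactly the allocation realized by $\sigma$ on $M'$, plus one additional good $g^*$ assigned to $a'_i$ (or to $a$ in the second case). This is where lexicographic preferences do the work. I would argue by induction on position. For positions $j < i$, the simulated run of $\sigma$ on $M' \cup \{g^*\}$ agrees with the run on $M'$: each agent $a'_j$ did not prefer $g^*$ over their top choice in the remaining set (otherwise $i$ would be at most $j$), so by the strictness of lexicographic preferences, the favorite good in $(M' \cup \{g^*\}) \setminus \{\text{picks so far}\}$ coincides with the favorite good in $M' \setminus \{\text{picks so far}\}$, which is the original pick. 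At position $i$, agent $a'_i$ picks $g^*$ by choice of $i$. At position $i+1$ (the inserted copy), the remaining set is exactly $M' \setminus \{\text{picks of } a'_1, \ldots, a'_{i-1}\}$, which is identical to the state at step $i$ of the original run on $M'$; hence $a'_i$ picks the same good they originally picked. For positions beyond $i+1$, the state is identical to the original $\sigma$ run on $M'$ after step $i$, so the remaining picks match.

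Given this invariant, processing all $|U|$ unallocated goods one by one preserves the property that the current sequence realizes an allocation extending $F$ on the goods processed so far, and the final sequence realizes a complete sequencible allocation completing $F$. The running time per iteration is $O(|M| \cdot n)$ for the simulation and $O(|M|)$ for the insertion, giving overall polynomial time. The main obstacle is the careful bookkeeping in verifying the invariant, particularly the fact that the doubled copy of $a'_i$ recovers $a'_i$'s original pick; this hinges crucially on the strictness of lexicographic preferences, so that favorite goods in nested remaining-good sets are uniquely determined and stable under removing a single item that was not the favorite.
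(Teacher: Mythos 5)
Your proposal is correct and follows essentially the same approach as the paper: compute the picking sequence for $F$ via GetSEQ, then insert the unallocated goods one at a time by simulating the sequence, duplicating the entry of the agent who picks $g^*$ (or appending an agent if none does). The paper leaves the invariant verification as "easy to see," whereas you spell out the position-by-position argument, which is a welcome addition but not a different route.
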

\begin{proof}

To complete the allocation \(F\), we iteratively allocate each unallocated good in \(U\) and extend a corresponding picking sequence. 

Let $M' = \cup_{i=1}^n F_i$. First, we compute a corresponding picking sequence \(\sigma=(a'_1,a'_2,\ldots,a'_{|M'|})\) on the frozen goods \(M'\subseteq M\) by Algorithm~\ref{alg:GetSEQ}.  

For each unallocated good \(g^*\notin M'\), we run the picking sequence \(\sigma\) on \(M'\cup\{g^*\}\). Consider the following two cases. %as long as the agents pick the same good as in \(A'\). 
\begin{itemize}
\item If some agent \(a'_i\) picks \(g^*\), we allocate \(g^*\) to \(a'_i\) and we extend \(\sigma\) by giving \(a'_i\) another chance to pick the good, namely, update $\sigma=(a'_1,a'_2,\ldots,a'_i,a'_i,\ldots,a'_{|M'|})$. 
\item If no agent picks \(g^*\), update \(\sigma=(\sigma,a'_1)\). 
\end{itemize} 
It is easy to see that sequencibility is maintained as an invariant while allocating each unallocated good from \(U\), to complete the allocation \(F\).
\end{proof}

\begin{proof}[Proof of Theorem~\ref{thm:PO:lexiocographic}]
Given a frozen allocation \(F\), we first check if \(F\) is sequencible in \(I^F\) using Algorithm~\ref{alg:GetSEQ}, which runs in \(\mathcal{O}(mn)\) time. If \(F\) is not sequencible, then we report that there is no \PO{} allocation that completes $F$. Otherwise, by Lemma~\ref{lem:extend:frozen}, there is a polynomial time algorithm to compute a \PO{} allocation that completes \(F\).% in time \(\mathcal{O}(mn)\).
\end{proof}

\MMSvaluelexicographic*
\begin{proof}
Fix any agent $i\in N$. Let $v_i$ denote the valuation function that is consistent with $i$'s ordinal ranking over the bundles, namely, $v_i(X) > v_i(Y)$ whenever $i$ prefers $X$ over $Y$. Such a valuation function can be constructed by creating an additive valuation function $v_i$ with $v_i(g)=2^{m-r_i(g)}$ for each good $g$ where $r_i(g)$ denotes $g$'s rank under $i$'s preference order over the goods. As noted before, the lexicographic preference extension induces a total order over the bundles in $2^M$. Therefore, the worst bundle in $i$'s maximin partition is the same % are identical across 
under all consistent cardinal valuations. 

Consider the following recursive algorithm that, given an instance $\langle N , M, \V, F \rangle$, computes an \MMS{} partition of $i\in N$. Let $g_1,g_2,\ldots,g_{\ell}$ be the unallocated goods in $U=M \setminus \cup_{i \in N}F_i$. Suppose without loss of generality that $v_i(F_1) \leq v_i(F_2) \leq \ldots \leq v_i(F_n)$ and that $v_i(g_1) > v_i(g_2) > \ldots > v_i(g_{\ell})$. 

\begin{itemize}
\item If $U=\emptyset$, then return $(F_1,F_2,\ldots,F_n)$.

\item If $U \neq \emptyset$ and $v_i(F_j) \geq v_i(g_{1})$ for every $j \in \{2, \ldots,n\}$, allocate $U$ to $F_1$ and return $(F_1 \cup U, F_2, \ldots,F_n)$.

\item  If $U \neq \emptyset$ and $v_i(F_j) < v_i(g_{1})$ for some $j \geq 2$, apply the algorithm to the restricted instance $I'=\langle N \setminus \{1\}, M \setminus (F_1 \cup \{g_{1}\}), (v_i)_{i \in N \setminus \{1\}}, (F_2,\ldots,F_n) \rangle$ and obtain an allocation $(A_2,\ldots,A_k)$. Return $(F_1 \cup \{g_{1}\}, A_2,\ldots,A_n)$. 
\end{itemize}

We prove by induction on the number of unallocated goods that the algorithm correctly finds an \MMS{} partition of agent $i \in N$. Clearly, if there is no unallocated good, i.e., $U=M \setminus \cup_{i \in N}F_i=\emptyset$, then $(F_1,F_2,\ldots,F_n)$ is an \MMS{} partition of agent $i$ and the algorithm returns such a partition. Assume that the claim holds when $|U|=t$ and consider the case when $|U|=t+1$. 

Let $(C_j)_{j \in [n]}$ be a completion such that $(F_j \cup C_j)_{j \in [n]}$ is an \MMS{} partition of agent $i$, meaning that $\min_{j \in [n]} v_i(F_j\cup C_j)=\mu_i$. Let $(A_1,A_2,\ldots,A_n)$ be a partition returned by the algorithm and ${\hat \mu}_i=\min_{j \in [n]}v_i(A_j)$. We will show that ${\hat \mu}_i =\mu_i$.

First, consider the case when $v_i(F_j) \geq v_i(g_{1})$ for every $j \geq 2$. 
In this case, $A_1=F_1 \cup U$ and $A_j=F_j$ for every $j \geq 2$.  
Since agents have lexicographic valuations and $v_i(F_j) \geq v_i(F_{1})$ for every $j \geq 2$, this means that for every $j \geq 2$, $F_j$ contains a good $f_j$ with $f_j \succ_i g_{1}$ and $f_j \succ_i g$ for every $g \in F_1$. Then, $v_i(F_j) \geq v_i(F_1 \cup U) \geq v_i(F_1 \cup C_1)$, where the second inequality holds because $C_1 \subseteq U$. 
Thus, we have
\begin{align*}
\mu_i &=\min_{j \in [n]} v_i(F_j \cup C_j)\\
      &\leq  \min \{ v_i(F_1 \cup  U), \min_{j \in \{2,\ldots,n\}} v_i(F_j) \} ={\hat \mu}_i. 
\end{align*}
Additionally, we also have that $\mu_i \geq {\hat \mu}_i$ since $(U,\emptyset,\ldots,\emptyset)$ is a completion for instance $I$. Thus, we have $\mu_i = {\hat \mu}_i$. 

Second, consider the case when there is a bundle $F_{j'}$ with $j' \geq 2$ and $v_i(F_{j'}) < v_i(g_{1})$. 
Let $(A'_2,\ldots,A'_n)$ be the partition returned by the algorithm for the instance $I'=\langle N \setminus \{1\}, M \setminus (F_1 \cup \{g_{1}\}), (v_i)_{i \in N \setminus \{1\}}, (F_2,\ldots,F_n) \rangle$. 
By the induction hypothesis, $(A'_2 \setminus F_2,\ldots,A'_n\setminus F_n)$ is a completion for $I'$ and 
\[
\max_{C^* \in \Pi^{n-1}_{U\setminus \{g_{1} \}}} \min_{j \in \{2,\ldots,n\}} v_i(F_j \cup C^*_j)=\min_{j \in \{2,\ldots,n\}} v_i(A'_j),
\]
where $\Pi^{n-1}_{U\setminus \{g_{1} \}}$ denotes the set of all ordered $(n-1)$-partitions of the set $U\setminus \{g_{1} \}$. 
Let $\mu':=\min_{j \in \{2,\ldots,n\}} v_i(A'_j)$. 
By construction of the algorithm, observe that we have ${\hat \mu}_i=\min \{ v_i(F_1 \cup \{g_{1} \}), \mu'\}$.

Now, suppose that $C_1= X \cup \{g_{1} \}$ for some $X \subseteq U$. Then since $v_i(F_{j'} \cup C_{j'}) \leq v_i(F_{j'}\cup C_{j'} \cup X)$ and $v_i(F_{j'} \cup C_{j'}) \leq v_i(F_1 \cup \{g_{1}\})$,  
\[
v_i(F_{j'} \cup C_{j'})\leq \min \{v_i(F_1 \cup \{g_{1}\}), v_i(F_{j'} \cup C_{j'} \cup X)\}. 
\]
This implies
\begin{align*}
\mu_i=&\min_{j \in [n]} v_i(F_j \cup C_j)\\
\leq 
&\min \{ v_i(F_1 \cup  \{g_{1}\}), v_i(F_{j'} \cup C_{j'} \cup X),\min_{j \neq 1,j'} v_i(F_j \cup C_j)\}\\
\leq 
& \min \{ v_i(F_1 \cup \{g_{1}), \max_{C^* \in \Pi^{n-1}_{U\setminus \{g_{1} \}}} \min_{j \in \{2,\ldots,n\}} v_i(F_j \cup C^*_j) \}\\
=
&\min \{ v_i(F_1 \cup \{g_{1}), \mu' \}={\hat \mu}_i.
\end{align*}
The inequality $\mu_i \geq {\hat \mu}_i$ holds since $(\{g_{1}\},A'_2 \setminus F_2,\ldots,A'_n\setminus F_n)$ is a completion for instance $I$. Thus, we have $\mu_i = {\hat \mu}_i$. 

Next, suppose that $g_{1} \not \in C_1$. Let $C_{j''}$ be a bundle containing $g_{1}$. 
Note that since $v_i(C_1) \leq v_i(g_{1})$ and $v_i(F_1) \leq v_i(F_{j''})$, we have $v_i(F_1 \cup C_1) \leq v_i(F_1 \cup \{g_{1}\})$ and $v_i(F_1 \cup C_1) \leq v_i(F_{j''} \cup C_1)$. 
Thus, 
\[
v_i(F_1 \cup C_1) \leq \min \{v_i(F_1 \cup \{g_{1}\}), v_i(F_{j''} \cup C_1)\}. 
\]
This implies  
\begin{align*}
\mu_i=&\min_{j \in [n]} v_i(F_j \cup C_j )\\
\leq 
& \min \{ v_i(F_1 \cup  \{g_{1}\}), v_i(F_{j''} \cup C_1),\min_{j \neq 1, j''}v_i(F_j \cup C_j)\} \\
\leq
& \min \{ v_i(F_1 \cup  \{g_{1}\}), \\ 
& \qquad v_i((F_{j''} \cup C_1 \cup C_{j''}) \setminus \{g_1\}), \min_{j \neq 1, j''}v_i(F_j \cup C_j)\} \\
\leq
& \min \{ v_i(F_1 \cup \{g_{1} \}), \max_{C^* \in \Pi^{n-1}_{U\setminus \{g_{1} \}}} \min_{j \in \{2,\ldots,n\}} v_i(F_j \cup C^*_j) \}\\
= & \min \{ v_i(F_1 \cup \{g_{1} \}), \mu' \}={\hat \mu}_i. 
\end{align*}
Similarly to the above, this implies that $\mu_i= {\hat \mu}_i$. It is immediate that the algorithm can be implemented in polynomial time. 
\end{proof}

To establish Theorem~\ref{thm:decideMMS:lexicographic}, we prove the following lemmas.

\begin{lemma}\label{lem:lexicographic:MMSvalue}
For an instance $\langle N , M, \V, F \rangle$ with lexicographic valuations and agent $i\in N$, let $g_1,g_2,\ldots,g_{\ell}$ be the unallocated goods in $U=M \setminus \cup_{i \in N}F_i$ with $v_i(g_1) > v_i(g_2) > \ldots > v_i(g_{\ell})$. For every $C \subseteq U$ satisfying $v_i(F_i \cup C) \geq \mu_i$, either $C=\emptyset$, or there exists a good $g_t \in C$ with 
\begin{itemize}
\item $v_i(F_i \cup \{ g_t \}) \geq \mu_i$, or
\item $\{g_t,g_{t+1}, \ldots, g_{\ell}\} \subseteq C$ such that $v_i(F_i \cup \{g_t, g_{t+1}, \ldots, g_{\ell}\}) =\mu_i$. 
\end{itemize}
\end{lemma}

\begin{proof}
Consider any agent $i\in N$ and suppose that $v_i(F_1) \leq v_i(F_2) \leq \ldots \leq v_i(F_n)$. 
Let $(C_1, \ldots,C_n)$ be a completion, computed by the algorithm in the proof of Proposition~\ref{prop:computeMMS:lexicographic}, such that $(F_j \cup C_j)_{j \in [n]}$ is an \MMS{} partition of agent $i$. Let $j \in [n]$ be the index of the minimum-value bundles, namely, $v_i(F_j \cup C_j) = \mu_i$. Let $C \subseteq U$ satisfying $v_i(F_i \cup C) \geq \mu_i=v_i(F_j \cup C_j)$

By construction of the algorithm, $C_j$ is either the empty set or consists of $\{g_{t'}, g_{t'+1},\ldots, g_{\ell}\}$ for some $g_{t'} \in U$. Indeed, if $C_j= \{g_t\}$ for some $t< \ell$, then this would mean that $v_i(g_t) > v_i(F_{j'})$ for some $j' > j$, implying that $v_i(F_j \cup C_j) > v_i(F_{j'} \cup C_{j'})$, contradicting the fact that $v_i(F_j \cup C_j) = \mu_i$. 

By the above observation, if $v_i(F_i)=v_i(F_j)$, then $C=C_j$ or $C$ includes a good $g_t$ with $v_i(g_t) > v_i(C_j)$; thus, the claim holds. Suppose that $v_i(F_i) \neq v_i(F_j)$. Consider first the case when $v_i(F_i) < v_i(F_j)$. This means that $\max_{g \in F_i} v_i(g) < \max_{g \in F_j} v_i(g)$. If $\max_{g \in C} v_i(g) < \max_{g \in F_j \cup C_j} v_i(g)$, then we would have $v_i(F_i \cup C) < v_i(F_j \cup C_j)$, a contradiction. Thus, $\max_{g \in C} v_i(g) \geq \max_{g \in F_j \cup C_j} v_i(g)$. 

Now, if $C_j=\emptyset$, then $C$ must contain a good $g_{t} \in U$ such that $v_i(g_t) > \max_{g \in F_j \cup C_j} v_i(g)$, meaning that $v_i(g_t) > v_i(F_j \cup C_)$ and the claim holds.
Suppose that $C_j=\{g_{t'}, g_{t'+1},\ldots, g_{\ell}\}$ for some $g_{t'} \in U$. If $C=C_j$, then $v_i(F_i \cup C) < v_i(F_j \cup C_j)$, a contradiction. Thus, $C \neq C_j$. Thus, $C$ must contain a good $g_{t} \in U$ with $t<t'$ such that $v_i(g_{t}) > \max_{g \in F_j \cup C_j} v_i(g)$, meaning that $v_i(g_t) > v_i(F_j \cup C_j)$ and the claim holds.

Next, consider the case when $v_i(F_i) > v_i(F_j)$. If $C_j=\emptyset$, then any set $S \subseteq U$ satisfies $v_i(F_i \cup S) \geq \mu_i$. Suppose that $C_j=\{g_{t'}, g_{t'+1},\ldots, g_{\ell}\}$ for some $g_{t'} \in U$. Then if $v_i(F_i) > v_i(g_{t'})$, then again any set $S \subseteq U$ satisfies $v_i(F_i \cup S) \geq \mu_i$. If $v_i(F_i) < v_i(g_{t'})$, then by construction of the algorithm this means that $t'=\ell$ and $C$ must contain a good $g_t$ with $t\leq t'$ and thus the claim holds.
\end{proof}

The lemma above establishes that to ensure \MMS{} for each agent, it is enough to allocate either the empty bundle, a single good, or a segment of their bottom-preferred unallocated goods. We then demonstrate that if an \MMS{} allocation exists, there is one with at most one agent of the last type.

\begin{lemma}\label{lem:MMS:lexicographic:special}
For lexicographic valuations, suppose that there are $\ell$ unallocated goods and each agent $i \in N$ has a preference ordering $g^i_1 \succ g^i_2 \succ \ldots \succ g^i_{\ell}$ over the unallocated goods. 
Then, there exists an \MMS{} allocation if and only if there exists a sub-partition $(C_1,C_2,\ldots,C_n)$ of the unallocated goods such that 
\begin{itemize}
\item $(F_1 \cup C_1,\ldots,F_n \cup C_n)$ is an \MMS{} allocation,
\item $|\{\, i\in N \mid |C_i|\geq 2 \,\}| \leq 1$, and
\item if $|C_i|\geq 2$, then there exists an unallocated good $g^i_t$ such that $C_i=\{g^i_{t},g^i_{t+1},\ldots,g^i_{\ell}\}$ and $v_i(F_i \cup C_i)=\mu_i$.
\end{itemize}
\end{lemma}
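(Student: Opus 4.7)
The $(\Leftarrow)$ direction is immediate. Given a sub-partition $(C_1, \ldots, C_n)$ satisfying the stated properties, the partial allocation $(F_i \cup C_i)_{i \in N}$ already yields $v_i(F_i \cup C_i) \geq \mu_i$ for every $i$. Leftover goods in $U$ can be assigned arbitrarily without decreasing any bundle's value, since lexicographic valuations are monotone.

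For the $(\Rightarrow)$ direction, fix any MMS allocation $(F_i \cup D_i)_{i \in N}$. By applying \Cref{lem:lexicographic:MMSvalue} to each agent $i$, the bundle $D_i$ can be shrunk to a minimal MMS-satisfying subset $C_i \subseteq D_i$ taking one of three forms: (I) $C_i = \emptyset$ (when $v_i(F_i) \geq \mu_i$); (II) $C_i = \{g\}$ for some good $g \in D_i$ with $v_i(F_i \cup \{g\}) \geq \mu_i$; or (III) $C_i = \{g^i_t, \ldots, g^i_\ell\}$ is a suffix of $i$'s preference ordering with $v_i(F_i \cup C_i) = \mu_i$. Since $C_i \subseteq D_i$, the collection $(C_1, \ldots, C_n)$ remains a sub-partition, and each agent satisfies the MMS threshold by construction.

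The crux is to argue that at most one agent ends up in case (III) with $|C_i| \geq 2$. Using the explicit computation of $\mu_i$ from \Cref{prop:computeMMS:lexicographic}, an agent $i$ who cannot be satisfied by any single unallocated good must have $F_i$ as its own weakest frozen bundle, which forces $\mu_i = v_i(F_i \cup U)$. Under lexicographic preferences, this in turn forces $C_i = U$: any proper subset of $U$ added to $F_i$ would yield strictly less value. Since distinct agents' bundles are disjoint, at most one such ``big'' agent can coexist; moreover, when such an $i^*$ exists, every other agent $j$ must have $C_j = \emptyset$, which is consistent because the original MMS allocation itself forces $D_j = \emptyset$ for $j \neq i^*$, and therefore $v_j(F_j) \geq \mu_j$.

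The main obstacle is handling the case where the given allocation $(D_1, \ldots, D_n)$ assigns suffix-shaped bundles to several agents who could individually be satisfied by a single qualifying good (``locally'' in case~(III) but not ``globally''). In that situation, one cannot simply reduce each $D_i$ in place. To resolve this, I would first rearrange the allocation before shrinking: for each such agent $i$, the set $Q_i = \{g \in U : v_i(F_i \cup \{g\}) \geq \mu_i\}$ is a prefix of $i$'s preference ordering over $U$, so a Hall-type matching argument applied to the bipartite graph between these agents and their qualifying prefixes produces an assignment in which every non-globally-big agent holds a single qualifying good. The swaps needed to realize this matching preserve MMS because lexicographic valuations are governed by the top good of each bundle, which is unaffected by redistributing lower-ranked goods. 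After this rearrangement, the shrinking procedure leaves at most one big agent, yielding the sub-partition required by the lemma.
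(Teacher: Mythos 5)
Your skeleton (shrink each bundle via \Cref{lem:lexicographic:MMSvalue} to an empty set, a singleton, or a suffix, then argue at most one agent keeps a multi-good suffix) matches the paper's, and your observation that a ``globally big'' agent must receive all of $U$ is correct. The gap is in the final step. You claim a Hall-type matching lets \emph{every} agent who is satisfiable by some single good actually receive a single qualifying good, but Hall's condition need not hold and the claim is false in general. Concretely, take two agents $i,j$ whose only qualifying singleton is the same good $g$ (each has $Q_i=Q_j=\{g\}$, with $g$ the top unallocated good of both, and $v_i(F_i\cup\{g^i_2\})<\mu_i$, $v_j(F_j\cup\{g^j_2\})<\mu_j$). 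An \MMS{} allocation can still exist --- give $g$ to $i$ and the suffix $U\setminus\{g\}$ to $j$ --- but no rearrangement gives both a singleton, so your matching does not exist while the lemma's conclusion (at most one agent with $|C_i|\geq 2$) does hold. What you actually need is that the deficiency of this bipartite graph is at most one, and you give no argument for that; asserting that ``the swaps preserve \MMS{}'' also does not address which swaps are available when the matching fails.

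The paper closes exactly this hole with an extremal argument rather than a matching: among all \MMS{} completions, take one minimizing $|\{i : |C_i|\geq 2\}|$; after the singleton-shrinking step, any remaining agent with $|C_i|\geq 2$ holds a suffix of its own ordering over $U$, so if two such agents $i,j$ existed, the top good $g^j_{t_j}$ of $j$'s suffix lies outside $C_i$ and is therefore preferred by $i$ to everything in $C_i$ (and symmetrically for $j$), whence swapping these two top goods and discarding the rest yields an \MMS{} completion with strictly fewer multi-good bundles --- a contradiction. If you want to keep your constructive framing, you would need to prove the deficiency-one statement, which essentially reduces to this same pairwise exchange; as written, the proposal has a genuine unproved step.
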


\begin{proof}
If there exists $(C_1,C_2,\ldots,C_n)$ satisfying the above properties, then clearly, there exists an \MMS{} allocation. Thus, suppose that there exists a sub-partition $(C_1,C_2,\ldots,C_n)$ of the unallocated goods such that $(F_1 \cup C_1,\ldots,F_n \cup C_n)$ is an \MMS{} allocation. Among such completions, we choose the one that minimizes the number of agents $i$ with $|C_i|\geq 2$. Now, if there exists some agent $i \in N$ such that $|C_i| \geq 2$ and $C_i$ contains a good $g$ with $v_i(g) \geq \mu_i$, then we can remove the goods in $C_i$ other than $g$ and decrease the number of agents $i$ with $|C_i|\geq 2$ while still satisfying \MMS{}.
Thus, for every agent $i \in N$ such that $|C_i| \geq 2$, $C_i$ does not contain a good $g$ with $v_i(g) \geq \mu_i$, meaning that by Lemma~\ref{lem:lexicographic:MMSvalue}, $C_i=\{g^i_{t_i},\ldots,g^i_{\ell}\}$ for some $t_i \leq n$. Suppose that there exists a pair of distinct agents $i,j \in N$ with $|C_i| \geq 2$ and $|C_j| \geq 2$. Then, notice that agent $i$ prefers $g^j_{t_j}$ over any goods in $C_i$; likewise, agent $j$ prefers $g^i_{t_i}$ over any goods in $C_j$. Thus, assigning $g^j_{t_j}$ to agent $i$ and $g^i_{t_i}$ to agent $j$ and removing the remaining goods in $C_i$ and $C_j$ still maintains the \MMS{} property, contradicting the minimality of $|\{\, i\in N \mid |C_i|\geq 2 \,\}|$. Hence, we have that $|\{\, i\in N \mid |C_i|\geq 2 \,\}| \leq 1$.   
\end{proof}

We will now prove~\Cref{thm:decideMMS:lexicographic}.

\begin{proof}[Proof of~\Cref{thm:decideMMS:lexicographic}]
Suppose that there are $\ell$ unallocated goods and each agent $i \in N$ has a preference ordering $g^i_1 \succ g^i_2 \succ \ldots \succ g^i_{\ell}$ over the unallocated goods.
By Proposition~\ref{prop:computeMMS:lexicographic}, we can in polynomial time compute the maximin share value $\mu_i$ for each agent $i \in N$. Without loss of generality, assume that $\mu_i > v_i(F_i)$ for each agent $i \in N$; if there is an agent $i \in N$ with $\mu_i \leq v_i(F_i)$, we can remove such an agent from consideration. 
By Lemma~\ref{lem:lexicographic:MMSvalue}, for each agent $i \in N$, it suffices to allocate either a single good, or a segment of their bottom-preferred unallocated goods. Further, by Lemma~\ref{lem:MMS:lexicographic:special}, our problem can be reduced to deciding the existence of an \MMS{} allocation where at most one agent gets multiple goods. 

Formally, for each agent $i\in N$, let $S_i$ be the set of goods $g \in U$ such that $v_i(F_i \cup \{g\}) \geq \mu_i$. Moreover, if there exists a good $g^i_t$ such that $v_i(F_i \cup \{g^i_t, g^i_{t+1}, \ldots, g^i_{\ell}\}) =\mu_i$, we set $M_i=\{g^i_t, g^i_{t+1}, \ldots, g^i_{\ell}\}$, and otherwise, we set $M_i=\emptyset$. Note that by Lemma~\ref{lem:lexicographic:MMSvalue} and by the assumption that $\mu_i > v_i(F_i)$ for each agent $i \in N$, $S_i \neq \emptyset$ or $M_i \neq \emptyset$ for each agent $i \in N$. 

Now we first determine the existence of an \MMS{} allocation where each agent obtains a single good in $S_i$. To this end, create a bipartite graph with bipartition $(N,M)$ and there is an edge between an agent $i \in N$ and a good $g \in M$ if and only if $g \in S_i$. We can in polynomial time find out whether there exists a matching covering the set of agents $N$. If there exists such a matching, then return an allocation that assigns to each agent $i \in N$ the single good allocated to agent $i$ in the matching as well as $F_i$. 

Otherwise, we proceed to checking whether for each agent $i \in N$ with $M_i \neq \emptyset$, there exists a sub-partition of the unallocated goods such that agent $i$ obtains $M_i$ and each of the remaining agent obtains a single good in $S_i$. Again, to do so, for each agent $i \in N$ with $M_i \neq \emptyset$, we create a bipartite graph $(N \setminus \{i\},M)$ where there is an edge between an agent $i' \in N \setminus \{i\}$ and a good $g \in M$ if and only if $g \in S_{i'}$, and decide the existence of a matching covering $N \setminus \{i\}$. This can be done in polynomial time. If there exists such a matching, then return an allocation that assigns to agent $i$ the bundle $F_i \cup M_i$ and to each agent $i' \in N \setminus \{i\}$ the single good allocated to agent $i'$ in the matching as well as $F_{i'}$. 

It can be easily verified that if the algorithm returns an allocation, the resulting allocation is \MMS{} by construction. 
Conversely, if there exists an \MMS{} allocation, by Lemma~\ref{lem:MMS:lexicographic:special}, there exists a sub-partition of the unallocated goods $(C_1,C_2,\ldots,C_n)$ satisfying the three properties of Lemma~\ref{lem:MMS:lexicographic:special}. Thus, the algorithm finds out at least one \MMS{} allocation. 
\end{proof}

\section{Omitted Material from Section~\ref{sec:additive}}\label{appendix:additive}

\EFoneTwoAgentAdditiveNPhard*
\begin{proof}
\noindent
{\bf \EF{1}}: Whether a given allocation correctly completes a frozen allocation and satisfies \EF{1} can be checked in polynomial time. Thus, the problem is clearly in \NP{}. 

To show NP-hardness, we present a reduction from \Partition{}. Recall that given positive integers $w_1,w_2,\ldots, w_m$ such that $\sum_{i=1}^{m} w_i = 2T$, \Partition{} asks whether there is a set $S \subseteq [m]$ such that $\sum_{i \in S} w_i = \sum_{i \in [m] \setminus S} w_i = T$. Without loss of generality, we can assume that for each $i \in [m]$, $w_i \leq T$.

Given an instance of \Partition{}, we create an instance of the completion problem as follows. 
There are two agents, $1$ and $2$. 
There are two frozen goods $f_1$ and $f_2$ where agent $1$ receives $f_1$ and agent $2$ receives $f_2$. Agent $1$ has value $0$ for $f_1$ and $T$ for $f_2$. Agent $2$ has value $0$ for $f_2$ and $T$ for $f_1$. 
For each $k \in [m]$, we create a good $k$ that each agent values at $w_k$. Denote the frozen allocation by $(F_1,F_2)$. 

We show that there exists an \EF{1} allocation that completes the frozen allocation if and only if the original \Partition{} problem has a yes-instance, i.e., a partition of the unallocated goods into two bundles of value $T$. First, suppose that there exists an \EF{1} allocation $A = (F_i \cup C_i)_{i \in \{1,2\}}$ that completes the frozen allocation, where $(C_1,C_2)$ is a completion. 
Then, the frozen good $f_1$ is the most valuable good among the goods allocated to agent $1$ from agent $2$'s perspective since every good has a value at most $T$ according to $2$'s valuation. Therefore, $v_2(A_2)=v_2(C_2) \geq v_2(C_1) =v_2(A_2) - T$. Similarly, we have $v_1(C_1) \geq v_1(C_2)$. This means that $\sum_{j  \in C_1}w_j = \sum_{j  \in C_2} w_j=T$. Thus, $(C_1,C_2)$ is a desired partition. 

Conversely, suppose that \Partition{} problem has a yes-instance, namely, there exists a partition $(S,[m] \setminus S)$ of $[m]$ such that $\sum_{i \in S} w_i = \sum_{i \in [m] \setminus S} w_i = T$. Complete the frozen allocation $F$ by assigning the goods in $S$ to agent $1$ and the goods in $[m] \setminus S$ to agent $2$. It can be easily checked that the resulting allocation is \EF{1}. 

\noindent
{\bf \Prop{1}}: 
Checking whether a given allocation correctly completes a given frozen allocation and whether it satisfies \Prop{1} can be done in polynomial time. Thus, the problem is clearly in \NP{}. To show NP-hardness, we present a reduction from \textsc{PARTITION}.

Given an instance of \textsc{PARTITION}, a set of positive integers $\{w_1,w_2,\cdots,w_m\}$ such that $\sum_{i\in [m]} w_i = 2T$, we create an instance of the completion problem as follows. 
There are two agents $1$ and $2$. 
There are four frozen goods $f_1, f_2, f_3$, $f_4$ where agent $1$ receives $f_1,f_2$ and agent $2$ receives $f_3,f_4$. Agent $1$ has value $0$ for each of $f_1,f_2$ and $T$ for each of $f_3,f_4$. Agent $2$ has value $0$ for each of $f_3,f_4$ and $T$ for each of $f_1,f_2$. 
For each $k \in [m]$, we create a good $k$ that each agent values at $w_k$. Denote by $(F_i)_{i \in \{1,2\}}$ the frozen allocation. The proportionality threshold for each agent is $\frac{2T+2T}{2} = 2T$. 

We show that there exists an \Prop{1} allocation that completes the frozen allocation if and only if the original \textsc{PARTITION} problem has a yes-instance, i.e., a partition of the unallocated goods into two bundles of value $T$. First, suppose that there exists an \Prop{1} allocation $A = (F_i \cup C_i)_{i =1,2}$ that completes the frozen allocation, where $(C_i)_{i=1,2}$ is a completion. 
Then, the frozen good $f_1$ is the most valuable good among the goods allocated to agent $1$ from agent $2$'s perspective since every good has value at most $T$ according to $2$'s valuation. Therefore, $v_2(A_2)=v_2(C_2) \geq T$. Similarly, we have $v_1(C_1) \geq T$. This means that $\sum_{j  \in C_1}w_j = \sum_{j  \in C_2} w_j=T$. Thus, $(C_1,C_2)$ is a desired partition. 

Conversely, suppose that \textsc{PARTITION} problem has a yes-instance, namely, there exists a partition $(S,[m] \setminus S)$ of $[m]$ such that $\sum_{i \in S} w_i = \sum_{i \in [m] \setminus S} w_i = T$. Complete the frozen allocation $F$ by assigning the goods in $S$ to agent $1$ and the goods in $[m] \setminus S$ to agent $2$. It can be easily checked that the resulting allocation is \Prop{1}. 
\end{proof}

\EFoneThreeAgentIdenticalAdditiveNPhard*
\begin{proof}
\noindent
{\bf \EF{1}}: The problem is clearly in \NP{}. To show NP-hardness, we present a reduction from \Partition{}. 
Given an instance of \Partition{}, a set of positive integers $\{w_1,w_2,\cdots,w_m\}$ such that $\sum_{i\in [m]} w_i = 2T$, we create an instance of the completion problem %with three identical agents 
as follows. There are three agents $1,2,3$ with the identical valuation $v$. There are two frozen goods $f_1$ and $f_2$ of value $T$ allocated to agent $1$. For each $k \in [m]$, we create a good $k$ with the value $w_k$. 

We show that there exists an \EF{1} allocation that completes the frozen allocation if and only if the original \Partition{} problem has a yes-instance. Suppose there exists an \EF{1} allocation $A = (F_i \cup C_i)_{i \in \{a,b\}}$ that completes the frozen allocation, where $(C_i)_{i \in \{a,b\}}$ is a completion. 
Then, in order for agent $2$ not to have \EF{1}-envy towards agent $1$, agent $2$ needs to receive a bundle of value at least $T$, i.e., $v(A_2)=v(C_2) \geq T$. Similarly, agent $3$ must receive a bundle of value at least $T$ in the completion, i.e., $v(C_3) \geq T$. This means that $\sum_{j  \in C_2}w_j = \sum_{j  \in C_3} w_j=T$. Thus, $(C_2,C_3)$ is a desired partition. 

Conversely, suppose that \Partition{} problem has a yes-instance, namely, there exists a partition $(S,[m] \setminus S)$ of $[m]$ such that $\sum_{i \in S} w_i = \sum_{i \in [m] \setminus S} w_i = T$. Complete the frozen allocation $F$ by assigning the goods in $S$ to agent $2$ and the goods in $[m] \setminus S$ to agent $3$. It can be easily checked that the resulting allocation is \EF{1}.

\noindent
{\bf \Prop{1}}: 
Checking whether a given allocation correctly completes a given frozen allocation and whether it satisfies \Prop{1} can be done in polynomial time. Thus, the problem is clearly in \NP{}. To show NP-hardness, we present a reduction from \textsc{PARTITION}. 

We show a reduction from \textsc{PARTITION}. Given an instance of \textsc{PARTITION}, a set of positive integers $\{w_1,w_2,\cdots,w_m\}$ such that $\sum_{i\in [m]} w_i = 2T$, we create an instance of our problem as follows. 
There are three agents $1,2, 3$ with identical valuation $v$. 
Let $\alpha \ge \max_{j \in [m]}w_j$. Create two frozen goods $f$ and $g$ of value $\alpha$ where $f$ is allocated to agent $2$ and $g$ is allocated to agent $3$. Let $\epsilon \leq \min_{j \in [m]}w_j$ with $ \ell \cdot \epsilon = T+4\alpha$ for some positive integer $\ell$ and create $\ell$ frozen goods of value $\epsilon$ allocated to agent $1$. Denote by $(F_i)_{i=1,2}$ the frozen allocation. 
For each $k \in [m]$, we create an unallocated good $k$ of value $w_k$. 
The proportionality threshold for each agent is $\frac{3T+4\alpha + 2\alpha}{3} = T+2\alpha$. See Figure~\ref{fig:reduction} for an illustration of our frozen allocation.

\tikzset{every picture/.style={line width=0.75pt}} %set default line width to 0.75pt        

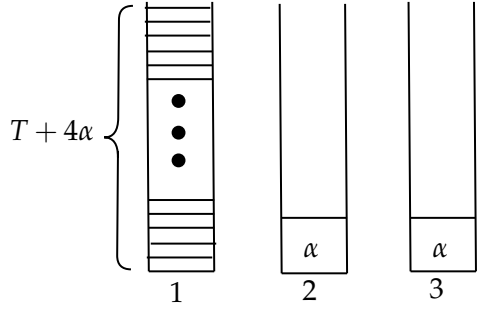
\begin{figure}
    \centering
    
\begin{tikzpicture}[x=0.75pt,y=0.75pt,yscale=-1,xscale=1]
\draw    (246,5) -- (247,141) ;
\draw    (247,141) -- (280,141) ;
\draw    (279,5) -- (280,141) ;
\draw    (313,6) -- (314,142) ;
\draw    (314,142) -- (347,142) ;
\draw    (346,6) -- (347,142) ;
\draw    (378,5) -- (379,141) ;
\draw    (379,141) -- (412,141) ;
\draw    (411,5) -- (412,141) ;
\draw    (314,114) -- (347,114) ;
\draw    (379,114) -- (412,114) ;
\draw    (246,134) -- (279,134) ;
\draw    (248,127) -- (281,127) ;
\draw    (247,119) -- (280,119) ;
\draw    (247,112) -- (280,112) ;
\draw    (247,105) -- (280,105) ;
\draw    (246,44) -- (279,44) ;
\draw    (246,37) -- (279,37) ;
\draw    (246,30) -- (279,30) ;
\draw    (245,22) -- (278,22) ;
\draw    (245,15) -- (278,15) ;
\draw    (245,8) -- (278,8) ;
\draw  [fill={rgb, 255:red, 0; green, 0; blue, 0 }  ,fill opacity=1 ] (259,55) .. controls (259,53.34) and (260.34,52) .. (262,52) .. controls (263.66,52) and (265,53.34) .. (265,55) .. controls (265,56.66) and (263.66,58) .. (262,58) .. controls (260.34,58) and (259,56.66) .. (259,55) -- cycle ;
\draw  [fill={rgb, 255:red, 0; green, 0; blue, 0 }  ,fill opacity=1 ] (259,71) .. controls (259,69.34) and (260.34,68) .. (262,68) .. controls (263.66,68) and (265,69.34) .. (265,71) .. controls (265,72.66) and (263.66,74) .. (262,74) .. controls (260.34,74) and (259,72.66) .. (259,71) -- cycle ;
\draw  [fill={rgb, 255:red, 0; green, 0; blue, 0 }  ,fill opacity=1 ] (259,85) .. controls (259,83.34) and (260.34,82) .. (262,82) .. controls (263.66,82) and (265,83.34) .. (265,85) .. controls (265,86.66) and (263.66,88) .. (262,88) .. controls (260.34,88) and (259,86.66) .. (259,85) -- cycle ;
\draw   (239,7) .. controls (234.33,6.97) and (231.98,9.28) .. (231.95,13.95) -- (231.58,63.45) .. controls (231.53,70.12) and (229.17,73.43) .. (224.5,73.39) .. controls (229.17,73.43) and (231.48,76.78) .. (231.43,83.45)(231.45,80.45) -- (231.05,132.95) .. controls (231.02,137.62) and (233.33,139.97) .. (238,140) ;

\draw (323.5,125) node [anchor=north west][inner sep=0.75pt]   [align=left] {$\displaystyle \alpha $};
\draw (388.5,125) node [anchor=north west][inner sep=0.75pt]   [align=left] {$\displaystyle \alpha $};
\draw (175,63) node [anchor=north west][inner sep=0.75pt]   [align=left] {$\displaystyle T+4\alpha $};
\draw (256,145) node [anchor=north west][inner sep=0.75pt]   [align=left] {$\displaystyle {1}$};
\draw (323,144) node [anchor=north west][inner sep=0.75pt]   [align=left] {$\displaystyle {2}$};
\draw (387,143) node [anchor=north west][inner sep=0.75pt]   [align=left] {$\displaystyle {3}$};

\end{tikzpicture}
\caption{Three agents with fixed allocation $T+7\alpha, \alpha \text{ and } \alpha$}
    \label{fig:reduction}
\end{figure}

First, suppose that there exists a \Prop{1} allocation $A = (F_i \cup C_i)_{i =a,b}$ that completes the frozen allocation $F$, where $(C_i)_{i=a,b}$ is a completion. 
Observe that the maximum value of a good outside agent $2$'s bundle is $\alpha$, namely, $\max_{g \in M \setminus A_2}v(g)=\alpha$. Thus, in order to meet \Prop{1} for agent $2$, agent $2$ needs to receive a bundle of value at least $T$ in the completion because $v(A_2)=v(C_2) + \alpha \geq T + \alpha$, meaning that $v(C_2) \geq T$. Similarly, agent $3$ must receive a bundle of value at least $T$ in the completion, i.e., $v(C_3) \geq T$. This means that $\sum_{j  \in C_2}w_j = \sum_{j  \in C_3} w_j=T$. Thus, $(C_2,C_3)$ is a desired partition.

Conversely, suppose that there exists a partition $(S,[m] \setminus S)$ of $[m]$ such that $\sum_{i \in S} w_i = \sum_{i \in [m] \setminus S} w_i = T$. Then it is easy to see that allocating bundle $S$ to agent $2$ and $[m] \setminus S$ to agent $3$ yields a \Prop{1} allocation.~\qedhere
\end{proof}

\EFonetwo*
\begin{proof}
\noindent
{\bf \EF{1}}: Suppose there are two agents, $1$ and $2$, with identical valuation $v$. 
When agents have identical valuations, the envy graph is always acyclic. If the frozen allocation $F$ is \EF{1}, we can compute an \EF{1} allocation that extends $F$ using Proposition~\ref{thm:ef1_acyclic}. Consider the case when $F$ is not \EF{1}. Without loss of generality, suppose agent $1$ has \EF{1}-envy towards agent $2$, i.e., $F_2$ is not empty and $v(F_1) < v(F_2 \setminus \{g\})$ for every $g \in F_2$. Allocate goods in $U=M \setminus \cup_{i=1,2}F_i$ to agent $1$ till agent $1$ stops having \EF{1}-envy towards agent $2$, i.e., $v(A_1) \geq \min_{g \in A_2} v(A_2 \setminus \{g\})$. 
If we exhaust all the goods while agent $1$ has \EF{1}-envy towards agent $2$, then there is no \EF{1} complete allocation that completes $F$. If agent $1$ stops having \EF{1}-envy towards agent $2$ at any point, either agent $2$ does not envy agent $1$ or the new envy from agent $2$ to agent $1$ can be eliminated by removing the last good allocated to agent $1$. This allocation can be completed using Proposition~\ref{thm:ef1_acyclic}.\\   

\noindent
{\bf \Prop{1}}: Suppose there are two agents, $1$ and $2$, with identical valuation $v$. Recall that for identical valuations, the envy graph with respect to the frozen allocation $F$ is acyclic. 

Consider the case when the frozen allocation $F$ is \Prop{1}. Since the valuations are identical, there cannot be an envy cycle. Therefore,  without loss of generality, we assume that 
agent $2$ does not envy agent $1$ under the frozen allocation. To assign the unallocated goods $U=M\setminus \cup_{i \in N}F_i$, we let each agent $i \in \{1,2\}$ pick its favorite remaining good in $U$ in a round robin fashion, starting with agent \(1\). We claim that the resulting complete allocation $A=(F_i \cup C_i)_{i \in N}$ is \Prop{1}. 

Agent~$1$ does not envy $2$ under the completion $C$, i.e., $v(C_1) \geq v(C_2)$, and the frozen allocation $F$ satisfies \Prop{1} for agent $1$, i.e., $v(F_1) + \max_{j \in F_2} v(\{j\}) \geq \frac{1}{2}(v(F_1)+v(F_2))$. 
Thus, we have that 
\begin{align*}
    v(C_1 \cup F_1)  & + \max_{j \in F_2} v(\{j\}) \\
    & = v(C_1) + v(F_1) + \max_{j \in F_2} v(\{j\}) \\
    & \geq  \frac{1}{2} \left( v(C_1)+v(C_2)+v(F_1)+v(F_2) \right),
\end{align*}
which implies that $A$ is \Prop{1} for agent $1$. 
Now, consider agent $2$. We know that agent $2$ does not envy agent $1$ under the frozen allocation, and it can be observed that agent $2$'s envy toward agent $1$ is bounded up to one good in the completion $C$. Thus, the final allocation $A$ is \EF{1} and hence \Prop{1} for agent $2$.

Now consider the case when the frozen allocation $F$ is not \Prop{1}. Without loss of generality, suppose \Prop{1} is violated from agent $1$'s perspective. (Note that due to identical valuations, \Prop{1} violation can occur for at most one of the two agents.) That is, $v(F_1) + \max_{j \in F_2} v(\{j\}) < \frac{1}{2}v(F_1 \cup F_2)$.  Consider assigning the unallocated goods in $U=M \setminus \cup_{i=1,2}F_i$ to agent $1$ until the partial allocation is \Prop{1} for agent $1$. If we exhaust all the goods without achieving \Prop{1} for agent $1$, then there is no \Prop{1} complete allocation that completes $F$. Otherwise, if the partial allocation becomes \Prop{1} for agent $1$ at some stage, we can treat the current partial allocation as the new `frozen' allocation, and complete it using the procedure described in the previous case. The final allocation must again be \Prop{1}.
\end{proof}

\begin{restatable}{theorem}{EFonestronglyNPC}\label{thm:ef1_strongly_NPC}
For a general number of agents with identical additive valuations, \EFoneCompletion{} and \ProponeCompletion{} are strongly \NPC{}.
\label{thm:EFone_strongly_NPC}
\end{restatable}
\begin{proof}
    {\bf \EF{1}}: The problem is clearly in \NP{}. To show strong \NP{}-hardness, we present a reduction from \textsc{3-Partition}. An instance of \textsc{3-Partition} is given by a set of $3n$ numbers $w_1,w_2,\dots,w_{3r}$ where $r \in \mathbb{N}$. Let $T$ be such that $\sum_{i \in [3r]} w_i = rT$. The goal is to determine if there exists a partition of $w_1,w_2,\dots,w_{3r}$ into $r$ triples such that the sum of numbers in each triple is $T$. This problem is known to be \NPC{} even when, for every~$i \in [3r]$, we have $w_i \approx T/3$. Specifically, there is a fixed constant $\delta > 0$ such that for every~$i \in [3r]$, we have $w_i \in [T/3 - \delta, T/3 + \delta]$, see \cite{strong3partition}.

    Given any instance of \textsc{3-Partition}, we will create an instance of the completion problem as follows. There are $r+1$ agents, namely $1,2,\dots,r+1$, with the same valuation function $v$. There are two frozen goods $f_1$ and $f_2$, each valued at $T$ by all agents, that are allocated to agent $r+1$. For every $k \in [3r]$, we create a good $g_k$ with the value $w_k$ for all agents.

    We will now argue that the given \textsc{3-Partition} instance admits the desired partition into triples if and only if an \EF{1} completion exists in the fair division instance. For the proof of forward direction, let's assume that there exists a partition into triples $S_1,S_2,\dots,S_r$ such that the sum of numbers in each triple is $T$. For each triple $S_i$, we will assign the goods corresponding to the numbers in $S_i$ to agent $i$. Thus, for every $i \in [r]$, agent $i$'s value for its bundle is $T$. Additionally, for any $k \neq r+1$, agent $i$ values agent $k$'s bundle is also $T$, while its value for agent $(r+1)$'s bundle after removing the frozen good $f_1$ is $T$. It is easy to see that agent $(r+1)$ does not envy any other agent. Thus, the completion is \EF{1}.

    In the reverse direction, suppose an \EF{1} completion exists. Let us denote the final allocation by $A=(F_i \cup C_i)_{i \in [r+1]}$. Since $A$ is \EF{1} and the valuations are identical, each agent in $\{1,\dots,r\}$ must receive a value of at least $T$, as otherwise such agents would envy agent \(r+1\) even after removal of any good. The total value of the unfrozen goods is $rT$, which means that all unfrozen goods must be allocated to the agents in $\{1,\dots,r\}$. Furthermore, each agent in $\{1,\dots,r\}$ must receive a value of exactly $T$. To obtain a value of $T$, each agent must receive at least three goods (recall that each good's value is nearly $T/3$). This assignment of goods naturally induces a valid partition into triples of the integers $w_1,\dots,w_{3r}$.\\

    {\bf \Prop{1}}: To show strong \NP{}-hardness of \ProponeCompletion{}, we will again present a reduction from \textsc{3-Partition}. We will assume that the numbers $w_1,\dots,w_{3r}$ in \textsc{3-Partition} are such that for every $i \in [3r]$, we have $w_i > > 1$; this property can be ensured by appropriate scaling.

    Given an instance $w_1,\dots,w_{3r}$ of \textsc{3-Partition}, we will create a fair division instance with $r+1$ agents, namely $1,2,\dots,r+1$, with identical valuations. There are $1 + (r+1)T$ frozen goods, all of them allocated to agent \(r+1\), consisting of one good of value $T$ and $(r+1)T$ other goods each of value \(1\). Additionally, there are $3r$ unfrozen goods corresponding to the $3r$ integers $w_1,\dots,w_{3r}$. The $i^\textup{th}$ such good is valued at $w_i$.

    Let us first present the argument for the forward direction. Note that the proportionality threshold in the fair division instance is $\theta \coloneqq \frac{rT + T + (r+1)T}{r+1} = 2T$. Given a desired partition into triples $S_1,\dots,S_r$ in the \textsc{3-Partition} instance, we create a \Prop{1} allocation as follows: Assign the goods corresponding to the numbers in $S_i$ to agent $i$. This gives each agent in $\{1,\dots,r\}$ a utility of $T$. Each such agent can reach the proportionality threshold $\theta = 2T$ by hypothetically including the frozen good valued at $T$ owned by agent $(r+1)$. Since agent $(r+1)$ already satisfies proportionality, the resulting allocation is \Prop{1}.

    In the reverse direction, suppose a \Prop{1} allocation exists. Let us denote this allocation by $A=(F_i \cup C_i)_{i \in [r+1]}$. We claim that each agent in $\{1,\dots,r\}$ must derive a utility of at least $T$ under $A$. This is because the proportionality threshold is $\theta = 2T$, and the largest good available to any agent in $\{1,\dots,r\}$ for hypothetical addition under \Prop{1} is of value $T$ (this is the frozen good owned by agent $r+1$). By a similar argument as in the \EF{1} reduction above, it can be observed that any such allocation induces the desired partition into triples in the given instance of \textsc{3-Partition}.    
\end{proof}

\end{document}